\documentclass[equations,11pt]{article}
\usepackage{amsfonts}
\usepackage{a4,tikz}
\usepackage{geometry}
\usepackage{color} 
\usepackage{subfigure}
\makeatletter
\newcommand\ackname{Acknowledgements}
\if@titlepage
  \newenvironment{acknowledgements}{%
      \titlepage
      \null\vfil
      \@beginparpenalty\@lowpenalty
      \begin{center}%
        \bfseries \ackname
        \@endparpenalty\@M1
      \end{center}}%
     {\par\vfil\null\endtitlepage}
\else
  \newenvironment{acknowledgements}{%
      \if@twocolumn
        \section*{\abstractname}%
      \else
        \small
        \begin{center}%
          {\bfseries \ackname\vspace{-.5em}\vspace{\z@}}%
        \end{center}%
        \quotation
      \fi}
      {\if@twocolumn\else\endquotation\fi}
\fi
\makeatother

\usepackage{amsmath}
\usepackage{amssymb}
\usepackage{amsthm}
\usepackage{mathrsfs}
\usepackage{eucal}
 \usepackage[usenames,dvipsnames]{pstricks}
 \usepackage{epsfig}
 \usepackage{pst-grad} 
 \usepackage{pst-plot} 
\renewcommand{\theequation}{\arabic{equation}}
\setcounter{MaxMatrixCols}{10}
\usepackage{amssymb,amsmath}
\usepackage{color}
\usepackage{accents}
\usepackage{texdraw}
\usepackage{eucal}
\usepackage{epic,epsfig}
\usepackage{graphicx}
\newtheorem{theorem}{Theorem}[section]

\newtheorem{prop}[theorem]{Proposition}
\theoremstyle{definition}

\numberwithin{equation}{section}
\DeclareMathAccent{\wtilde}{\mathord}{largesymbols}{"65}
\DeclareMathAccent{\what}{\mathord}{largesymbols}{"62}

\def\m@th{\mathsurround=0pt}
\mathchardef\bracell="0365
\def\upbrall{$\m@th\bracell$}
\def\undertilde#1{\mathop{\vtop{\ialign{##\crcr
    $\hfil\displaystyle{#1}\hfil$\crcr
     \noalign
     {\kern1.5pt\nointerlineskip}
     \upbrall\crcr\noalign{\kern1pt
   }}}}\limits}
\def\m@th{\mathsurround=0pt}
\mathchardef\bracell="0365
\def\upbrall{$\m@th\bracell$}
\def\underhat#1{\mathop{\vtop{\ialign{##\crcr
    $\hfil\displaystyle{#1}\hfil$\crcr
     \noalign
     {\kern1.5pt\nointerlineskip}
     \upbrall\crcr\noalign{\kern1pt
   }}}}\limits}
\usepackage{pgf}
\usepackage{tikz}
\usepackage{verbatim}
\usepackage[utf8]{inputenc}
\usetikzlibrary{arrows,automata}
\usetikzlibrary{positioning}


\geometry{verbose,a4paper,tmargin=30mm,bmargin=30mm,lmargin=25mm,rmargin=25mm}

\def\theequation{\arabic{section}.\arabic{equation}}

\newcommand{\wh}{\widehat}
\newcommand{\wt}{\widetilde}

\def\hypotilde#1#2{\vrule depth #1 pt width 0pt{\smash{{\mathop{#2}
\limits_{\displaystyle\widetilde{}}}}}}
\def\hypohat#1#2{\vrule depth #1 pt width 0pt{\smash{{\mathop{#2}
\limits_{\displaystyle\widehat{}}}}}}





\newcommand{\tbu}{\,^{t\!}{\bu}}

\newcommand{\bblu}{\begin{color}{blue}}
\newcommand{\bred}{\begin{color}{red}}
\newcommand{\ecl}{\end{color}}

\newcommand{\bI}{\boldsymbol{I}}

\newcommand{\bK}{\boldsymbol{K}}
\newcommand{\bL}{\boldsymbol{L}}
\newcommand{\bM}{\boldsymbol{M}}
\newcommand{\bN}{\boldsymbol{N}}

\newcommand{\bU}{\boldsymbol{U}}

\newcommand{\bsX}{\boldsymbol{X}}
\newcommand{\bsY}{\boldsymbol{Y}}

\newcommand{\bLam}{\boldsymbol{\Lambda}}

\newcommand{\ld}{\lambda}

\newcommand{\be}{\begin{equation}}
\newcommand{\ee}{\end{equation}}
\newcommand{\bea}{\begin{eqnarray}}
\newcommand{\eea}{\end{eqnarray}}
\newcommand{\bse}{\begin{subequations}}
\newcommand{\ese}{\end{subequations}}
\newcommand{\nn}{\nonumber}

\newcommand{\ol}{\overline}



\newcommand{\bu}{\boldsymbol{u}}


\begin{document}

\def\theequation{\arabic{section}.\arabic{equation}}

\newtheorem{thm}{Theorem}[section]
\newtheorem{lem}{Lemma}[section]
\newtheorem{defn}{Definition}[section]
\newtheorem{ex}{Example}[section]
\newtheorem{rem}{}
\newtheorem{criteria}{Criteria}[section]
\newcommand{\ra}{\rangle}
\newcommand{\la}{\langle}

\title{\textbf{Discrete-time Ruijsenaars-Schneider system and Lagrangian 1-form structure}}
\author{\\\\Sikarin Yoo-Kong$^{\dagger,1 } $, Frank Nijhoff$^{\ddagger,2} $ \\
\small $^\dagger $\emph{Theoretical and Computational Physics (TCP) Group, Department of Physics,}\\ 
\small \emph{Faculty of Science, King Mongkut's University of Technology Thonburi,}\\
\small \emph{Thailand, 10140.}\\
\small $^\ddagger $\emph{Department of Applied Mathematics, School of Mathematics, University of Leeds,}\\
\small \emph{United Kingdom, LS2 9JT.}\\
\small  $^1$sikarin.yoo@kmutt.ac.th,
\small $^2$nijhoff@maths.leeds.ac.uk}
\maketitle


\abstract
We study the Lagrange formalism of the (rational) Ruijsenaars-Schneider (RS) system, both in discrete time as well 
as in continuous time, as a further example of a Lagrange 1-form structure in the sense of the recent paper \cite{Sikarin1}. The discrete-time model of the RS system 
was established some time ago arising via an Ansatz of a Lax pair, and was shown to lead to an exactly integrable 
correspondence (multivalued map)\cite{FrankN8}. In this paper we consider an extended system 
representing a family of commuting flows of this type, and establish a connection with the lattice KP system. 
In the Lagrangian 1-form structure of this extended model, the closure relation is verified making  
use of the equations of motion. Performing successive continuum limits on the RS system, we 
establish the Lagrange 1-form structure for the corresponding continuum case of the RS model.


\section{Introduction}\label{intro}
\setcounter{equation}{0}
The Ruijsenaars-Schneider (RS) system \cite{R1,R2}, i.e., the relativistic version of the Calogero-Moser (CM) system, is integrable both in the 
classical and quantum regimes. The classical model was discovered in \cite{R1} by considering the Poincar\'e Poisson algebra associated with 
sine-Gordon solitons, and was motivated by the discovery in the late 1970s of explicit soliton-type S-matrices for some relativistic 
two-dimensional quantum field theories (such as the massive Thirring model, the quantum sine-Gordon theory and the $O(N)$ $\sigma$-models). 
For reasons elucidated below, 
we are interested in Lagrangian aspects of the RS model, which have hardly received attention. 
An apparent reason for this is that the Hamiltonian description corresponding to the system is not of 
Newtonian form, and hence the usual connection between the Hamiltonian and the Lagrangian description 
through the Legendre transformation becomes quite convoluted. At the same 
time we are interested in the integrable time-discrete version of the RS system, which was proposed and studied in \cite{FrankN8}, where 
the Lagrangian description is more natural than the Hamiltonian one, because the finite-step time-iterate can be naturally viewed as a 
canonical transformation where the Lagrangian plays the role of its generating function. In \cite{FrankN8} the corresponding 
discrete-time Lagrangian was found, but the continuum limits were not considered so far. As we shall show, the latter can be used to derive 
a natural Lagrangian description for the continuous RS model as well, but in the context of what we call a Lagrangian 1-form structure. We will 
now explain what we mean by this latter notion. 

Recently, a novel point of view was developed on the role of the Lagrangian structure in integrable systems, cf. \cite{SF1}, where it was proposed 
that the fundamental property of \emph{multidimensional consistency} can be made manifest in the Lagrangians by thinking of the latter as 
components 
of a difference (or differential) ``Lagrange-form'' when the flows are embedded in a multidimensional space-time. A new variational principle 
was formulated which involves not only variations with respect to the dependent variables of the theory, but also with respect to the geometry 
in the space of independent (discrete or continuous) variables. In \cite{SF1}, this was laid out in the case of two-dimensional lattice equations, 
whilst in \cite{LNQ} it was extended also to the case of the 3-dimensional bilinear Kadomtsev–Petviashvili (KP) equation (Hirota's equation). Furthermore, in  \cite{XNL} 
a universal Lagrangian structure was established for quadrilateral affine-linear lattice equations as well as for their corresponding continuous 
counterparts, the so-called \textit{generating PDEs} of the system. The key property in all these systems, in which in a sense 
the integrability of the system  resides, is that the Lagrangian form is closed on solutions of the equations of the motion (but not identically 
closed for arbitrary field values). This can be viewed as a manifestation of the multidimensional consistency 
of the system under consideration on the Lagrangian level.  

In the case of integrable systems of ODEs, like system of equations of motion of integrable many-body systems,
the Lagrangian form structure is that of Lagrange 
1-forms. Recently, in collaboration with S Lobb, the authors studied a first example of such a Lagrange 1-form structure, namely the case of the discrete-time 
(rational) Calogero-Moser (CM) system, \cite{Sikarin1,Rinthesis}. The multidimensional consistency of the system in that case is represented by the co-existence of two 
or more independent \emph{commuting} discrete-time flows in the case of three or more particles. 
Starting with the discrete-time case, we furthermore established the Lagrange structure of the corresponding continuous case by 
performing systematic continuum limits on the discrete-time equations and Lagrangians. Of course, these systems exhibit also a multi-time Hamiltonian structure, where the various time-flows generated by the 
Hamiltonians, which are in involution with respect to a canonical Poisson structure, commute. 
However, it is not the case that one can perform naively a Legendre transformation on each of these 
Hamiltonians separately to yield a proper Lagrangian structure that makes sense as a coherent system. 
In fact, the higher-order Lagrangians emerging from such a naive approach 
would yield rather complicated algebraic expressions which seem unsuitable for further study. However, as we have shown in \cite{Sikarin1}, 
a proper Lagrangian 1-form structure can be defined for the CM system, in which the components of the form are 
mixed Lagrangians, of polynomial form in the time-derivatives, obeying the crucial \emph{closure property}, expressing the commutativity of the flows, \emph{on solutions of the equations of the motion}.  
To derive these Lagrangians, the connection between the semi-discrete KP equation and the discrete-time CM system, which arises as the pole-reduction of the former, 
was instrumental in order to guide the proper choice of higher-order continuum limits obtained by systematic expansions performed on the discrete-time model, thus 
leading to the Lagrangians in the continuum case. (Unfortunately, we do not know at this stage a Lagrangian 
of the semi-discrete KP equation in the relevant form, 
which would have allowed us to do the pole-reduction on the Lagrangians directly.) 

In the present paper, we proceed in the same spirit as in the paper \cite{Sikarin1}, to establish the Lagrange 
1-form structure of the discrete-time rational RS system. However, compared to the the case of the 
discrete-time CM system where there is a direct connection between the Lax matrices and the relevant 
Lagrangians, and where the closure relation is a direct consequence of a zero-curvature condition, 
such a direct connection seems absent in the RS case. Thus, in the latter case, the establishment of the 
closure relation for the Lagrangians which essentially were provided in \cite{FrankN8}, has to be 
verified by an explicit computation, and seems to be governed by a different mechanism. It is this aspect 
that makes the study of the RS system a worthwhile addition to the emerging theory of Lagrangian 
multi-form structures, confirming that the latter is universal structure underlying integrable systems.   
Furthermore, whereas the discrete-time CM system arises from the pole-reduction of a \textit{semi-discrete}  
KP equation (with one continuous and two discrete independent variables), the discrete-time RS is connected 
by an analogous reduction to the fully discrete KP equation (with three discrete independent variables). 
Thus, it is evident that (discrete-time) RS case, viewed as a relativistic analogue of the corresponding 
CM system, is richer than the CM case of \cite{Sikarin1}, containing an additional (deformation) 
parameter which can be viewed as the reciprocal of the speed of light. It is observed that this 
non-relativistic limit also corresponds to a particular continuum limit on the lattice KP system.


Although our focus in this paper is on the rational case of the RS model, \textcolor{red}{most} of our 
results on the Lagrangian structure can be extended straightforwardly to the 
trigonometric/hyperbolic case and even the elliptic case, however, as in the case of the CM system, we prefer all the statements we make to be backed up by the explicit solution of 
the equations of motion that can be constructed in the rational case. For instance, an important ingredient in the structure is what we call "constraint equations", which 
in addition to the one-dimensional equations of the motion can be verified explicitly for the solution obtained. These constraint equations involve the dynamics 
in two discrete variables (corresponding to trajectories in the space of independent variables which involve corners). Since, in contrast to the paper \cite{Sikarin1}, the starting 
point in the present paper is an Ansatz of a Lax pair rather than a reduction from a KP system (the connection with the lattice KP is established \emph{a posteriori}), the verification that all relations (equations of the motion, 
constraint equation and closure relation) hold true for a nontrivial family of solutions backs up the consistency of the 
whole structure of this system. 

The organization of the paper is as follows. In Section 2, we  review the construction of the 
single-flow discrete-time RS system and its exact solution in terms of a secular problem. Next, we extend the system 
by imposing additional commuting discrete flows in additional time directions, and derive the conditions (i.e., the 
constraint relations) for their compatibility. In Section 3, the Lagrangian 1-form structure of the discrete multi-time 
RS system is studied, and we verify the relevant closure relation by a direct computation involving the equations of 
motion as well as the constraints. Thus, we establish that this system possesses a Lagrangian 1-form 
structure in the sense of the paper \cite{Sikarin1}. In Section 4, a ``skew'' continuum  limit is taken,  
guided by the exact solution of the discrete-time RS system system, yielding what we call the 
\textit{semi-continuous RS system}. The latter, in fact, acts as a generating system for the 
continuous RS system, thus allowing us in Section 5 to derive the full continuum limit, by which we recover the 
continuous-time RS hierarchy albeit in a form involving mixed derivatives w.r.t. the multiple times of the RS 
hierarchy. Applying the same limit to the Lagrangian of the semi-continuous RS system, we obtain the Lagrangian 
components of the relevant continuous higher-time flows of the RS system, in a form (namely involving mixed 
higher-time derivatives) which is suitable for the interpretation as a Lagrangian 1-form structure, where the 
Lagrangian components obey the (continuous) closure relation subject to the solutions of the equations of the motion. In Section 6, the connection to the lattice KP system is presented, showing that 
the exact solution of the discrete multiple-time RS model leads to solutions of the lattice KP dependent variable 
as function of these multiple times.  In particular, the characteristic polynomial associated with the exact 
solution can be identified with the corresponding lattice KP $\tau$-function. Finally in Section 7 summary and discussion will be presented along with some open problems.

\section{The discrete-time Ruijsenaars-Schneider system and commuting flows}\label{exactsolution}
\setcounter{equation}{0} 
In this Section we review the discrete-time RS system which has been introduced in \cite{FrankN8}. This gives us an occasion to introduce appropriate notation which we will use throughout the paper. Furthermore, we derive the exact solution of the discrete equations of the motion and identify 
the constraint relations on commuting flows that can coexist in the system.
\\
\\
\subsection{The single-flow RS system}\label{SF}
 Following \cite{FrankN8} the discrete-time RS model is obtained on the basis of a Lax pair of the form:
\begin{subequations}\label{LM} 
\begin{equation}\label{LM1}
\wt{\boldsymbol \phi}=\boldsymbol M_\kappa\boldsymbol \phi\  ,\;\;\;\;\;\; \boldsymbol L_\kappa\boldsymbol \phi=\zeta \boldsymbol \phi\  , 
\end{equation}
for a vector function $\boldsymbol\phi$ and an eigenvalue $\zeta $, in which the matrices $\boldsymbol L_\kappa$ and $\boldsymbol M_\kappa$ 
are given by 
\begin{eqnarray}\label{LM121}
\boldsymbol L_{\kappa}&=&\frac{hh^T}{\kappa}+\boldsymbol L_0\;,\label{LM1}\\
\boldsymbol M_{\kappa}&=&\frac{\wt{h}h^T}{\kappa}+\boldsymbol M_0\;,\label{aLM2}
\end{eqnarray}\end{subequations} 
where in the rational case
\begin{eqnarray}\label{LM0}
\boldsymbol L_0=\sum_{i,j=1}^N\frac{h_ih_j}{x_i-x_j+\lambda}E_{ij}\;,\;\;\mbox{and}\;\;\;\boldsymbol M_0=\sum_{i,j=1}^N\frac{\wt{h}_ih_j}{\wt{x}_i-x_j+\lambda}E_{ij}\;.\label{LM2}
\end{eqnarray}
In \eqref{LM} the $x_i$ are the particle positions, whilst the $h_i$ are auxiliary variables which will be determined later. 
The \emph{tilde} ~$\wt{\phantom{a}}$~ is a shorthand notation for the discrete-time shift, i.e. for $x_i(n)=x_i$, and we write $x_i(n+1)=\wt{x}_i$, 
and $x_i(n-1)=\undertilde{x_i}$. 
The variable $\kappa$ is the additional spectral parameter, whereas $\lambda$ is a parameter of the system related to the 
non-relativistic limit. The matrix $E_{ij}$ are the standard elementary matrices whose entries are given by 
$(E_{ij})_{kl}=\delta_{ik}\delta_{jl}$.
\\
\\
The compatibility condition of the system \eqref{LM}:
\begin{eqnarray}\label{coM}
&&\wt{\boldsymbol L}_{\kappa}\boldsymbol M_{\kappa}=\boldsymbol M_{\kappa}\boldsymbol L_{\kappa}\Rightarrow \nn\\
&&\left( \frac{\wt h\wt h^T}{\kappa}+\wt{\boldsymbol L}_0\right)\left( \frac{\wt{h}h^T}{\kappa}+\boldsymbol M_0\right)
=\left(\frac{\wt{h}h^T}{\kappa}+\boldsymbol M_0\right)\left( \frac{hh^T}{\kappa}+\boldsymbol L_0\right)
\end{eqnarray}
%
%
From the coefficients of $1/\kappa^2$ we derive the conservation law $\mbox{Tr}\wt{\boldsymbol L}_{\kappa}=\mbox{Tr}{\boldsymbol L}_{\kappa}$ leading to
\begin{equation}
\sum_{j=1}^N\wt{h}_j^2=\sum_{j=1}^Nh_j^2\;,\label{conh}
\end{equation}
and furthermore, the coefficients of $1/\kappa$ give
\begin{equation}\label{iden1}
\wt{\boldsymbol L}_{0}\wt h h^T+\wt h\wt h^T\boldsymbol M_{0}=\boldsymbol M_{0}hh^T+\wt h h^T\boldsymbol L_{0}\;,
\end{equation}
and the rest produces the equation
\begin{eqnarray}\label{iden2}
\wt{\boldsymbol L}_{0}\boldsymbol M_{0}=\boldsymbol M_{0}\boldsymbol L_{0}\;.
\end{eqnarray}
\eqref{iden1} and \eqref{iden2} produce the relations
\begin{equation}\label{idenforh}
\sum_{j=1}^N
\left(\frac{\wt{h}_j^2}{\wt{x}_i-\wt{x}_j+\lambda}-\frac{h_j^2}{\wt{x}_i-x_j+\lambda}\right)
=\sum_{j=1}^N
\left(\frac{h_j^2}{x_j-x_l+\lambda}-\frac{\wt{h}_j^2}{\wt{x}_j-x_l+\lambda}\right)\;,
\end{equation}
for all $i,j=1,2,...,N$. Consequently, both sides of \eqref{idenforh} must be independent of the external particle label. Thus, we find a coupled system of equations in terms of the variables $h_i$, and $x_i$ of the form: 
\begin{subequations}\label{idenforh2aa}
\begin{eqnarray}
\sum_{j=1}^N
\left(\frac{\wt{h}_j^2}{\wt{x}_i-\wt{x}_j+\lambda}-\frac{h_j^2}{\wt{x}_i-x_j+\lambda}\right)&=&-p\;\;\;\;\;, \forall i\;,
\end{eqnarray}
\begin{eqnarray}
\sum_{j=1}^N
\left(\frac{h_j^2}{x_j-x_l+\lambda}-\frac{\wt{h}_j^2}{\wt{x}_j-x_l+\lambda}\right)&=&-p \;\;\;\;\;,\forall l\;,
\end{eqnarray}
\end{subequations}
where the quantity $p=p(n)$ does not carry a particle label, but could still be a function of $\;n\;$. 
%

In order to derive a closed set of equations of motion for the variables $x_i$ we have to determine the variables 
$h_i$  in terms of the $x_i$ and their time-shifts. To do this most effectively,we use 
the Lagrange interpolation formula, which is given in the following lemma:  

%
\textbf{Lemma 2.1.}\emph{
\textbf{Lagrange interpolation formula}: Consider $2N$ noncoinciding complex numbers $x_k$ and $y_k$, where $k=1,2,...,N$. Then the following formula holds true:
\begin{subequations}
\begin{equation}\label{LaInab}
\prod_{k=1}^N\frac{(\xi-x_k)}{(\xi-y_k)}=1+\sum_{k=1}^N\frac{1}{(\xi-y_k)}\frac{\prod_{j=1}^N(y_k-x_j)}{\prod_{j=1,j\neq k}^N(y_k-y_j)}\;.
\end{equation}
As a consequence
\begin{equation}\label{LaIn2}
-1=\sum_{k=1}^N\frac{1}{(x_i-y_k)}\frac{\prod_{j=1}^N(y_k-x_j)}{\prod_{j=1,j\neq k}^N(y_k-y_j)}\;,\;\;\;\;\;\;i=1,...,N\;,
\end{equation}
\end{subequations}
which is obtained by substituting $\xi=x_i$ into \eqref{LaInab}.}
\\
\\
%
Applying the Lagrange interpolation formula to the following rational function of 
the indeterminate variable $\xi$:   
\begin{eqnarray}\label{poly}
\prod_{j=1}^N\frac{(\xi-x_j+\lambda)(\xi-\wt{x}_j-\lambda)}{(\xi-x_j)(\xi-\wt{x}_j)}\;,
\end{eqnarray}
we obtain 
\begin{subequations} \label{heqs} 
\begin{eqnarray}\label{heqsa}
h_i^2&=&-p\frac{\prod_{j=1}^N(x_i-x_j+\lambda)(x_i-\wt{x}_j-\lambda)}{\prod_{j\ne i}^N(x_i-x_j)\prod_{j=1}^N(x_i-\wt{x}_j)}\;, \\
\wt{h}_i^2&=&p\frac{\prod_{j=1}^N(\wt{x}_i-x_j+\lambda)(\wt{x}_i-\wt{x}_j-\lambda)}{\prod_{j\ne i}^N(\wt{x}_i-\wt{x}_j)\prod_{j=1}^N(\wt{x}_i-x_j)} \;, 
\label{heqsb} 
\end{eqnarray}\end{subequations} 
for $i=1,2,...,N$ which we obtain the following system of equations
\begin{equation}\label{eqmotion1}
\frac{p }{\hypotilde 0 {p }}\prod\limits_{\mathop {j = 1}\limits_{j \ne i} }^N \frac{(x_i-x_j+\lambda)}{(x_i-x_j-\lambda)}=\prod_{j=1}^N\frac{(x_i-\wt{x}_j)(x_i-{\hypotilde 0 x}_j+\lambda)}{(x_i-{\hypotilde 0 x}_j)(x_i-\wt{x}_j-\lambda)}\;.
\end{equation}
Eq. \eqref{eqmotion1} can be considered to be the product version of \eqref{heqs} which is a system of $N$ equations for $N+1$ unknowns, $x_1,...,x_N$ and $p$. 
There is so far no separate equation for $p $, which amounts to a freedom in determining the centre of mass motion, 
and fixing a specific choice of the time-evolution for $p$ we would get a closed set of equations 
(for more details, see \cite{FrankN8}). For simplicity we will often take, in what follows, \emph{$p$ to be 
constant} as a function of the discrete-time variable $n$. 

The exact solution of the equations of motion can be derived in a way similar to the continuous case of the rational RS model, cf. 
\cite{FrankN8}, cf. also \cite{R1,R5} using the explicit form of the Lax matrices \eqref{LM1}, and is given 
by the following statement. 

\begin{prop}
Let $\bLam$ be a constant diagonal matrix, and let $p(n)$ be a given function of the discrete-time 
variable $n$ such that $p(n)\boldsymbol{I}+\bLam$ is invertible for all $n \geq 0$, and let the   
$N\times N$ matrix function of the discrete variable $n$, $\boldsymbol{Y}(n)$, be given by\footnote{The factors in the first term of \eqref{Yn2} come out directly from the computation but they can be removed by conjugation. For clarity, we write the inverses of (diagonal) matrices such as $p\boldsymbol{I}+\boldsymbol{\Lambda}$ in fractional form, where it does not lead to ambiguity.}  
\begin{subequations}
\begin{equation}\label{Yn2}
\boldsymbol{Y}(n)=\left[\prod_{k=0}^{n-1}(p(k)\boldsymbol{I}+\bLam)^{-1}\right]\,\boldsymbol{Y}(0)\left[\prod_{k=0}^{n-1}(p(k)\boldsymbol{I}+\bLam)\right]-\sum_{k=0}^{n-1}\frac{p(k)\lambda}{p(k)\boldsymbol{I}+\bLam} \;,
\end{equation}
subject to the following condition on the initial value matrix
\begin{equation}\label{constraint(n)}
[\boldsymbol Y(0),\bLam]=-\lambda\bLam+\mbox{\rm rank 1}\;,
\end{equation}
\end{subequations}
then the eigenvalues $x_i(n)$ of the matrix $\boldsymbol{Y}(n)$ coincide with the solutions for particle position of the discrete-time RS system, i.e. they solve the discrete equations of motion \eqref{eqmotion1}. 
\end{prop}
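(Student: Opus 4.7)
The plan is to translate the matrix recursion for $\boldsymbol{Y}(n)$ into the scalar equations of motion (2.10) via the characteristic polynomial $\phi_n(\xi)=\det(\xi\boldsymbol{I}-\boldsymbol{Y}(n))=\prod_j(\xi-x_j(n))$. Telescoping (2.13a) first yields the one-step matrix recursion
\[
(p(n)\boldsymbol{I}+\bLam)\,\boldsymbol{Y}(n+1) = \boldsymbol{Y}(n)\,(p(n)\boldsymbol{I}+\bLam) - p(n)\lambda\,\boldsymbol{I}.
\]
Since $\bLam$ is diagonal (and so commutes with $p(n)\boldsymbol{I}+\bLam$), a short calculation shows that the rank-one condition $[\boldsymbol{Y}(n),\bLam]=-\lambda\bLam+u(n)v(n)^T$ is preserved along this recursion, with $u(n+1)=(p(n)\boldsymbol{I}+\bLam)^{-1}u(n)$ and $v(n+1)=(p(n)\boldsymbol{I}+\bLam)v(n)$; hypothesis (2.13b) therefore propagates to every $n\geq 0$.

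Next, using the rank-one identity to eliminate $\lambda\bLam$ in the recursion gives the equivalent form
\[
\boldsymbol{Y}(n+1)=\boldsymbol{Y}(n)-\lambda\boldsymbol{I}+(p(n)\boldsymbol{I}+\bLam)^{-1}u(n)v(n)^T,
\]
exhibiting $\boldsymbol{Y}(n+1)-\boldsymbol{Y}(n)+\lambda\boldsymbol{I}$ as a rank-one matrix. Taking determinants via the matrix-determinant lemma and comparing with the obvious product form of $\phi_{n+1}(\xi)/\phi_n(\xi+\lambda)$ produces the scalar rational identity
\[
\prod_{j=1}^{N}\frac{\xi-\wt{x}_j}{\xi+\lambda-x_j} \;=\; 1 - v(n)^T\bigl((\xi+\lambda)\boldsymbol{I}-\boldsymbol{Y}(n)\bigr)^{-1}(p(n)\boldsymbol{I}+\bLam)^{-1}u(n).
\]
Matching the residues at the simple poles $\xi=x_i(n)-\lambda$ on both sides---the left via Lemma 2.1 (Lagrange interpolation), the right via the spectral decomposition of $\boldsymbol{Y}(n)$---and passing to the eigenbasis of $\boldsymbol{Y}(n)$, in which $\bLam$ conjugates into a Cauchy-like matrix whose entries take the form $h_ih_j/(x_i-x_j+\lambda)$, one recovers (after a Cauchy-matrix manipulation) the explicit formula (2.9a) for $h_i(n)^2$.

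An entirely analogous residue analysis applied to the backward step $n-1\to n$---where the recursion rewrites as $\boldsymbol{Y}(n-1)=\boldsymbol{Y}(n)+\lambda\boldsymbol{I}-u(n)\hat{v}(n)^T$ with $\hat{v}(n)=(p(n-1)\boldsymbol{I}+\bLam)^{-1}v(n)$, and residues are matched at $\xi=x_i(n)+\lambda$---reproduces formula (2.9b) for the same quantity $h_i(n)^2$. Equating the two expressions for $h_i(n)^2$ is exactly the manipulation already performed in the paper to deduce (2.10) from (2.9), completing the verification. The main obstacle is the Cauchy-matrix step: computing the right-hand side residue in closed form requires inverting the shifted Cauchy-like matrix in the eigenbasis of $\boldsymbol{Y}(n)$, which is handled by exploiting that this matrix is similar to $p(n)\boldsymbol{I}+\bLam$, so that its determinant factorizes as $\prod_k(p(n)+\Lambda_k)$, and a Cauchy-determinant identity then yields the position-dependent factor needed to convert the residue into the form (2.9a)--(2.9b).
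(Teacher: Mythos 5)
Your overall strategy---rank-one determinant updates of the characteristic polynomial plus residue matching---is the right kind of machinery (it is essentially the $\tau$-function calculus the paper deploys in Section 6), and your preliminary steps check out: the one-step recursion, the propagation of the rank-one commutation condition with $u(n+1)=(p(n)\boldsymbol{I}+\bLam)^{-1}u(n)$ and $v(n+1)=(p(n)\boldsymbol{I}+\bLam)v(n)$, and the determinant identity for $\prod_j(\xi-\wt{x}_j)/(\xi+\lambda-x_j)$ are all correct. The gap sits exactly in the step you flag as the ``main obstacle''. Writing $P_i$ for the spectral projector of $\boldsymbol{Y}(n)$ onto the eigenvalue $x_i(n)$, your forward residue evaluates $v(n)^T P_i\,(p\boldsymbol{I}+\bLam)^{-1}u(n)$, while the backward one evaluates $v(n)^T(p\boldsymbol{I}+\bLam)^{-1}P_i\,u(n)$. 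Since $P_i$ does not commute with $(p\boldsymbol{I}+\bLam)^{-1}$, these are two \emph{different} bilinear quantities, and neither is the intrinsic object $v(n)^T P_i\,u(n)$ that plays the role of $h_i^2$. In the eigenbasis of $\boldsymbol{Y}(n)$ the forward residue is a mixed quantity of the form $h_i(\boldsymbol{M}_0^{-1}\wt{h})_i$, involving the intertwiner $\boldsymbol{M}_0=\wt{\boldsymbol{U}}_0\boldsymbol{U}_0^{-1}$ between the eigenbases at times $n$ and $n+1$---not the conjugate of $\bLam$ (which is $\boldsymbol{L}_0$)---and it is not determined by the factorization $\det(p\boldsymbol{I}+\boldsymbol{L}_0)=\prod_k(p+\Lambda_k)$: you need individual entries of the inverse of a Cauchy-like matrix, not its determinant. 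Worse, the two residue identities already fix both bilinears as explicit products of position differences, and the equations of motion \eqref{eqmotion1} are \emph{equivalent} to a further relation between these two quantities; so ``equating the two expressions for $h_i(n)^2$'' is not a formality but precisely the assertion that remains to be proved.

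To close the argument you must show that each of these mixed bilinears equals $h_i^2$ times an explicit rational function of the positions, and that is where the paper's Appendix A does its real work: starting from \eqref{Yn2} it defines $\bL=\bU\bLam\bU^{-1}$ and $\bM=\wt{\bU}\bU^{-1}$ with $\bU$ diagonalizing $\boldsymbol{Y}$, derives from the recursion and the rank-one condition the commutation relations that force the Cauchy forms \eqref{LM0}, and from \eqref{LMML}--\eqref{LMML2} obtains the sum rules \eqref{idenforh2aa}; only then does the Lagrange interpolation of the rational function \eqref{poly} deliver \eqref{heqsa}--\eqref{heqsb}, whose compatibility is \eqref{eqmotion1}. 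Your outline would need to incorporate this reconstruction of the temporal intertwiner (or an equivalent explicit inversion of the Cauchy-like matrix) before the residue matching can be completed.
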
 

The details of the proof are given in are Appendix A. In fact, the matrix $\bLam$ can be identified with 
the (diagonal) matrix of eigenvalues of the matrix $\boldsymbol{L}_0$, cf. \eqref{XX2}, which coincide 
with the eigenvalues of $\boldsymbol{L}_0(0)$ at the initial value, since we are dealing with an 
isospectral problem. However, as far as the above proposition is concerned, both $\boldsymbol\Lambda$ and 
the initial value matrix $\boldsymbol{Y}(0)$ can be chosen in lieu of posing initial conditions on the particle 
positions\footnote{In fact, specifying $x_i(0)$ and $x_i(1)$, $i=1,2,...,N$, the matrices $\boldsymbol{Y}(0)$ and $\boldsymbol\Lambda$ can be computed from the initial values by using the $h_i^2$ from \eqref{heqsa} and the matrix $\boldsymbol{L}_0$ from \eqref{LM0} at $n=0$, where for simplifity we assume that the latter can be diagonalised.}.

The functions $p(n)$ determine the centre of mass motion, which can be separated from the relative 
motion of the particles. In the special case of constant $p$: $p=\widetilde{p}$ 
(which amounts to choosing a frame in which the centre of mass remains stationary) 
the expression for the matrix $\boldsymbol{Y}(n)$ becomes
\begin{equation}
\boldsymbol{Y}(n)=(p\boldsymbol{I}+\bLam)^{-n}\boldsymbol Y(0)(p\boldsymbol{I}+\bLam)^n-\frac{np\lambda}{p\boldsymbol{I}+\bLam} \;.
\end{equation}
As a corollary, we have that the solutions can be found from the secular problem for the matrix: 
\begin{equation}\label{Yn3}
\boldsymbol{Y}(0)-np\lambda(p\boldsymbol{I}+\boldsymbol L_0(0))^{-1}\;, 
\end{equation}
i.e., the eigenvalues of $\boldsymbol{Y}(n)$ are the values of the particle positions at discrete-time $n$. 
%
\subsection{Commuting discrete flows} 
Following the construction in \cite{Sikarin1} we introduce another temporal Lax matrix $\boldsymbol N_\kappa$ 
which generates a shift $\;\;\widehat{\phantom{a}}\;\;$ in an additional discrete time direction. Thus, we 
impose for the same vector function $\boldsymbol\phi$ as before, also the system of equations: 
%
\begin{equation}\label{KN} 
\wh{\boldsymbol \phi}=\boldsymbol N_\kappa\boldsymbol \phi\  \;,\;\;\;\;\;\boldsymbol{L}_{\kappa}\phi =\zeta \phi \;,
\end{equation}
where 
\begin{eqnarray}
\boldsymbol N_{\kappa}&=&\frac{\wh{h}h^T}{\kappa}+\boldsymbol N_0\;,\;\;\;\;\mbox{where}\;\;\;\;\boldsymbol N_0=\sum_{i,j=1}^N\frac{\wh{h}_ih_j}{\wt{x}_i-x_j+\lambda}E_{ij}\;.\label{aKN2}
\end{eqnarray}
This describes the flow in terms of an additional discrete-time variable $m$, where 
\emph{hat} is a shorthand notation for the discrete-time shift, i.e. for $x_i(n,m)=x_i$, and we write $x_i(n,m+1)=\wh{x}_i$, and 
 $x_i(n,m-1)=\hypohat 0 {x}_i$. 
\\
\\
Obviously, the compatibility relations for \eqref{KN} can be analysed in a very similar manner as to the ones for \eqref{LM}. 
Thus, we find a coupled system of equations in terms of the variables $h_i$, and $x_i$ in the form
\begin{subequations}\label{idenforh2as}
\begin{eqnarray}
\sum_{j=1}^N
\left(\frac{\wh{h}_j^2}{\wh{x}_i-\wh{x}_j+\lambda }-\frac{h_j^2}{\wh{x}_i-x_j+\lambda }\right)&=&-q\;,\;\;\;\;\forall i\;, \\
\sum_{j=1}^N
\left(\frac{h_j^2}{x_j-x_l+\lambda }-\frac{\wh{h}_j^2}{\wh{x}_j-x_l+\lambda }\right)&=&-q\;,\;\;\;\;\forall l \;,
\end{eqnarray}
\end{subequations}
where the quantity $q=q(m)$ does not carry a particle label, but may still be a function of $m$. 
The system \eqref{idenforh2as} can be resolved once again by using the Lagrange interpolation formula 
\eqref{LaIn2} yielding the resolution: 
\begin{subequations} \label{keqs} 
\begin{eqnarray}\label{keqsa}
h_i^2&=&-q\frac{\prod_{j=1}^N(x_i-x_j+\lambda)(x_i-\wh{x}_j-\lambda)}{\prod_{j\ne i}^N(x_i-x_j)\prod_{j=1}^N(x_i-\wt{x}_j)}\;, \\
\wh{h}_i^2&=&q\frac{\prod_{j=1}^N(\wh{x}_i-x_j+\lambda)(\wh{x}_i-\wh{x}_j-\lambda)}{\prod_{j\ne i}^N(\wh{x}_i-\wh{x}_j)\prod_{j=1}^N(\wh{x}_i-x_j)} \;, 
\label{keqsb} 
\end{eqnarray}\end{subequations} 
for $i=1,2,...,N$, and from which we obtain the following system of equations
\begin{equation}\label{eqmotion12}
\frac{q }{\hypohat 0 {q}}\prod\limits_{\mathop {j = 1}\limits_{j \ne i} }^N \frac{(x_i-x_j+\lambda)}{(x_i-x_j-\lambda)}=
\prod_{j=1}^N\frac{(x_i-\wh{x}_j)(x_i-{\hypohat 0 x}_j+\lambda)}{(x_i-{\hypohat 0 x}_j)(x_i-\wh{x}_j-\lambda)}\;.
\end{equation}
The product version \eqref{eqmotion12} of \eqref{keqs}, thus yields a system of $N$ equations for $N+1$ unknowns, $x_1,...,x_N$ and $q$. 
There is again no equation for $q $ separately, and thus it should be a priori given in order to get a closed set of equations. Thus far, so similar. 

Assuming now that the dependent variables depend simultaneously on both discrete time variables $n$ and $m$, then to obtain a univalent solution of the 
equations of motion we must require that both flows, in the ``$\;\;\wt{}\;\;$" direction  and the ``$\;\;\wh{}\;\;$" direction, commute. If so, then we can fix a 
value for $n$ and solve the equations in the ``$\;\;\wh{}\;\;$" direction similarly as before, leading to the matrix $\boldsymbol{Y}$ which depends on $n$ and $m$ 
as follows: 
\\
\begin{prop}
Let the $N\times N$ matrix function of the discrete variable $m$, $\boldsymbol{Y}(m)$, be given by  
\begin{subequations}
\begin{equation}\label{Ym2}
\boldsymbol{Y}(m)=\left[\prod_{k=0}^{m-1}(q(k)\boldsymbol{I}+\bLam)^{-1}\right]\,\boldsymbol{Y}(0)\left[\prod_{k=0}^{m-1}(q(k)\boldsymbol{I}+\bLam)\right]-\sum_{k=0}^{m-1}\frac{q(k)\lambda}{q(k)\boldsymbol{I}+\bLam} \;,
\end{equation}
subject to the following condition on the initial value matrix
\begin{equation}\label{constraint(m)}
[\boldsymbol Y(0),\bLam]=\lambda\bLam+\mbox{\rm rank 1}\;.
\end{equation}
\end{subequations}
Then eigenvalues $x_i(m)$ of the matrix $\boldsymbol{Y}(m)$ coincide with the solutions for particle position of the discretetime RS system, i.e. 
they solve the discrete equations of motion \eqref{eqmotion12}. 
\end{prop}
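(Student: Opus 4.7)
The plan is to proceed in complete analogy with the proof of the first proposition (given in Appendix A), making the systematic substitutions $n \to m$, $\wt{\phantom{a}} \to \wh{\phantom{a}}$, $p(k) \to q(k)$, and $\boldsymbol{M}_\kappa \to \boldsymbol{N}_\kappa$ throughout. The algebraic structure of the hat-flow \eqref{KN} parallels exactly that of the tilde-flow: both systems share the same spectral equation for $\boldsymbol{L}_\kappa$, and both generating shift matrices are rank-one perturbations of a Cauchy-like matrix of the same form, so the entire derivation carries over essentially verbatim.

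First, I would unwrap the telescoping definition \eqref{Ym2} into its one-step form
\begin{equation*}
\wh{\boldsymbol{Y}}(m) = (q(m)\boldsymbol{I}+\bLam)^{-1}\,\boldsymbol{Y}(m)\,(q(m)\boldsymbol{I}+\bLam) - \frac{q(m)\lambda}{q(m)\boldsymbol{I}+\bLam}\ ,
\end{equation*}
which shows that the hat-flow acts on $\boldsymbol{Y}$ by conjugation with $(q\boldsymbol{I}+\bLam)$ twisted by a diagonal translation. Since $\bLam$ commutes with both of these operations, the commutator $[\boldsymbol{Y}(m),\bLam]$ is preserved along the flow and hence equals $[\boldsymbol{Y}(0),\bLam]$, so the constraint \eqref{constraint(m)} propagates in $m$. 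Identifying $\bLam$ with the (constant) diagonal matrix of eigenvalues of the isospectral $\boldsymbol{L}_0(m)$, I would then diagonalise $\boldsymbol{Y}(m)$ so that its eigenvalues $x_i(m)$ appear on the diagonal; the constraint \eqref{constraint(m)}, combined with this diagonal form, forces $\bLam$ in the $\boldsymbol{Y}$-basis to take the Cauchy-like shape of $\boldsymbol{L}_0(m)$ given in \eqref{LM0}, with the entries of the rank-one factor being (up to normalisation) the auxiliary variables $\wh{h}_i(m)$ of \eqref{keqsa}.

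The concluding step is to compare the characteristic polynomials of $\boldsymbol{Y}(m)$ and $\wh{\boldsymbol{Y}}(m)$. Using the one-step recursion together with the Sherman--Morrison/matrix-determinant lemma (applied to the rank-one part of the constraint), one obtains a factorised ratio of the two characteristic polynomials which evaluates at $\xi = x_i(m)$ to the left-hand side of \eqref{eqmotion12}; an application of the Lagrange interpolation identity \eqref{LaIn2} to the resulting rational expression then reproduces the product on the right-hand side of \eqref{eqmotion12}, establishing the equations of motion for the $x_i(m)$ in the hat-direction.

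The main obstacle I foresee is bookkeeping with the sign of $\lambda$ in the constraint \eqref{constraint(m)}, which is $+\lambda\bLam$ here versus $-\lambda\bLam$ in \eqref{constraint(n)}. This sign is not cosmetic: it must be traced through the Sherman--Morrison computation to recover the correct placement of $+\lambda$ and $-\lambda$ in the numerator factors of \eqref{keqs}, matching the pattern of \eqref{heqs}. Moreover, when both flows are imposed simultaneously on the same $\boldsymbol{Y}(0)$ (as needed for the 1-form closure studied in Section 3), the simultaneous validity of \eqref{constraint(n)} and \eqref{constraint(m)} on a single initial datum is a non-trivial compatibility that underlies the constraint equations announced in the introduction and that ultimately reflects the commutativity of the two discrete flows.
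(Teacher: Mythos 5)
Your plan is essentially the paper's own proof: the paper proves this proposition by the same substitutional argument (Appendix A treats the hat-flow by "iterating the same process", arriving at the one-step recursion \eqref{Ym1}, which with $\eta=\lambda$ telescopes to \eqref{Ym2}), and your concluding step via ratios of characteristic polynomials is just the secular-problem/$\tau$-function reformulation of the same computation that the paper carries out through the Lax relations (compare \eqref{tau1}); both routes funnel through the Lagrange interpolation identity \eqref{LaIn2} and the expressions \eqref{keqs}. Two remarks. First, a small imprecision: the commutator $[\boldsymbol{Y}(m),\bLam]$ is \emph{not} preserved along the flow — the one-step map conjugates it, $[\wh{\boldsymbol{Y}},\bLam]=(q\boldsymbol{I}+\bLam)^{-1}[\boldsymbol{Y},\bLam](q\boldsymbol{I}+\bLam)$ — so what propagates is only the \emph{form} "$-\lambda\bLam$ plus rank one", with the dyadic $\boldsymbol{r}\boldsymbol{s}^T$ evolving according to \eqref{rs}/\eqref{KPrs}; this evolution of the dyadic is exactly what produces the correct $\wh{h}_i$ in \eqref{keqs}, so you cannot freeze the commutator at its initial value. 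Second, the sign issue you flag is not something to be "traced through": the constraint originates in the spatial relation \eqref{LX}, which is shared by both flows, so the commutator condition must be identical for the tilde- and hat-flows, namely $[\boldsymbol{Y}(0),\bLam]=-\lambda\bLam+\mathrm{rank}\,1$ as in \eqref{constraint(n)} and as derived in the Appendix; the $+\lambda\bLam$ in \eqref{constraint(m)} (and in \eqref{EXACT2}) is an internal sign inconsistency of the paper rather than a genuine asymmetry between the two flows. With these two points repaired, your argument goes through.
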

From now on we will restrict ourselves for simplicity to the case of constant $q$: $q=\widehat{q}$ leading to 
what we would like to call the discrete-time RS system corresponding to 
the ``$\;\;\wh{}\;\;$" direction in terms of the discrete-time variable $m$ and the matrix $\boldsymbol{Y}(m)$ becomes
\begin{equation}
\boldsymbol{Y}(m)=(q\boldsymbol{I}+\bLam)^{-m}\boldsymbol Y(0)(q\boldsymbol{I}+\bLam)^m-\frac{mq\lambda}{q\boldsymbol{I}+\bLam} \;.
\end{equation}

In order for this scenario to work there must be further constraints on the flows. This will lead to a system of ``constraints'' which can be readily 
obtained from the compatibility between Lax pairs \eqref{aLM2} and \eqref{aKN2}

%
%
%
%
%
\begin{subequations}
\begin{eqnarray}
\frac{p}{q}&=&\prod_{j=1}^N\frac{(x_i-\wh{x}_j-\lambda)(x_i-\wt{x}_j)}{(x_i-\wt{x}_j-\lambda)(x_i-\wh{x}_j)}\;,\label{CON1}\\
\frac{p}{q }&=&\prod_{j=1}^N\frac{(x_i-{\hypohat 0 x}_j+\lambda)(x_i-{\hypotilde 0 x}_j)}{(x_i-{\hypotilde 0 x}_j+\lambda)(x_i-{\hypohat 0 x}_j)}\;.\label{CON2}
\end{eqnarray}\end{subequations} 
We will refer to relations \eqref{CON1} and \eqref{CON2} as the \emph{constraint equations} which guarantee the commutativity between the discrete-time flows with 
shifts ``$\;\;\wt{}\;\;$" and ``$\;\;\wh{}\;\;$" in the variables $n$ and $m$ respectively, and will play a major role in the discrete-time variational principle (see Appendix \ref{example}). Equating \eqref{CON1} with \eqref{CON2}, we get
\begin{equation}\label{eq12}
\prod_{j=1}^N\frac{(x_i-\wt{x}_j)(x_i-{\hypotilde 0 x}_j+\lambda)}{(x_i-{\hypotilde 0 x}_j)(x_i-\wt{x}_j-\lambda)}=
\prod_{j=1}^N\frac{(x_i-\wh{x}_j)(x_i-{\hypohat 0 x}_j+\lambda)}{(x_i-{\hypohat 0 x}_j)(x_i-\wh{x}_j-\lambda)}\;,
\end{equation}
which is a consequence of \eqref{eqmotion1} and \eqref{eqmotion12}, and which expresses the compatibility with the set of O$\Delta$Es. 
\begin{prop}
 The eigenvalues $x_1(n,m), \dots, x_N(n,m)$ of the $N\times N$ matrix 
\begin{subequations}
\begin{eqnarray}\label{EXACT}
\bsY(n,m)&=&(p\bI+\bLam)^{-n}(q\bI+\bLam)^{-m}\bsY(0,0)(p\bI+\bLam)^{n}(q\bI+\bLam)^{m}\nn\\
&&-np\lambda(p\bI+\bLam)^{-1}-mq\lambda(q\bI+\bLam)^{-1}\  
\end{eqnarray} 
in which the initial value matrix $\bsY(0,0)$ is subject to the condition 
\begin{equation}\label{EXACT2}
[\bsY(0,0)\,,\,\bLam]=\lambda\bLam + {\rm rank\,1} \  , 
\end{equation}
\end{subequations} 
obey simultaneously both the discrete-time Ruijsenaars-Schneider systems given by \eqref{eqmotion1} and \eqref{eqmotion12} as well as the systems of constraint equations given by \eqref{CON1} and \eqref{CON2}.
\end{prop}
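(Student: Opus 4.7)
The plan is to reduce the statement to the two single-flow propositions already established, exploiting the fact that $\bLam$ (and hence $(p\bI+\bLam)$ and $(q\bI+\bLam)$) is diagonal, so that all the conjugating factors in \eqref{EXACT} mutually commute. This lets us rewrite, for any fixed $m$,
\[
\bsY(n,m) = (p\bI+\bLam)^{-n}\,\bsY(0,m)\,(p\bI+\bLam)^{n} - np\lambda(p\bI+\bLam)^{-1}\;,
\]
where $\bsY(0,m)$ is exactly the matrix built by the $m$-flow from $\bsY(0,0)$ as in \eqref{Ym2}, and symmetrically the roles of $n,p$ and $m,q$ may be swapped to obtain the analogous decomposition in the other direction.

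The first step is to verify that the rank-one commutator condition \eqref{EXACT2} propagates along the $m$-direction, so that $\bsY(0,m)$ can serve as a legitimate initial datum for the $n$-flow. This follows from a short direct computation: conjugation by diagonal matrices leaves the commutator with $\bLam$ unchanged, and the added term $-mq\lambda(q\bI+\bLam)^{-1}$ commutes with $\bLam$, so $[\bsY(0,m),\bLam]=[\bsY(0,0),\bLam]$. Invoking Proposition 2.1 then yields the discrete equations of motion \eqref{eqmotion1} in the $n$-direction; by the symmetric regrouping, Proposition 2.2 yields \eqref{eqmotion12}.

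For the constraints, the key observation is that the auxiliary variables $h_i$ are attached to the common Lax matrix $\boldsymbol{L}_\kappa$, and hence the squared values $h_i^2$ resolved via the $n$-flow (formula \eqref{heqsa}) must coincide with those resolved via the $m$-flow (formula \eqref{keqsa}). Equating the two and cancelling the common factor $\prod_j(x_i-x_j+\lambda)/\prod_{j\ne i}(x_i-x_j)$ gives directly
\[
\frac{p}{q}=\prod_{j=1}^{N}\frac{(x_i-\wh{x}_j-\lambda)(x_i-\wt{x}_j)}{(x_i-\wt{x}_j-\lambda)(x_i-\wh{x}_j)}\;,
\]
which is \eqref{CON1}. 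The backward constraint \eqref{CON2} is obtained by the same argument applied to the backward-shifted resolutions \eqref{heqsb} and its $m$-flow counterpart, equivalently by performing the substitutions $n\to n-1$ and $m\to m-1$ and equating the resulting expressions for $h_i^2$.

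The principal obstacle I foresee is the careful bookkeeping of the rank-one structure: one must check that a \emph{single} initial matrix $\bsY(0,0)$ satisfying \eqref{EXACT2} generates valid trajectories for \emph{both} flows simultaneously, in particular that the rank-one perturbation of $[\bsY(0,0),\bLam]$ is compatible with both single-flow commutator conditions \eqref{constraint(n)} and \eqref{constraint(m)} (which carry opposite signs on the $\lambda\bLam$ term). Once this compatibility of the initial data is established, all remaining steps reduce to repeated applications of the Lagrange interpolation formula of Lemma 2.1.
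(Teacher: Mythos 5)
Your reduction of the equations of motion to the two single-flow propositions is sound: since $(p\bI+\bLam)$ and $(q\bI+\bLam)$ are commuting diagonal matrices, $\bsY(n,m)$ is indeed the $n$-flow launched from $\bsY(0,m)$ and vice versa, and the commutator condition survives because $[D^{-1}\bsY(0,0)D,\bLam]=D^{-1}[\bsY(0,0),\bLam]D$ for diagonal $D$ --- note this conjugates the rank-one part rather than leaving the commutator literally unchanged, as you claim, but the form $\pm\lambda\bLam+{\rm rank}\,1$ is preserved, which is all you need. (The sign discrepancy you flag between \eqref{constraint(n)} and \eqref{constraint(m)} is resolved in Appendix A, where both flows are governed by the single spatial relation $\lambda\bLam+[\bsY,\bLam]=\boldsymbol r\boldsymbol s^T$; the $+\lambda\bLam$ in \eqref{constraint(m)} and \eqref{EXACT2} is best read as a misprint for $-\lambda\bLam$, so there is only one condition on $\bsY(0,0)$ and no compatibility problem. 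You identify this as the principal obstacle but leave it unresolved.)

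The genuine gap is in your derivation of the constraints. Equating the two resolutions \eqref{heqsa} and \eqref{keqsa} of $h_i^2$ and cancelling the common factor gives exactly \eqref{CON1} --- which means the assertion ``the $h_i^2$ resolved via the $n$-flow coincide with those resolved via the $m$-flow'' is logically \emph{equivalent} to the constraint you are trying to prove, not a premise you may assume. The constraints are genuinely two-dimensional (they couple $\wt{x}_j$ and $\wh{x}_j$ at a corner) and cannot follow from Propositions 2.1 and 2.2 applied separately, since each of those controls the dynamics along a single direction only; your appeal to ``the common Lax matrix $\boldsymbol L_\kappa$'' points in the right direction but is precisely the thing that must be established. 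What is needed --- and what the paper supplies in Appendix A --- is the existence of a single vector $h(n,m)$ serving both temporal Lax matrices at every lattice site: one shows that $[\bsY(n,m),\bLam]+\lambda\bLam=\boldsymbol r(n,m)\,\boldsymbol s^T(n,m)$ holds at every $(n,m)$ with $\boldsymbol r,\boldsymbol s$ evolving by \eqref{rs} in both directions, recovers $hh^T$ by conjugating $\boldsymbol r\boldsymbol s^T$ with the diagonalizer $\bU(n,m)$, and only then do \eqref{heqsa} and \eqref{keqsa} compute one and the same object, so that equating them is legitimate. With that step inserted your argument closes and is in substance the same as the paper's.
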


In order to make a connection with an initial value problem, we mention that the initial value matrix 
$\bsY(0,0)$ can be obtained from the diagonal matrix of initial values $\bsX(0,0)$ by a similarity transformation with a matrix $\bU(0,0)$ 
which is an invertible matrix diagonalizing the initial Lax matrix $\bL(0,0)$. To find the latter, we need the initial values 
$x_i(0,0)$, $x_i(1,0)$ and $x_i(0,1)$, ($i=1,\dots,N$)\footnote{The description of the initial value problem can be imposed the same fashion with the CM case \cite{Sikarin1}}. We note that the secular problem can, hence, be reformulated as one for the following 
matrix 
\begin{equation}\label{XLK}
\boldsymbol{\mathcal X}(n,m)=\bsX(0,0)-np\lambda \left(p\bI+\bL(0,0)\right)^{-1}-mq\lambda
\left(\bI+\bK(0,0)\right)^{-1}\  ,  
\end{equation} 
and hence the solution is provided by the roots of the characteristic equation:
\begin{equation}\label{Sec}
P_{\boldsymbol{\mathcal X}}(x)=\det(x \boldsymbol I- \boldsymbol{\mathcal{X}}(n,m))=\prod\limits_{i=1}^N(x-x_i(n,m))\;. 
\end{equation}

\noindent 
\textbf{Remark 1:} We would like to mention that in the more general case where  $p$ and $q$ may depend 
nontrivially on $n$ and $m$, they should be subject to compatibility relations between \eqref{Yn2} and \eqref{Ym2}, 
which produces the conditions
\[
\wh{p}q=\wt{q}p\;\;\;\;\mbox{and}\;\;\;\wh{p}+q=\wt{q}+p\;.
\]
From these two equations, the only possible answers would be $p=p(n)$ and $q=q(m)$ implying that $p$ and $q$ can only 
depend on the discrete variable associated with their own respective directions. 
\vspace{.2cm} 

\noindent 
\textbf{Remark 2:} 
In order to understand why we can regard the model described in this section as relativistic version of the 
discrete-time Calogero-Moser system, we perform the non-relativistic limit which is obtained by letting the 
parameter $\lambda\rightarrow 0$. To implement the limit, we note that as a 
function of $\lambda$ the variable $h_i^2$ as given in \eqref{heqsa} behave as:   
\begin{equation}
h_i^2\rightarrow -p\lambda \left[ 1+\lambda\mathsf{p}_i+\mathcal{O}(\lambda^2)\, \right] \;,
\end{equation}
where $\mathsf{p}_i$ are momenta for the discrete-time Calogero-Moser system \cite{FP} given by
\begin{equation}\mathsf{p}_i=\sum_{j=1\atop j \neq i}^N\frac{1}{x_i-x_j}-\sum_{j=1}^N\frac{1}{x_i-\wt{x}_j}\;.\end{equation}
The spatial Lax matrix $\boldsymbol L_\kappa$ given in \eqref{LM1} becomes
\begin{equation}
\boldsymbol L_\kappa\rightarrow-p\boldsymbol{I}-p\lambda\left( \frac{1}{\kappa}\boldsymbol {E}+\boldsymbol L_{CM}\right)+\mathcal{O}(\lambda^2)\;,
\end{equation}
in which $\boldsymbol {E}=\sum_{i,j} E_{ij}$ and 
where $\boldsymbol L_{CM}$ is the spatial Lax matrix for the discrete-time Calogero-Moser system \cite{FP} given by
\begin{equation}
\boldsymbol L_{CM}=\sum_{i=1}^N\mathsf{p}_iE_{ii}+\sum_{i \neq j}^N\frac{E_{ij}}{x_i-x_j}\;.
\end{equation}
The temporal Lax matrix $\boldsymbol M_\kappa$ given in \eqref{aLM2} expands  as
\begin{equation}
\boldsymbol M_\kappa\rightarrow-p\lambda\left( \frac{1}{\kappa}\boldsymbol {E}+\boldsymbol M_{CM}\right)+\mathcal{O}(\lambda^2)\;,
\end{equation}
where $\boldsymbol M_{CM}$ is the temporal Lax matrix for the discrete-time Calgoero-Moser system \cite{FP} given by
\begin{equation}
\boldsymbol M_{CM}=-\sum_{i,j=1}^N\frac{E_{ij}}{\wt{x}_i-x_j}\;.
\end{equation}
Thus, as $\lambda\rightarrow 0$, the compatibility \eqref{coM} produces
\begin{equation}
\wt{\boldsymbol L}_{CM}\;\frac{\boldsymbol {E}}{\kappa}+\frac{\boldsymbol {E}}{\kappa}\;\boldsymbol M_{CM}+\wt{\boldsymbol L}_{CM}\boldsymbol M_{CM}=
{\boldsymbol M}_{CM}\;\frac{\boldsymbol {E}}{\kappa}+\frac{\boldsymbol {E}}{\kappa}\;\boldsymbol L_{CM}+{\boldsymbol M}_{CM}\boldsymbol L_{CM}\;
\end{equation}
consequently implying
\begin{eqnarray}
\wt{\boldsymbol L}_{CM}\boldsymbol M_{CM}&=&{\boldsymbol M}_{CM}\boldsymbol L_{CM}\;,\\
\left(\wt{\boldsymbol L}_{CM}-\boldsymbol M_{CM}\right)\boldsymbol {E}&=&\boldsymbol {E}\left(\boldsymbol M_{CM}-\boldsymbol L_{CM} \right)\;.
\end{eqnarray}
These two equations give what we know as the discrete-time equations of motion corresponding to the Calogero-Moser system \cite{FP}.

With the suitable choices of $\boldsymbol\Lambda\rightarrow -e^{-\lambda\boldsymbol\Lambda_{CM}}$, 
$p\rightarrow e^{-\lambda p_{CM}}$ and $q\rightarrow e^{-\lambda q_{CM}}$, up to order 
$\mathcal{O}(\lambda)$, the exact solution \eqref{EXACT} becomes
\begin{equation}
\boldsymbol{Y}(n,m)\rightarrow \boldsymbol Y(0,0)-\frac{n}{p_{CM}\boldsymbol{I}+\boldsymbol\Lambda_{CM}}-\frac{m}{q_{CM}\boldsymbol{I}+\boldsymbol\Lambda_{CM}}\;,
\end{equation}
and  in oder $\mathcal{O}(\lambda)$, \eqref{EXACT2} yields 
\begin{equation}\label{EXACT23}
[\bsY(0,0)\,,\,\bLam_{CM}]=\boldsymbol{I}+ {\rm rank\,1} \  .
\end{equation}
These two equations are just identical to the defining relations for the exact solution for the 
discrete-time Calogero-Moser system \cite{Sikarin1}. Thus, both the discrete Lax representation as well 
as the solution for the discrete CM model is obtained through the above limits from the 
discrete-time Ruijsenaars-Schneider model. In the continuum case the non-relativistic limit of the 
Ruijsenaars model was discussed in \cite{Braden}.


\section{The Lagrangian 1-form and the closure relation}\label{Lagrange}
\setcounter{equation}{0} 

In this section we consider the Lagrange formulation of the discrete multi-time RS model and show that it 
possesses a Lagrange 1-form structure.
In the CM case \cite{Sikarin1}, we obtained Lagrangians 1-form structure through the connection of the Lax 
representation. Here we also have the Lagrangian 1-form structure for the RS system, but the establishment is more 
difficult as connection through the Lax representation is no longer relevant. In this Section, we will first derive the 
Lagrangian 1-form for the discrete-time RS system and then establish the closure relation. 
%
%
It is easy to show that the actions corresponding to equations of motion \eqref{eqmotion1} and \eqref{eqmotion12} are given by
\begin{subequations}\label{actions}\begin{eqnarray}
{S}_{(n)}&=&\sum_{n}\mathscr{L}_{(n)}(\boldsymbol{x}(n),\boldsymbol{x}(n+1))\;,\label{actionn}\\
{S}_{(m)}&=&\sum_{m}\mathscr{L}_{(m)}(\boldsymbol{x}(m),\boldsymbol{x}(m+1))\;,\label{actionm}
\end{eqnarray}\end{subequations} 
where
\begin{subequations}\label{Lagrangians} \begin{eqnarray}
\mathscr{L}_{(n)}&=&\sum_{i,j=1}^N\left( f(x_i-\wt{x}_j)-f(x_i-\wt{x}_j-\lambda)\right)-\frac{1}{2}\sum\limits_{\mathop {i,j = 1}\limits_{j \ne i} }^N\left(f(x_i-x_j+\lambda)\right.\nn\\
&&\left.+f(\wt x_i-\wt x_j+\lambda)\right)-\ln\left|p\right|(\Xi -\wt{\Xi }) \;,\label{2Lagn}\\
\mathscr{L}_{(m)}&=&\sum_{i,j=1}^N\left( f(x_i-\wh{x}_j)-f(x_i-\wh{x}_j-\lambda)\right)-\frac{1}{2}\sum\limits_{\mathop {i,j = 1}\limits_{j \ne i} }^N\left(f(x_i-x_j+\lambda)\right.\nn\\
&&\left.+f(\wh x_i-\wh x_j+\lambda)\right)-\ln\left|q \right|(\Xi -\wh{\Xi })\;,\label{2Lagm}
\end{eqnarray}\end{subequations} 
with $\Xi =\sum_{i=1}^Nx_i$ and the function $f(x)$ is given by $f(x)=x\ln(x)$. We assume in this and the following sections that the parameters $p$ and $q$ are constant.
The discrete-time Euler-Lagrange equations read
\begin{eqnarray}
\frac{\partial\mathscr{L}_{(n)}}{\partial{\wt{x}_i}}+\wt{\frac{\partial\mathscr{L}_{(n)}}{\partial{x_i}}}=0\;,\;\;\;\mbox{and}\;\;\;
\frac{\partial\mathscr{L}_{(m)}}{\partial{\wh{x}_i}}+\wh{\frac{\partial\mathscr{L}_{(m)}}{\partial{x_i}}}=0\;,\nn
\end{eqnarray}
which lead to \eqref{eqmotion1} and \eqref{eqmotion12}, respectively.

The additional terms $\ln\left|p\right|(\Xi -\wt{\Xi })$ in \eqref{2Lagn} and $\ln\left|q\right|(\Xi -\wh{\Xi })$ in \eqref{2Lagm}, containing the
differences of the centre of mass, are needed in order to account for the constraint
equations \eqref{CON1} and \eqref{CON2} as they arrive from the Euler-Lagrange (EL) equations on discrete curves, which is a
connected collection of line segments (i.e. elementary links on the lattice) with or without end
points (i.e. closed or non-closed), involving corners (vertices connecting line segments with
different directions).
\\
\begin{theorem}
For the Lagrangians \eqref{2Lagn} and \eqref{2Lagm}, the closure relation
\label{closure}
\begin{equation}\label{closure}
 \widehat{\mathscr{L}_{(n)}(\boldsymbol{x},\widetilde{\boldsymbol{x}})} - \mathscr{L}_{(n)}(\boldsymbol{x},\widetilde{\boldsymbol{x}}) -
 \widetilde{\mathscr{L}_{(m)}(\boldsymbol{x},\widehat{\boldsymbol{x}})} + \mathscr{L}_{(m)}(\boldsymbol{x},\widehat{\boldsymbol{x}}) = 0\;,
\end{equation}
holds on the solutions of the equations of motion \eqref{eqmotion1} and \eqref{eqmotion12} as well as the constraint equations \eqref{CON1} and 
\eqref{CON2}.
\end{theorem}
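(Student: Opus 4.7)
I would expand the LHS of \eqref{closure} directly using the explicit Lagrangians \eqref{2Lagn}--\eqref{2Lagm}. Since by Proposition~2.3 the two discrete flows commute on solutions, $\wh{\wt x}_j = \wt{\wh x}_j$; this causes the $\mathscr B(\wh{\wt x}) := \sum_{i\ne j} f(\wh{\wt x}_i - \wh{\wt x}_j + \lambda)$ contributions appearing in both $\widehat{\mathscr L_{(n)}}$ and $\widetilde{\mathscr L_{(m)}}$ to cancel pairwise. What remains then splits naturally into three pieces: (i) a four-corner combination of mixed $f$-differences $f(y) - f(y-\lambda)$ whose arguments pair tilde- with hat-shifted positions; (ii) a residual $\mathscr B(\wt x) - \mathscr B(\wh x)$ from the single-time sums; (iii) a centre-of-mass correction $-\ln|p|(\wh\Xi - \Xi - \wh{\wt\Xi} + \wt\Xi) + \ln|q|(\wt\Xi - \Xi - \wh{\wt\Xi} + \wh\Xi)$.

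The main step is to show that (i)--(iii) cancel on solutions. Taking the logarithm of the constraint equation \eqref{CON1} and summing over the free particle index $i$ yields
\[
N\ln|p/q| = \sum_{i,j=1}^N \ln\left|\frac{(x_i - \wh x_j - \lambda)(x_i - \wt x_j)}{(x_i - \wt x_j - \lambda)(x_i - \wh x_j)}\right|,
\]
with an analogous identity obtained from \eqref{CON2} at the opposite corner $\wh{\wt x}$ of the plaquette. The right-hand sides of these two identities are precisely the logarithmic part of the mixed four-corner combination in (i). The resolved expressions \eqref{heqs}, \eqref{keqs} for $h_i^2$, $\wt h_i^2$, $\wh h_i^2$ --- which are equivalent to the equations of motion \eqref{eqmotion1}, \eqref{eqmotion12} --- then supply on each individual corner the relations between $\mathscr B$-type products and $\ln|p|$, $\ln|q|$ needed to absorb the residual (ii) and balance the centre-of-mass correction (iii).

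The principal obstacle is that $f(y) = y\ln y$ is not purely logarithmic, so the constraint identities alone cannot handle the linear $y$-prefactors appearing in (i). A systematic way to manage this bookkeeping is to use the decomposition
\[
f(y) - f(y-\lambda) = \lambda\ln y + (y-\lambda)\ln\frac{y}{y-\lambda},
\]
in which the first summand is a pure logarithm matched directly by the constraint identities of the previous paragraph, while the prefactor $y-\lambda$ in the remainder matches the pole-zero structure of the $h_i^2$ products \eqref{heqs}, \eqref{keqs}. Once this decomposition is inserted and the double sums over $i,j$ are rearranged, the closure relation reduces to a telescoping identity that vanishes term by term. As a consistency check, the whole computation can alternatively be carried out on the explicit solution \eqref{EXACT}--\eqref{EXACT2}, where all four Lagrangians can in principle be evaluated in closed form as functions of the eigenvalues of the matrices built from $\bLam$, $\bsY(0,0)$, $p$ and $q$.
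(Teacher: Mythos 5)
Your overall strategy is the same as the paper's: expand the left-hand side of \eqref{closure} using \eqref{2Lagn}--\eqref{2Lagm}, cancel the terms at the opposite corners of the plaquette using commutativity of the flows, and then absorb the particle-position-weighted logarithms by means of the product/logarithmic forms of the equations of motion \eqref{eqmotion1}, \eqref{eqmotion12} and the constraints \eqref{CON1}, \eqref{CON2}, with the centre-of-mass terms matching the $\ln|p|$, $\ln|q|$ contributions. Your decomposition $f(y)-f(y-\lambda)=\lambda\ln y+(y-\lambda)\ln\frac{y}{y-\lambda}$ is a legitimate alternative to the paper's grouping by single-particle prefactors, and the identification of the three pieces (i)--(iii) is correct.

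There is, however, a genuine gap at the final step. After all four per-particle relations (two equations of motion and two constraints, each taken in logarithmic form and attached to the appropriate corner of the plaquette) have been used, one is \emph{not} left with a sum that telescopes term by term. What remains --- in the paper's computation it is the last line of \eqref{closureproof2} --- is a $\lambda$-weighted double sum of the form
\begin{equation*}
\lambda\sum_{i,j=1}^N\left(\ln\left|\frac{(\wh{x}_i-\wh{\wt x}_j-\lambda)(x_i-\wh{x}_j-\lambda)}{(\wt{x}_i-\wh{\wt x}_j-\lambda)(x_i-\wt{x}_j-\lambda)}\right|-\ln\left|\frac{\wh{x}_i-\wh{x}_j+\lambda}{\wt{x}_i-\wt{x}_j+\lambda}\right|\right),
\end{equation*}
whose second factor involves equal-time differences $\wh{x}_i-\wh{x}_j+\lambda$ and $\wt{x}_i-\wt{x}_j+\lambda$ summed over \emph{all} pairs $i,j$. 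These do not occur in any of the constraint products \eqref{CON1}--\eqref{CON2}, even after taking the product over the free index $i$ as you propose, nor do they follow from the resolved expressions \eqref{heqs}, \eqref{keqs} by a term-by-term cancellation: the relevant identity is a global product (determinant/resultant) identity of the type $\prod_{i,j}(\wh{x}_i-\wh{x}_j+\lambda)/\prod_{i,j}(\wt{x}_i-\wt{x}_j+\lambda)=\prod_{i,j}(\cdots-\lambda)/(\cdots-\lambda)$, and the paper establishes it only by appealing to the explicit solution \eqref{EXACT} (as it also does for $\wt\Xi-\wh{\wt\Xi}-\Xi+\wh\Xi=0$). In your write-up the exact solution appears only as an optional ``consistency check,'' so the vanishing of this residual double sum --- which is precisely the non-trivial mechanism that makes the RS closure relation different from the CM case --- is asserted rather than proved. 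To close the argument you would need either to invoke the exact solution as the paper does, or to supply an independent proof of the product identity for the equal-time $+\lambda$ differences.
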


\begin{proof} \eqref{closure} can be written in the form
\begin{subequations}
\begin{eqnarray}\label{closureproof1}
 && \widehat{\mathscr{L}_{(n)}(\boldsymbol{x},\widetilde{\boldsymbol{x}})} - \mathscr{L}_{(n)}(\boldsymbol{x},\widetilde{\boldsymbol{x}}) -
 \widetilde{\mathscr{L}_{(m)}(\boldsymbol{x},\widehat{\boldsymbol{x}})} + \mathscr{L}_{(m)}(\boldsymbol{x},\widehat{\boldsymbol{x}})\;\nn\\
 &&=\sum_{i,j=1}^N\wh{x}_i\left(\ln\left|\frac{\wh{x_i}-\wh{\wt x}_j}{\wh{x}_i-\wh{\wt{x}}_j-\lambda} \frac{\wh{x_i}-x_j+\lambda}{\wh{x}_i-x_j}\right| -\ln\left| \frac{\wh{x}_i-\wh{x}_j+\lambda}{\wh{x}_i-\wh{x}_j-\lambda}\right|\right)\nn\\
 &&-\sum_{i,j=1}^N\wt{x}_i\left(\ln\left|\frac{\wt{x_i}-\wh{\wt x}_j}{\wt{x}_i-\wh{\wt{x}}_j-\lambda} \frac{\wt{x_i}-x_j+\lambda}{\wt{x}_i-x_j}\right| -\ln\left| \frac{\wt{x}_i-\wt{x}_j+\lambda}{\wt{x}_i-\wt{x}_j-\lambda}\right|\right)\nn\\
&&+\sum_{i,j=1}^N\left(\wh{\wt{x}}_i\ln\left|\frac{\wt{x}_i-\wh{\wt{x}}_j-\lambda}{\wt{x_i}-\wh{\wt x}_j} \frac{\wt{x}_i-\wh{\wt{x}}_j}{\wt{x_i}-\wh{\wt{x}}_j-\lambda}\right| -x_i\ln\left| \frac{x_i-\wt{x}_j}{x_i-\wt{x}_j-\lambda}\frac{x_i-\wh{x}_j-\lambda}{x_i-\wh{x}_j}\right|\right)\nn\\
&&+\left(\ln\left|q\right|-\ln\left| p\right|\right)\left( \wt\Xi-\wh{\wt\Xi}-\Xi+\wh\Xi\right)\nn\\
&&+\lambda\sum_{i,j=1}^N\left( \ln\left| \frac{\wh{x}_i-\wh{\wt x}_j-\lambda}{\wt{x}_i-\wh{\wt x}_j-\lambda}\frac{x_i-\wh{x}_j-\lambda}{x_i-\wt{x}_j-\lambda}\right|-\ln\left|\frac{\wh{x}_i-\wh{x}_j+\lambda}{\wt{x}_i-\wt{x}_j+\lambda} \right|\right)\;,
\end{eqnarray}
Using \eqref{eqmotion1}, \eqref{eqmotion12}, \eqref{CON1} and \eqref{CON2}, we have
\begin{eqnarray}\label{closureproof2}
 &&\widehat{\mathscr L_{(n)}(\boldsymbol{x},\widetilde{\boldsymbol{x}})} - \mathscr L_{(n)}(\boldsymbol{x},\widetilde{\boldsymbol{x}}) -
 \widetilde{\mathscr L_{(m)}(\boldsymbol{x},\widehat{\boldsymbol{x}})} + \mathscr L_{(m)}(\boldsymbol{x},\widehat{\boldsymbol{x}})\;\nn\\
 &&=\sum_{i=1}^N(\wh{x}_i+\wt{x}_i-\wh{\wt x}_i-x_i)\ln\left| \frac{p}{q  }\right|
 +\ln\left| \frac{q }{p }\right|\left( \wt\Xi-\wh{\wt\Xi}-\Xi+\wh\Xi\right)\nn\\
&&+\lambda\sum_{i,j=1}^N\left( \ln\left| \frac{\wh{x}_i-\wh{\wt x}_j-\lambda}{\wt{x}_i-\wh{\wt x}_j-\lambda}\frac{x_i-\wh{x}_j-\lambda}{x_i-\wt{x}_j-\lambda}\right|-\ln\left|\frac{\wh{x}_i-\wh{x}_j+\lambda}{\wt{x}_i-\wt{x}_j+\lambda} \right|\right)\;.
\end{eqnarray}
Using the fact that the last line of \eqref{closureproof2} vanishes on the exact solution \eqref{EXACT} and $\wt\Xi-\wh{\wt\Xi}-\Xi+\wh\Xi=0$, then we have
\begin{eqnarray}\label{closureproof3}
 \widehat{\mathscr{L}_{(n)}(\boldsymbol{x},\widetilde{\boldsymbol{x}})} - \mathscr{L}_{(n)}(\boldsymbol{x},\widetilde{\boldsymbol{x}}) -
 \widetilde{\mathscr{L}_{(m)}(\boldsymbol{x},\widehat{\boldsymbol{x}})} + \mathscr{L}_{(m)}(\boldsymbol{x},\widehat{\boldsymbol{x}})=0\;.
\end{eqnarray}
\end{subequations}
\end{proof}
%
\begin{figure}[h]
\begin{center}
\begin{tikzpicture}[scale=0.5]
 \draw[->] (0,0) -- (8,0) node[anchor=west] {$n_{i}$};
 \draw[->] (0,0) -- (0,8) node[anchor=south] {$n_{j}$};
 \fill (4,1) circle (0.2);
 \draw (4,1) --(1,1)  -- (1,2) -- (3,2) -- (3,3) -- (4,3) -- (4,4) -- (5,4) -- (5,6) -- (6,6);
 \fill (6,6) circle (0.2);
 \draw (4,4)node[anchor=north west] {$\Gamma$};
\end{tikzpicture}
\begin{tikzpicture}[scale=0.5]
 \draw[white] (0,0) -- (6.5,0);
 \draw[white] (0,0) -- (0,6);
 \draw[->,thick] (2,4) -- (4,4);
\end{tikzpicture}
\begin{tikzpicture}[scale=0.5]
 \draw[->] (0,0) -- (8,0) node[anchor=west] {$n_{i}$};
 \draw[->] (0,0) -- (0,8) node[anchor=south] {$n_{j}$};
\fill (4,1) circle (0.2);
 \draw (4,1) --(1,1)  -- (1,2) -- (4,2) -- (4,4) -- (5,4) -- (5,6) -- (6,6);
 \draw[dashed] (3,2) -- (3,3) -- (4,3);
 \fill (6,6) circle (0.2);
  \draw(4,4)node[anchor=north west] {$\Gamma^{\prime}$};
\end{tikzpicture}
\end{center}
\caption{Deformation of the discrete curve $\Gamma$.}\label{curve_deformation}
\end{figure}
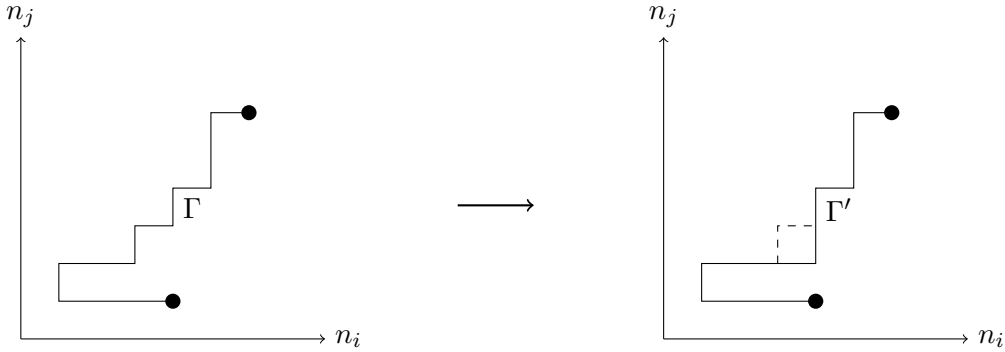
%
In \cite{Sikarin1} we described what we mean by the Lagrangian 1-form, but let us reiterate this here for the sake of self-contained of this paper. 
\\
\\
\textbf{Definition}.\emph{
Let $\boldsymbol{e}_{i}$ represent the unit vector in the lattice direction labeled by $i$ and let any position in the 
lattice be identified by the vector $\boldsymbol{n}$, so that an elementary shift in the lattice can be created by the
 operation $\boldsymbol{n}\mapsto\boldsymbol{n}+\boldsymbol{e}_{i}$. Since the Lagrangian depends on $\boldsymbol{x}$ and 
its elementary shift in one discrete direction, it can be associated with an oriented vector $\boldsymbol{e}_i$ on a curve 
$\Gamma_i(\boldsymbol{n})=(\boldsymbol{n},\boldsymbol{n}+\boldsymbol{e}_i)$, and we can treat these Lagrangians as defining 
a discrete 1-form $\mathscr{L}_i(\boldsymbol{n})$
\begin{equation}\label{eq:3.12}
\mathscr{L}_i(\boldsymbol{n})=\mathscr{L}_i(\boldsymbol{x}(\boldsymbol{n}),\boldsymbol{x}(\boldsymbol{n}+\boldsymbol{e}_{i})),
\end{equation}
which satisfies the following relation
\begin{eqnarray}\label{d}
&& \mathscr{L}_{i}(\boldsymbol{x}(\boldsymbol{n}+\boldsymbol{e}_{j}),
                                                      \boldsymbol{x}(\boldsymbol{n}+\boldsymbol{e}_{i}+\boldsymbol{e}_{j}))
                                     -\mathscr{L}_{i}(\boldsymbol{x}(\boldsymbol{n}),\boldsymbol{x}(\boldsymbol{n}+\boldsymbol{e}_{i}))\nn\\
                                   &&\;\;\;\;\;\;\;\;-\mathscr{L}_{j}(\boldsymbol{x}(\boldsymbol{n}+\boldsymbol{e}_{i}),
                                                      \boldsymbol{x}(\boldsymbol{n}+\boldsymbol{e}_{j}+\boldsymbol{e}_{i}))
                                     +\mathscr{L}_{j}(\boldsymbol{x}(\boldsymbol{n}),\boldsymbol{x}(\boldsymbol{n}+\boldsymbol{e}_{j}))=0\;.
\end{eqnarray}
Equation \eqref{d} represents the closure relation of the Lagrangian $1$-form for the RS system and it can be explicitly shown holding on the level of the equations of motion, and as well as constraints.}
\\
\\
Choosing a discrete curve $\Gamma$ consisting of connected elements $\Gamma_{i}$, we can define an action on the curve by summing up the contributions $\mathscr{L}_i$ from each of the oriented links $\Gamma_{i}$ in the curve, to get
\begin{equation}\label{eq:3.13}
S(\boldsymbol{x}(\textbf{n});\Gamma) = \sum\limits_{\boldsymbol{n} \in \Gamma } {\mathscr{L}_i(\boldsymbol{x}(\boldsymbol{n}),\boldsymbol{x}(\boldsymbol{n} +\boldsymbol{e}_i ))}. 
\end{equation}
The closure relation \eqref{d} is actually equivalent to the invariance of the action under local deformations of 
the curve. To see this, suppose we have an action $ S$ evaluated on a curve $\Gamma$, and we deform this 
(keeping end points fixed) to get a curve $\Gamma^{\prime}$ on which an action $ S^{\prime}$ is evaluated,
 such as in Figure \ref{curve_deformation}.
\\
\\
Then $ S^{\prime}$ is related to $ S$ by the following:
\begin{eqnarray}\label{eq:3.14ss}
 S^{\prime} & = &  S -\mathscr{L}_{i}(\boldsymbol{x}(\boldsymbol{n}+\boldsymbol{e}_{j}),
                                                      \boldsymbol{x}(\boldsymbol{n}+\boldsymbol{e}_{i}+\boldsymbol{e}_{j}))
                                     +\mathscr{L}_{i}(\boldsymbol{x}(\boldsymbol{n}),\boldsymbol{x}(\boldsymbol{n}+\boldsymbol{e}_{i}))\nn\\
                                  && +\mathscr{L}_{j}(\boldsymbol{x}(\boldsymbol{n}+\boldsymbol{e}_{i}),
                                                      \boldsymbol{x}(\boldsymbol{n}+\boldsymbol{e}_{j}+\boldsymbol{e}_{i}))
                                     -\mathscr{L}_{j}(\boldsymbol{x}(\boldsymbol{n}),\boldsymbol{x}(\boldsymbol{n}+\boldsymbol{e}_{j})).
\end{eqnarray}
Equation \eqref{eq:3.14ss} shows that the independence of the action under such a deformation is locally equivalent to the closure relation. The invariance of the action under the local deformation is a crucial aspect of the underlying variational principle.

The basic relations constituting the discrete multi-time EL equations were first given in 2011 in 
\cite{Rinthesis}\footnote{Chapter 3 \emph{The variational principle for Lagrangian 1-form} of \cite{Rinthesis} 
provides the system of actions on the elementary discrete curves as indicated in Fig.  \ref{curve12}, together 
with the corresponding EL equations. In a later paper \cite{Suris}, which appeared after these results were presented 
at the SIDE X (2012) meeting by the first author, these equations were restated as a Theorem and applied to a discrete-time Toda system.}  
and arise as the EL equations for actions on a set of basic curves given in Fig. \ref{curve12}. 
We now use the Lagrangian in \eqref{2Lagn} 
$\mathscr{L}_{(n)} $ and the Lagrangian in \eqref{2Lagm} $\mathscr{L}_{(m)}$ and derive the full set of EL equations for these basic curves and associated actions, and apply them to the 
case at hand of the discrete RS model. 
%
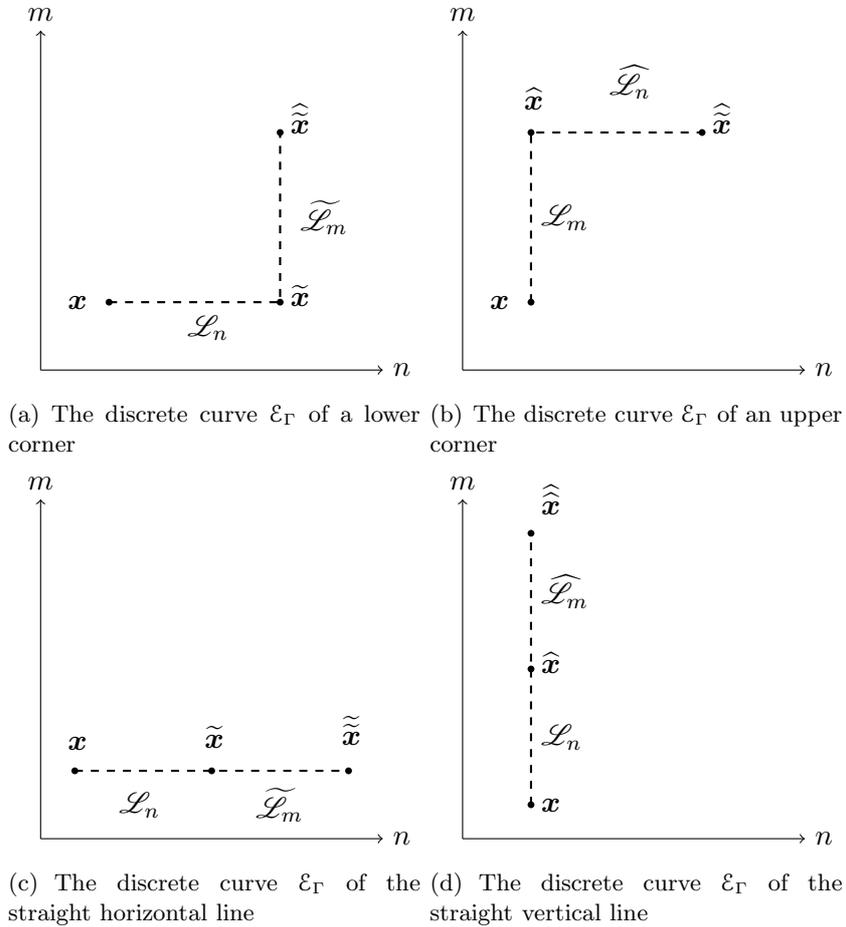
\begin{figure}[h]
\begin{center}
\subfigure[The discrete curve $\mathcal{E}_{\Gamma}$ of a lower corner]{
\begin{tikzpicture}[scale=0.45]
 \draw[->] (0,0) -- (10,0) node[anchor=west] {$n$};
 \draw[->] (0,0) -- (0,10) node[anchor=south] {$m$};
 \draw[thick,dashed] (2,2)--(7,2)--(7,7);
 \fill (2,2) circle (0.1);
 \fill (7,2) circle (0.1);
 \fill (7,7) circle (0.1);
 \draw (.5,2) node[anchor=west] {$\boldsymbol x$};
 \draw (7,2.2) node[anchor=west] {$\wt{\boldsymbol x}$};
 \draw (7,7.4) node[anchor=west] {$\wh{\wt{\boldsymbol x}}$};
 \draw (4,1.25) node[anchor=west] {$\mathscr{L}_{n}$};
 \draw (7.3,4.5) node[anchor=west] {$\wt{\mathscr{L}}_{m}$};
\end{tikzpicture}}
\subfigure[The discrete curve $\mathcal{E}_{\Gamma}$ of an upper corner]{
\begin{tikzpicture}[scale=0.45]
 \draw[->] (0,0) -- (10,0) node[anchor=west] {$n$};
 \draw[->] (0,0) -- (0,10) node[anchor=south] {$m$};
 \draw[thick,dashed] (2,2)--(2,7)--(7,7);
 \fill (2,2) circle (0.1);
 \fill (2,7) circle (0.1);
 \fill (7,7) circle (0.1);
 \draw (.5,2) node[anchor=west] {$\boldsymbol x$};
 \draw (1.5,8) node[anchor=west] {$\wh{\boldsymbol x}$};
 \draw (7,7.4) node[anchor=west] {$\wh{\wt{\boldsymbol x}}$};
 \draw (4,8.5) node[anchor=west] {$\wh{\mathscr{L}}_{n}$};
 \draw (2,4.5) node[anchor=west] {$\mathscr{L}_{m}$};
\end{tikzpicture}}
\subfigure[The discrete curve $\mathcal{E}_{\Gamma}$ of the straight horizontal line]{
\begin{tikzpicture}[scale=0.45]
 \draw[->] (0,0) -- (10,0) node[anchor=west] {$n$};
 \draw[->] (0,0) -- (0,10) node[anchor=south] {$m$};
 \draw[thick,dashed] (1,2)--(5,2)--(9,2);
 \fill (1,2) circle (0.1);
 \fill (5,2) circle (0.1);
 \fill (9,2) circle (0.1);
 \draw (.5,2.8) node[anchor=west] {$\boldsymbol x$};
 \draw (4.5,3) node[anchor=west] {$\wt{\boldsymbol x}$};
 \draw (8.5,3.2) node[anchor=west] {$\wt{\wt{\boldsymbol x}}$};
 \draw (2,1) node[anchor=west] {$\mathscr{L}_{n}$};
 \draw (6,1) node[anchor=west] {$\wt{\mathscr{L}}_{m}$};
\end{tikzpicture}}
\subfigure[The discrete curve $\mathcal{E}_{\Gamma}$ of the straight vertical line]{
\begin{tikzpicture}[scale=0.45]
 \draw[->] (0,0) -- (10,0) node[anchor=west] {$n$};
 \draw[->] (0,0) -- (0,10) node[anchor=south] {$m$};
 \draw[thick,dashed] (2,1)--(2,5)--(2,9);
 \fill (2,1) circle (0.1);
 \fill (2,5) circle (0.1);
 \fill (2,9) circle (0.1);
 \draw (2,1) node[anchor=west] {$\boldsymbol x$};
 \draw (2,5.2) node[anchor=west] {$\wh{\boldsymbol x}$};
 \draw (2,10) node[anchor=west] {$\wh{\wh{\boldsymbol x}}$};
 \draw (2,3) node[anchor=west] {$\mathscr{L}_{n}$};
 \draw (2,7.3) node[anchor=west] {$\wh{\mathscr{L}}_{m}$};
\end{tikzpicture}}
\end{center}
\caption{Simple discrete curves for $n$ and $m$ variables.}\label{curve12}
\end{figure}
\vspace{.3cm} 
\\
\textbf{case (a)}: The action for a discrete curve in Fig. \ref{curve12}(a) is
\begin{eqnarray}\label{DisEL}
S[\boldsymbol{x}]=\mathscr{L}_p(\boldsymbol{x},\wt{\boldsymbol{x}})+\wt{\mathscr{L}}_q(\wt{\boldsymbol{x}},\wh{\wt{\boldsymbol{x}}})\;.
\end{eqnarray}
We now vary the variable $\boldsymbol x \mapsto \boldsymbol x+\delta \boldsymbol x$ with the end points fixed: $\delta \boldsymbol x=0$ and $\delta \wh{\wt{\boldsymbol x}}=0$. Then the variation of the action is
\begin{eqnarray}\label{deltaS}
\delta S=\frac{\mathscr{L}_p}{\partial \boldsymbol{x}}\delta \boldsymbol{x}
+\frac{\mathscr{L}_p}{\partial \wt{\boldsymbol{x}}}\delta \wt{\boldsymbol{x}}
+\frac{\wt{\mathscr{L}}_q}{\partial \wt{\boldsymbol{x}}}\delta \wt{\boldsymbol{x}}
+\frac{\wt{\mathscr{L}}_q}{\partial \wh{\wt{\boldsymbol{x}}}}\delta \wh{\wt{\boldsymbol{x}}}\; 
=\left(\frac{\mathscr{L}_p}{\partial \wt{\boldsymbol{x}}}
+\frac{\wt{\mathscr{L}}_q}{\partial \wt{\boldsymbol{x}}}\right)\delta \wt{\boldsymbol{x}}\;.
\end{eqnarray}
The first and last terms vanish according to the condition on end points. The $\delta S=0$ once the coefficient of $\delta \wt{\boldsymbol x}$ is zero yielding
\begin{eqnarray}\label{ELnm}
\frac{\mathscr{L}_p(\boldsymbol{x},\wt{\boldsymbol{x}})}{\partial \wt{\boldsymbol{x}}}
+\frac{\wt{\mathscr{L}}_q(\wt{\boldsymbol{x}},\wh{\wt{\boldsymbol{x}}})}{\partial \wt{\boldsymbol{x}}}=0\;.
\end{eqnarray}
Using \eqref{2Lagn} and \eqref{2Lagm}, \eqref{ELnm} gives
\begin{equation}\label{eq12a}
\frac{p}{q }=\prod_{j=1}^N\frac{(x_i-{\hypohat 0 x}_j+\lambda)(x_i-{\hypotilde 0 x}_j)}{(x_i-{\hypotilde 0 x}_j+\lambda)(x_i-{\hypohat 0 x}_j)}\;.
\end{equation}
which is the constraint equations given in \eqref{CON1}.
\\
\\
\textbf{Case (b)}:
The action for a discrete curve in Fig. \ref{curve12}(b) is
\begin{eqnarray}\label{DisEL}
S[\boldsymbol{x}]=\mathscr{L}_q(\boldsymbol{x},\wh{\boldsymbol{x}})+\wh{\mathscr{L}}_p(\wh{\boldsymbol{x}},\wh{\wt{\boldsymbol{x}}})\;.
\end{eqnarray}
We now vary the variable $\boldsymbol x \mapsto \boldsymbol x+\delta \boldsymbol x$ with the end points fixed: $\delta \boldsymbol x=0$ and $\delta \wh{\wt{\boldsymbol x}}=0$. Then the variation of the action is
\begin{eqnarray}\label{deltaS2}
\delta S=\frac{\mathscr{L}_q}{\partial \boldsymbol{x}}\delta \boldsymbol{x}
+\frac{\mathscr{L}_q}{\partial \wh{\boldsymbol{x}}}\delta \wh{\boldsymbol{x}}
+\frac{\wh{\mathscr{L}}_p}{\partial \wh{\boldsymbol{x}}}\delta \wh{\boldsymbol{x}}
+\frac{\wh{\mathscr{L}}_p}{\partial \wh{\wt{\boldsymbol{x}}}}\delta \wh{\wt{\boldsymbol{x}}}\; 
=\left(\frac{\mathscr{L}_q}{\partial \wh{\boldsymbol{x}}}
+\frac{\wh{\mathscr{L}}_p}{\partial \wh{\boldsymbol{x}}}\right)\delta \wh{\boldsymbol{x}}\;.
\end{eqnarray}
The first and last terms vanish according to the condition end points. The $\delta S=0$ once the coefficient of $\delta \wt{\boldsymbol x}$ is zero yielding
\begin{eqnarray}\label{ELnm2}
\frac{\mathscr{L}_q(\boldsymbol{x},\wh{\boldsymbol{x}})}{\partial \wh{\boldsymbol{x}}}
+\frac{\wh{\mathscr{L}}_p(\wh{\boldsymbol{x}},\wh{\wt{\boldsymbol{x}}})}{\partial \wh{\boldsymbol{x}}}=0\;.
\end{eqnarray}
Using \eqref{2Lagn} and \eqref{2Lagm}, \eqref{ELnm2} gives
\begin{eqnarray}\label{CON22b} 
\frac{p}{q}&=&\prod_{j=1}^N\frac{(x_i-\wh{x}_j-\lambda)(x_i-\wt{x}_j)}{(x_i-\wt{x}_j-\lambda)(x_i-\wh{x}_j)}\;,
\end{eqnarray}
which is the constraint equations given in \eqref{CON2}.
\\
\\
\textbf{Case (c)}:
The action for a discrete curve in Fig. \ref{curve12}(c) is
\begin{eqnarray}\label{DisEL}
S[\boldsymbol{x}]=\mathscr{L}_p(\boldsymbol{x},\wt{\boldsymbol{x}})+\wt{\mathscr{L}}_p(\wt{\boldsymbol{x}},\wt{\wt{\boldsymbol{x}}})\;.
\end{eqnarray}
We now vary the variable $\boldsymbol x \mapsto \boldsymbol x+\delta \boldsymbol x$ with the end points fixed: $\delta \boldsymbol x=0$ and $\delta \wt{\wt{\boldsymbol x}}=0$. Then the variation of the action is
\begin{eqnarray}\label{deltaS2}
\delta S=\frac{\mathscr{L}_p}{\partial \boldsymbol{x}}\delta \boldsymbol{x}
+\frac{\mathscr{L}_p}{\partial \wt{\boldsymbol{x}}}\delta \wt{\boldsymbol{x}}
+\frac{\wt{\mathscr{L}}_p}{\partial \wt{\boldsymbol{x}}}\delta \wt{\boldsymbol{x}}
+\frac{\wt{\mathscr{L}}_p}{\partial \wt{\wt{\boldsymbol{x}}}}\delta \wh{\wt{\boldsymbol{x}}} 
=\left(\frac{\mathscr{L}_p}{\partial \wt{\boldsymbol{x}}}
+\frac{\wt{\mathscr{L}}_p}{\partial \wt{\boldsymbol{x}}}\right)\delta \wt{\boldsymbol{x}}\;.
\end{eqnarray}
The first and last terms vanish according to the condition end points. The $\delta S=0$ once the coefficient of $\delta \wt{\boldsymbol x}$ is zero yielding
\begin{eqnarray}\label{ELnm2c}
\frac{\mathscr{L}_p(\boldsymbol{x},\wt{\boldsymbol{x}})}{\partial \wt{\boldsymbol{x}}}
+\frac{\wt{\mathscr{L}}_p(\wt{\boldsymbol{x}},\wt{\wt{\boldsymbol{x}}})}{\partial \wt{\boldsymbol{x}}}=0\;.
\end{eqnarray}
Using \eqref{2Lagn} and \eqref{2Lagm}, \eqref{ELnm2c} gives
\begin{equation}\label{eqmotion1a}
\prod\limits_{\mathop {j = 1}\limits_{j \ne i} }^N \frac{(x_i-x_j+\lambda)}{(x_i-x_j-\lambda)}=\prod_{j=1}^N\frac{(x_i-\wt{x}_j)(x_i-{\hypotilde 0 x}_j+\lambda)}{(x_i-{\hypotilde 0 x}_j)(x_i-\wt{x}_j-\lambda)}\;.
\end{equation}
which is the equations of motion given by \eqref{eqmotion1}.
\\
\\
\textbf{Case (d)}:
The action for a discrete curve in Fig. \ref{curve12}(d) is
\begin{eqnarray}\label{DisEL}
S[\boldsymbol{x}]=\mathscr{L}_q(\boldsymbol{x},\wh{\boldsymbol{x}})+\wh{\mathscr{L}}_q(\wh{\boldsymbol{x}},\wh{\wh{\boldsymbol{x}}})\;.
\end{eqnarray}
We now vary the variable $\boldsymbol x \mapsto \boldsymbol x+\delta \boldsymbol x$ with the end points fixed: $\delta \boldsymbol x=0$ and $\delta \wh{\wh{\boldsymbol x}}=0$. Then the variation of the action is
\begin{eqnarray}\label{deltaS2}
\delta S=\frac{\mathscr{L}_q}{\partial \boldsymbol{x}}\delta \boldsymbol{x}
+\frac{\mathscr{L}_q}{\partial \wh{\boldsymbol{x}}}\delta \wh{\boldsymbol{x}}
+\frac{\wh{\mathscr{L}}_q}{\partial \wh{\boldsymbol{x}}}\delta \wh{\boldsymbol{x}}
+\frac{\wh{\mathscr{L}}_q}{\partial \wh{\wh{\boldsymbol{x}}}}\delta \wh{\wh{\boldsymbol{x}}} 
=\left(\frac{\mathscr{L}_q}{\partial \wh{\boldsymbol{x}}}
+\frac{\wh{\mathscr{L}}_q}{\partial \wh{\boldsymbol{x}}}\right)\delta \wh{\boldsymbol{x}}\;.
\end{eqnarray}
The first and last terms vanish according to the condition end points. The $\delta S=0$ once the coefficient of $\delta \wh{\boldsymbol x}$ is zero yielding
\begin{eqnarray}\label{ELnm2d}
\frac{\mathscr{L}_q(\boldsymbol{x},\wh{\boldsymbol{x}})}{\partial \wh{\boldsymbol{x}}}
+\frac{\wh{\mathscr{L}}_q(\wh{\boldsymbol{x}},\wh{\wh{\boldsymbol{x}}})}{\partial \wh{\boldsymbol{x}}}=0\;.
\end{eqnarray}
Using \eqref{2Lagn} and \eqref{2Lagm}, \eqref{ELnm2d} gives
\begin{equation}\label{eqmotion1b}
\prod\limits_{\mathop {j = 1}\limits_{j \ne i} }^N \frac{(x_i-x_j+\lambda)}{(x_i-x_j-\lambda)}=\prod_{j=1}^N\frac{(x_i-\wh{x}_j)(x_i-{\hypohat 0 x}_j+\lambda)}{(x_i-{\hypohat 0 x}_j)(x_i-\wh{x}_j-\lambda)}\;.
\end{equation}
which is the equations of motion given by \eqref{eqmotion12}.
\\
\\
In conclusion, the discrete variational principle which comprises the basic set of equations 
\eqref{ELnm}, \eqref{ELnm2}, \eqref{ELnm2c} and \eqref{ELnm2d},   produces the system of EL equations and 
constraints for the rational discrete-time RS model. Furthermore, the closure relation \eqref{closure} 
expresses the compatibility of these four basic equations, and as a consequence on the solutions of the variational 
system the action is stationary under deformations of any discrete curve (with fixed end points) such as 
indicated in Fig. \ref{curve_deformation}. In Appendix \ref{example}, we demonstrate how to derive explicitly 
the discrete Euler-Lagrange equation for some specific discrete curves.

\section{The semi-continuous limit: The skew limit}\label{skewlimit}
\setcounter{equation}{0} 
In this Section, we study a continuum analogue of a previous construction in Section \ref{exactsolution} by considering a particular semi-continuous limit.
Since the exact solution \eqref{EXACT} contains two discrete variables $n$ and $m$, we could perform a continuum limit on one of these 
variables separately, while leaving the other discrete variable intact, and thus obtain a semi-continuous equation with one remaining discrete 
and two continuous independent variables. Alternatively, we can first perform a change of independent variables on the lattice and subsequently perform 
the limit on one of the new variables. \emph{The advantage of the latter approach over the former is that it often leads in a more direct way to a hierarchy of 
higher order flows}. Adopting the latter approach in this section, we use a new discrete variable $\mathsf N :=n+m$, and perform the 
transformation on the dependent variables by setting $x(n,m)\mapsto \mathsf x(\mathsf N,m)=:{\mathsf x}$, which leads to the following expressions for 
the shifted variables:
\begin{eqnarray}\label{changingvariables}
x=x(n+1,m) &\mapsto & \mathsf x(\mathsf N+1,m)=:\wt{\mathsf x}\;, \nn\\
\widehat x=x(n,m+1) &\mapsto & \mathsf x(\mathsf N+1,m+1)=:\widehat{\widetilde{ {\mathsf x}}}\;,\nn\\ 
\widetilde x=x(n+1,m+1) &\mapsto & \mathsf x(\mathsf N+2,m+1)=:\widehat{\widetilde{\widetilde {\mathsf x}}}\;.\nn
\end{eqnarray}
Rearranging the terms in \eqref{EXACT}, we have
\begin{eqnarray}\label{EXACTSKEW}
\mathsf {\bsY}(\mathsf N,m)&=&(p\bI+\bLam)^{-\mathsf N}\left(\frac{q\bI+\bLam}{p\bI+\bLam}\right)^{-m}\left[\bsY(0,0)-\frac{\mathsf{N}p\lambda}{p\bI+\bLam}\right.\nn\\
&&\left.+m\lambda\left(\frac{p}{p\bI+\bLam}-\frac{q}{q\bI+\bLam}\right)\right](p\bI+\bLam)^{\mathsf N}\left(\frac{q\bI+\bLam}{p\bI+\bLam}\right)^{m}\;.  
\end{eqnarray} 
We perform the limit $n\rightarrow -\infty$, $m \rightarrow \infty$, $\varepsilon \rightarrow 0$ while
keeping $\mathsf N$ fixed and setting $\varepsilon=p-q$,
such that $\varepsilon m=\tau$ remains finite. Focusing on the penultimate factor in \eqref{EXACTSKEW} we have that
\begin{equation}\label{eq:4.2}
\lim\limits_{\mathop {m \rightarrow \infty} \limits_{\mathop {\varepsilon \rightarrow 0}\limits_{\varepsilon
 m \rightarrow \tau}} }\left(1-\frac{\varepsilon}{p\bI+\bLam} \right)^m =\lim\limits_{m \rightarrow \infty}\left( 1-\frac{\tau}{m(p\bI+\bLam)}
\right)^m=e^{-\frac{\tau}{p\bI+\bLam}}, 
\end{equation}
so that the exact solution takes the form
\begin{eqnarray}\label{EXACTSKEW2}
\mathsf {\bsY}(\mathsf N,\tau)=(p\bI+\bLam)^{-\mathsf N}e^{\frac{\tau}{p\bI+\bLam}}\left[\bsY(0,0)-\frac{\mathsf{N}p\lambda}{p\bI+\bLam}
+\frac{\tau\lambda\bLam}{(p\bI+\bLam)^2}\right](p\bI+\bLam)^{\mathsf N}e^{-\frac{\tau}{p\bI+\bLam}}\;. 
\end{eqnarray} 
This equation represents the full solution after taking the skew limit. The position of the particles $\mathrm x_i(\mathsf N,\tau )$ can be determined by computing the eigenvalues of \eqref{EXACTSKEW2}.
\subsection{The skew limit on equations of motion and constraints}
We first rewrite the equations of motion \eqref{eqmotion1}, taking \emph{$p$ to be constant}, in terms of the variables $(\mathsf{N},m)$ as follows 
\begin{eqnarray}
\sum\limits_{\mathop {j = 1}\limits_{j \ne i} }^N\left( \ln(\mathrm x_i-\mathrm x_j+\lambda)-\ln(\mathrm  x_i-\mathrm  x_j-\lambda)\right)&=&\sum_{j=1}^N\left(\ln(\mathrm  x_i-\wt{\wh{\mathrm  x}}_j)-\ln(\mathrm x_i-{\hypohat 0 {\hypotilde 0 {\mathrm {x}}}}_j+\lambda) \right.\nn\\
&&\left.+\ln(\mathrm x_i-{\hypotilde 0 {\hypohat 0 {\mathrm x}}}_j)-\ln(\mathrm  x_i-\wt{\wh{\mathrm  x}}_j-\lambda)\right)\;,\label{eqln2skew}
\end{eqnarray}
Introducing the notations $\wh{ \mathrm{x}}=\mathrm{x}(\mathsf N,\tau +\varepsilon )$ and $\hypohat 0 { \mathrm{x}}=\mathrm{x}(\mathsf N,\tau -\varepsilon)$ with the use of the Taylor expansion, we obtain
\begin{subequations}
\begin{eqnarray}\label{Taylor}
\mathrm{x}(\mathsf N,\tau\pm \varepsilon )=\mathrm{x}(\mathsf N,\tau)\pm \varepsilon\frac{\partial \mathrm{x}(\mathsf N,\tau)}{\partial {\tau}}+\frac{\varepsilon^2}{2}\frac{\partial^2 \mathrm{x}(\mathsf N,\tau)}{\partial {\tau^2}}\pm ...\;.
\end{eqnarray}
\end{subequations}
Collecting terms in order $\mathcal O(\varepsilon^0 )$, we have the equations of motion for the RS system corresponding to the ``$\;\mathsf N\;$" variable
\begin{eqnarray}\label{eqmotionskew1}
\sum\limits_{\mathop {j = 1}\limits_{j \ne i} }^N\left[ \ln(\mathrm x_i-\mathrm x_j+\lambda)-\ln(\mathrm  x_i-\mathrm  x_j-\lambda)\right]&=&\sum_{j=1}^N\left[\ln(\mathrm  x_i-\wt{\mathrm  x}_j)-\ln(\mathrm x_i-{{\hypotilde 0 {\mathrm {x}}}}_j+\lambda) \right.\nn\\
&&\;\;\;\;\;\;\;+\left.\ln(\mathrm x_i-{{\hypotilde 0 {\mathrm x}}}_j)-\ln(\mathrm  x_i-\wt{\mathrm  x}_j-\lambda)\right]\;,\label{eqln2skew}
\end{eqnarray}
and $\mathcal O(\varepsilon )$, we have
\begin{eqnarray}\label{eqmotionskew2}
\sum_{j=1}^N\left[ \frac{\partial {\wt {\mathrm x}_j}}{\partial \tau}\left( \frac{1}{\mathrm x_i-\wt{\mathrm x}_j-\lambda}-\frac{1}{\mathrm x_i-\wt{\mathrm x}_j}\right)+\frac{\partial {\hypotilde 0 {\mathrm x}_j}}{\partial \tau}\left( \frac{1}{\mathrm x_i-\hypotilde 0 {\mathrm x}_j+\lambda}-\frac{1}{\mathrm x_i-\hypotilde 0 {\mathrm x}_j}\right)\right]=0\;.
\end{eqnarray}
which are the equations of motion for the RS system corresponding to the ``$\;\;\tau\;\;$'' variable.
\\

Similarly, changing the variables $x(n,m)\mapsto \mathrm x(\mathrm{N},\tau )$ in constraints \eqref{CON1} and \eqref{CON2} and collecting terms in order $\mathcal O(\varepsilon )$, we have
\begin{subequations}
\begin{eqnarray}
-\frac{1}{p} &=&\sum_{j=1}^N\frac{\partial {\wt {\mathrm x}_j}}{\partial \tau}\left(\frac{1}{\mathrm x_i-\wt{\mathrm x}_j-\lambda}-\frac{1}{\mathrm x_i-\wt{\mathrm x}_j} \right)\;,\label{CONSKEW1}\\
\frac{1}{p} &=&\sum_{j=1}^N\frac{\partial {\hypotilde 0 {\mathrm x}_j}}{\partial \tau}\left(\frac{1}{\mathrm x_i-\hypotilde 0 {\mathrm x}_j+\lambda}-\frac{1}{\mathrm x_i-\hypotilde 0 {\mathrm x}_j}\right)\;,\label{CONSKEW2}
\end{eqnarray}
\end{subequations}
\eqref{CONSKEW1} and \eqref{CONSKEW2} represent the constraints after taking the skew limit. The summation of these two yields \eqref{eqmotionskew2}.
%
%
%

\subsection{The skew limit on action}
Next, to obtain the continuum limit of the action, we proceed exactly with the same steps as in \cite{Sikarin1}.
First, we observe that eq. \eqref{eqmotionskew1} can be once again be obtained by implementing the usual variational principle 
on the following action $ S_{(\mathsf N)}$ given by
\begin{eqnarray}\label{eq:4.17a}
{S}_{(\mathsf{N})}=\sum_{\mathsf N}\mathscr{L}_{(\mathsf N)}&=&\sum_{\mathsf N}\left(\sum_{i,j=1}^N\left( f(\mathrm x_i-\wt{\mathrm x}_j)-f(\mathrm x_i-\wt{\mathrm x}_j-\lambda)\right)-\frac{1}{2}\sum\limits_{\mathop {i,j = 1}\limits_{j \ne i} }^Nf(\mathrm x_i-\mathrm x_j+\lambda)\right.\nn\\
&&\left.-\frac{1}{2}\sum\limits_{\mathop {i,j = 1}\limits_{j \ne i} }^Nf(\wt{\mathrm x}_i-\wt{\mathrm x}_j+\lambda)-\ln\left|p\right|\sum_{i=1}^N(\mathrm x_i -\wt{\mathrm x}_i)\right) \;,\label{Lagn}
\end{eqnarray}
where now the Lagrangian $\mathscr{L}_{(\mathsf N)}$ involves variables $\wt{\mathsf x}_i$ shifted in the discrete variable $\mathsf N$ instead of 
the original variable $n$, and the corresponding discrete Euler-Lagrange equation reads: 
\begin{equation}\label{eq:4.20s}
\widetilde{\frac{\partial\mathscr{L}_{(\mathsf N)}}{\partial {\mathsf x}_i}}+
\left( \frac{\partial\mathscr{L}_{(\mathsf N)}}{\partial {\widetilde {\mathsf x}_i}}\right)=0 ,
\end{equation}
yielding \eqref{eqmotionskew1}.  

Second, we observe that eq. \eqref{eqmotionskew2} can be once again be obtained by implementing the usual variational principle 
on the following action $ S_{(\tau)}$ given by
\begin{equation}\label{actionskew3}
{S}_{(\tau)}=\int_{\tau_1}^{\tau_2}d\tau\mathscr{L}_{(\tau)}
\left(\boldsymbol{\mathrm{x}}(\mathsf N_{0}-1,\tau),\frac{\partial{\boldsymbol{\mathrm{x}}(\mathsf N_0,\tau)}}{\partial \tau}\right)\;,
\end{equation}
which is obtained by taking the skew limit together with anti-Taylor expansion of \eqref{actionm} and
\begin{eqnarray}\label{Lagrangeskew3}
\mathscr{L}_{(\tau)}&=&\sum_{i,j=1}^N\left( \frac{\partial \mathrm{\wt{x}}_j}{\partial \tau}(\ln\left|\mathrm x_i-\wt{\mathrm{x}}_j-\lambda\right|-\ln\left|\mathrm x_i-\wt{\mathrm{x}}_j\right|)\right)\nn\\
&&-\frac{1}{2}\sum\limits_{\mathop {i,j = 1}\limits_{j \ne i} }^N \left(\frac{\partial \mathrm{\wt{x}}_j}{\partial \tau}\left(\ln\left|\wt{\mathrm x}_i-\wt{\mathrm{x}}_j+\lambda\right|-\ln\left|\wt{\mathrm x}_i-\wt{\mathrm{x}}_j-\lambda\right| \right)+\frac{\partial{\wt x_i}}{\partial \tau }-\frac{\partial{\wt x_j}}{\partial \tau }\right)\nn\\
&&+\sum_{i=1}^N\left( \frac{1}{p}(\mathrm x_i-\wt{\mathrm{x}}_i)+\frac{\partial \mathrm{\wt{x}}_i}{\partial \tau}
\ln\left| p  \right|\right)\;.
\end{eqnarray}
The Euler Lagrange equations 
\begin{equation}\label{xds}
\frac{\partial{\mathscr{L}_{(\tau)}}}{\partial{\mathrm{x}_i}}-\frac{d}{d\tau}\left(\frac{\partial{\mathscr{L}_{(\tau)}}}{\partial{(d\mathrm{x}_i/d\tau})}\right)=0\;,
\end{equation}
yield \eqref{eqmotionskew2}.

\section{The full continuum limit}\label{fullLIMIT}
In the previous Section, we took the continuum limit on the discrete variable $m$, leading to a system of
differential-difference equations. The full continuum limit, performed on the remaining discrete variable $\mathsf N$ as well as $\tau $, will lead to a coupled system
of poles in the first instance, from which a hierarchy of ODEs can be retrieved, which is the RS hierarchy. How to perform this limit is inspired by the
structure of the solutions of \eqref{EXACTSKEW2}. Performing the following computation, 
\begin{eqnarray}\label{fulllimitsolution}
 \boldsymbol{\mathsf {Y}}(\mathsf {N},\tau)&=&
\left(\bI+\frac{\boldsymbol{\Lambda}}{p}\right)^{-\mathsf{ N}}e^{\frac{\tau}{p}\left(\bI+\frac{\boldsymbol{\Lambda}
}{p}\right)^{-1}}\boldsymbol{\mathsf Y}(0,0)
e^{-\frac{\tau}{p}\left(\bI+\frac{\boldsymbol{\Lambda}}{p}\right)^{-1}}
\left(\bI+\frac{\boldsymbol{\Lambda}}{p}\right)^{\mathsf N}\nn\\
&&-\mathsf {N}\lambda\left(1+\frac{\boldsymbol{\boldsymbol{\Lambda}}}{p}\right)^{-1}+
\frac{\tau\lambda\boldsymbol{\Lambda}}{p^2}\left(1+\frac{\boldsymbol{\Lambda}}{p}\right)^{-2}\;\nn\\
&=&e^{-\mathsf{ N}\ln\left(1+\frac{\boldsymbol{\Lambda}}{p}\right)+\frac{\tau}{p}
\left(1+\frac{\boldsymbol{\Lambda}}{p}\right)^{-1}}\boldsymbol{\mathsf Y}(0,0)
e^{\mathsf{ N}\ln\left(1+\frac{\boldsymbol{\Lambda}}{p}\right)-\frac{\tau}{p}\left(1+\frac{\boldsymbol{\Lambda}}{p}\right)^{-1}}\nn\\
&&-\mathsf {N}\lambda\left(1+\frac{\boldsymbol{\boldsymbol{\Lambda}}}{p}\right)^{-1}+
\frac{\tau\lambda\boldsymbol{\Lambda}}{p^2}\left(1+\frac{\boldsymbol{\Lambda}}{p}\right)^{-2}\;.
\end{eqnarray}
We now introduce
%
\begin{eqnarray}\label{t1t2t3}
t_1=\frac{\tau}{p^2}+\frac{\mathsf {N}}{p}\;,\;\;\;
 t_2=-\frac{2\tau}{p^3}-\frac{\mathsf {N}}{p^2}\;,\;\;\;
 t_3=\frac{3\tau}{p^4}+\frac{\mathsf {N}}{p^3}\;,\;\;\;
\mbox{~~}.....\;\;,
\end{eqnarray}
%
and expand \eqref{fulllimitsolution} with respect to variable $ p$. We have
\begin{eqnarray}\label{fullY}
 \boldsymbol{\mathsf {Y}}(t_1,t_2,t_3,...,\mathsf {N})  
&=&e^{-\boldsymbol{\Lambda}t_1+\boldsymbol{\Lambda}^2\frac{t_2}{2}-\boldsymbol{\Lambda}^3\frac{t_3}{3}+...}
\boldsymbol{\mathsf Y}(0,0,...)
e^{\boldsymbol{\Lambda}t_1-\boldsymbol{\Lambda}^2\frac{t_2}{2}+\boldsymbol{\Lambda}^3\frac{t_3}{3}+...}\nn\\
&&-\mathsf{N}\lambda+\boldsymbol{\Lambda}\lambda t_1+\boldsymbol{\Lambda}^2\lambda t_2+\boldsymbol{\Lambda}^3\lambda t_3+...\;,
\end{eqnarray}
which is a function of time variables $(t_1,t_2,t_3,...,\mathsf {N})$. The positions of the particles $X_i(t_1,t_2,t_3,...,\mathsf {N})$ can be computed by
looking for the eigenvalues of \eqref{fullY}. The explicit 
expression of the solution for the RS can be obtained from the secular problem for the matrix
\begin{equation}\label{eq:exact}
 \boldsymbol{ X}(0,0)-\xi+\boldsymbol{ L}(0,0)\lambda t_1+\boldsymbol{L}^2(0,0)\lambda t_2+\boldsymbol{L}^3(0,0)\lambda t_3\;,
\end{equation}
where $\xi=\mathsf {N}\lambda$ and $\boldsymbol{ X}(0,0)=\boldsymbol{ U}^{-1}(0,0)\boldsymbol{ Y}(0,0)\boldsymbol{ U}(0,0)$ and 
$\boldsymbol{ L}(0,0)=\boldsymbol{ U}^{-1}(0,0)\boldsymbol{\Lambda}\boldsymbol{ U}(0,0)$. The solution \eqref{fullY} involves $N$-time flows for 
the RS system. The next solutions in the hierarchy can be generated by pushing further on with the expansion.

\subsection{The full limit on the equations of motion}
We now would like to see what would result from 
taking the limit on the equations of motion \eqref{eqmotionskew2}. First, we introduce
\begin{eqnarray}\label{eq:5.4}
\dot {\mathsf x}_i&=&\frac{\partial {\mathsf x}_i}{\partial \tau}=\frac{\partial X_i}{\partial t_1}
\frac{\partial t_1}{\partial \tau}+\frac{\partial X_i}{\partial t_2}\frac{\partial t_2}{\partial \tau}+\frac{\partial X_i}{\partial t_3}
\frac{\partial t_3}{\partial \tau}+...\nonumber\\
&=&\frac{1}{p^2}\frac{\partial X_i}{\partial t_1}-\frac{2}{p^3}\frac{\partial X_i}{\partial t_2}+\frac{3}{p^4}\frac{\partial X_i}{\partial t_3}+...\;,
\end{eqnarray}
and
\begin{eqnarray}\label{eq:5.5}
 {\mathsf x}_i(\mathsf {N}\pm 1)&=&\mathsf x_i\mp  \lambda\pm \frac{1}{p}\frac{\partial X_i}{\partial t_1} 
+\frac{1}{p^2}\left(\frac{1}{2}\frac{\partial^2 X_i}{\partial t_1^2} \mp \frac{\partial X_i}{\partial t_2}\right)
+\frac{1}{p^3}\left(\pm \frac{1}{6}\frac{\partial X_i}{\partial t_3}-\frac{\partial^2 X_i}{\partial t_1\partial t_2} 
\pm \frac{\partial X_i}{\partial t_3}\right)\nn\\
&&+\frac{1}{p^4}\left(\frac{1}{24}\frac{\partial^4 X_i}{\partial t_4^4}\mp \frac{1}{2}\frac{\partial^3 X_i}{\partial t_1^2\partial t_2}+\frac{1}{2}\frac{\partial^2 X_i}{\partial t_2^2}+\frac{\partial^2 X_i}{\partial t_1\partial t_3}\right)
+\mathcal O(1/p^5)\;.
\end{eqnarray}
%
%
%
Then we expand \eqref{eqmotionskew2} with respect to the variable $p$ together with \eqref{t1t2t3}. We find that
\\
\\
$\blacklozenge{}$The leading term of order $\mathcal O(1/p^3)$ gives us
\begin{eqnarray}\label{EQMOTIONRS1}
\frac{\partial^2 X_i}{\partial t_1^2}\mbox{\huge{/}}\frac{\partial X_i}{\partial t_1}+\sum_{j=1}^N\frac{\partial X_j}{\partial t_1}\left( \frac{1}{X_i-X_j+\lambda}+\frac{1}{X_i-X_j-\lambda}-\frac{2}{X_i-X_j}\right)=0\;,
\end{eqnarray}
which is the equations of the motion for the usual continuous RS system \cite{Bru}.
\\
\\
$\blacklozenge{}$The term of order $\mathcal O(1/p^4)$ gives us
\begin{eqnarray}\label{EQMOTIONRS2}
&&2\frac{\partial^2 X_i}{\partial t_1\partial t_2}\mbox{\huge{/}}\frac{\partial X_i}{\partial t_1}
-\frac{\partial^2 X_i}{\partial t_1^2}\frac{\partial X_i}{\partial t_2}\mbox{\huge{/}}\left(\frac{\partial X_i}{\partial t_1}\right)^2
-\frac{1}{\lambda}\frac{\partial^2 X_i}{\partial t_1^2} \nn\\
&&+\sum_{j=1}^N\left[\frac{\partial X_j}{\partial t_2}\left( \frac{1}{X_i-X_j+\lambda}+\frac{1}{X_i-X_j-\lambda}-\frac{2}{X_i-X_j}\right)\right.\nn\\
&&+\frac{1}{2 }\frac{\partial^2 X_j}{\partial t_1^2}\left( \frac{1}{X_i-X_j-\lambda}-\frac{1}{X_i-X_j+\lambda}\right)\nn\\
&&+\frac{1}{2}\left(\frac{\partial X_j}{\partial t_2}\right)^2\left.\left(\frac{1}{(X_i-X_j-\lambda)^2}-\frac{1}{(X_i-X_j+\lambda)^2} \right)\right]=0\;.
\end{eqnarray}
This equation represents the next equation of motion beyond the usual continuous RS in the hierarchy.
We will stop at this equation, but we can actually get the higher terms of the equation in which the variable $t_3$ and higher 
order time-flows must be taken into account.
\\
\\
The full limit of the addition between \eqref{CONSKEW1} and \eqref{CONSKEW2} in the order $\mathcal O(1/p^2)$ gives
\begin{eqnarray}\label{CONcon}
\frac{2}{\lambda }\frac{\partial X_i}{\partial t_1}-2\frac{\partial X_i}{\partial t_2}\mbox{\huge{/}}\frac{\partial X_i}{\partial t_1} +\sum_{j=1}^N\frac{\partial X_j}{\partial t_1}\left(\frac{1}{X_i-X_j+\lambda}-\frac{1}{X_i-X_j-\lambda} \right)=0\;,
\end{eqnarray}
which is the constraint equations for the full limit which will play a crucial role in the next Subsection, see \eqref{eq:5.10s}.
\\
\\
Using \eqref{CONcon}, we can simplify \eqref{EQMOTIONRS2} into
\begin{eqnarray}\label{EQMOTIONRS3}
&&\frac{\partial^2 X_i}{\partial t_1\partial t_2}\mbox{\huge{/}}\frac{\partial X_i}{\partial t_1}
+\sum_{j=1}^N\left[\frac{\partial X_j}{\partial t_2}\left( \frac{1}{X_i-X_j+\lambda}+\frac{1}{X_i-X_j-\lambda}-\frac{2}{X_i-X_j}\right)\right.\nn\\
&&-\frac{1}{2}\frac{\partial X_i}{\partial t_1}\frac{\partial X_j}{\partial t_1}\left.\left(\frac{1}{(X_i-X_j-\lambda)^2}-\frac{1}{(X_i-X_j+\lambda)^2} \right)\right]=0\;.
\end{eqnarray}
Note that \eqref{EQMOTIONRS3} can be obtained directly from the full limit in order $\mathcal O(1/p^3)$ from the combination of the relations
\begin{eqnarray}
-\frac{1}{p} &=&\sum_{j=1}^N\frac{\partial {{\mathrm x}_j}}{\partial \tau}\left(\frac{1}{{\hypotilde 0 {\mathrm x}_i}-{\mathrm x}_j-\lambda}-\frac{1}{{\hypotilde 0 {\mathrm x}_i}-{\mathrm x}_j} \right)\;,\\
\frac{1}{p} &=&\sum_{j=1}^N\frac{\partial {\mathrm x}_j}{\partial \tau}\left(\frac{1}{\wt{\mathrm x}_i-{\mathrm x}_j-\lambda}-\frac{1}{\wt{\mathrm x}_i- {\mathrm x}_j}\right)\;,
\end{eqnarray}
which are the backward shift and forward shift of \eqref{CONSKEW1} and \eqref{CONSKEW2}, respectively.

\subsection{The full limit on the action}
We will follow the steps in \cite{Sikarin1} in order to obtain the full limit on the action.
We now take the action to be of the form
\begin{equation}
{S}[\boldsymbol{x}(\mathsf N,\tau);\Gamma]=\int_{\tau_1}^{\tau_2}d\tau\mathscr{L}_{(\tau)}(\boldsymbol{x}(\mathsf N,\tau),\dot{\boldsymbol{x}}(\mathsf N,\tau))
+\sum_{\mathsf N}\mathscr{L}_{(\mathsf N)}(\boldsymbol{x}(\mathsf N, \tau),\boldsymbol{x}(\mathsf N+1,\tau))\;,
\end{equation}
where the first term belongs to the vertical part and the second term belongs to the horizontal part of the curve $\Gamma$.

Using anti-Taylor expansion, the action now becomes
\begin{equation}
{S}[\boldsymbol{x}(\mathsf N,\tau);\Gamma]=\int_{\tau_1}^{\tau_2}d\tau\mathscr{L}_{(\tau)}(\boldsymbol{x}(\mathsf N,\tau),\dot{\boldsymbol{x}}(\mathsf N,\tau))
+\int_{\mathsf N_1}^{\mathsf N_2}d\mathsf N\mathscr{L}_{(\mathsf N)}(\boldsymbol{x}(\mathsf N, \tau),\boldsymbol{x}(\mathsf N+1,\tau))\;,
\end{equation}
where we do not need to take into account the boundary terms coming from the expansion, because they are constant at the end points and do not contribute 
to the variational process.
\\

We now perform a change of variables\footnote{We here restrict ourselves with the first two flows for simplicity.} $(\tau,\mathsf N)\mapsto (t_1,t_2)$ by using \eqref{t1t2t3},
\begin{subequations}
\begin{eqnarray}
d\tau&=&-p^3dt_2-p^2dt_1\;,\\
d\mathsf N&=&p^2dt_2+2pdt_1\;,
\end{eqnarray}
\end{subequations}
and also expand the Lagrangians with respect to variable $p$. We obtain 
\begin{eqnarray}\label{eq:5.19}
{S}[\boldsymbol{X}(t_1,t_2);\Gamma]&=&\int_{t_1(1)}^{t_1(2)}dt_1\mathscr{L}_{(t_1)}\left(\boldsymbol{X}(t_1,t_2),
\frac{\partial\boldsymbol{X}(t_1,t_2)}{\partial t_1}\right)\nn\\
&+&\int_{t_2(1)}^{t_2(2)}dt_3\mathscr{L}_{(t_2)}\left(\boldsymbol{X}(t_1,t_2),\frac{\partial\boldsymbol{X}(t_1,t_2)}{\partial t_1},
\frac{\partial\boldsymbol{X}(t_1,t_2)}{\partial t_2}\right)\;,
\end{eqnarray}
where $\mathscr{L}_{(t_1)}$ and $\mathscr{L}_{(t_2)}$ are given by
\begin{eqnarray}\label{eq:5.10}
\mathscr L_{(t_1)}=\sum\limits_{i=1}^N\frac{\partial X_i}{\partial t_1}\ln\left| \frac{\partial X_i}{\partial t_1}\right|-\sum\limits_{i \ne j}^N\frac{\partial X_j}{\partial t_1}\left( \ln|X_i-X_j-\lambda|-\ln|X_i-X_j|\right)\;, 
\end{eqnarray}
which first appeared in \cite{Braden} and the Euler-Lagrange equation 
\begin{eqnarray}\label{eq:5.10c}
\frac{\partial \mathscr L_{(t_1)}}{\partial X_i}-\frac{\partial}{\partial t_1}\left( \frac{\partial \mathscr
 L_{(t_1)}}{\partial(\frac{\partial X_i}{\partial t_1})}\right)=0\;, 
\end{eqnarray}
gives exactly eq. (\ref{EQMOTIONRS1}) and
\begin{eqnarray}\label{eq:5.10a}
\mathscr L_{(t_2)}&=&\sum\limits_{i=1}^N\left(\frac{\partial X_i}{\partial t_2}\ln\left| \frac{\partial X_i}{\partial t_1}\right|
-\frac{1}{2\lambda }\left( \frac{\partial X_i}{\partial t_1}\right)^2+3\frac{\partial X_i}{\partial t_2}\right)\nn\\
&&-\sum\limits_{i \ne j}^N\left[\frac{\partial X_j}{\partial t_2}\left( \ln|X_i-X_j-\lambda|-\ln|X_i-X_j|\right)+
\frac{1}{2}\frac{\partial X_i}{\partial t_1}\frac{\partial X_j}{\partial t_1}\frac{1}{X_i-X_j+\lambda}\right]\;.\nn\\
\end{eqnarray}
We see that the Lagrangian $\mathscr L_{(t_2)}$ contains derivatives with respect to two time flows $t_1$ and $t_2$. 
We observe that the equations of motion (\ref{EQMOTIONRS3}) require the Euler-Lagrange equation in the form
\begin{eqnarray}\label{eq:5.10d}
\frac{\partial \mathscr L_{(t_2)}}{\partial X_i}-\frac{\partial}{\partial t_2}\left( \frac{\partial \mathscr 
L_{(t_2)}}{\partial(\frac{\partial X_i}{\partial t_2})}\right)=0\;. 
\end{eqnarray}
Furthermore, we find that
\begin{equation}\label{eq:5.10s}
\frac{\partial \mathscr L_{(t_2)}}{\partial(\frac{\partial X_i}{\partial t_1})}
=\frac{2}{\lambda }\frac{\partial X_i}{\partial t_1}-2\frac{\partial X_i}{\partial t_2}\mbox{\huge{/}}\frac{\partial X_i}{\partial t_1} +
\sum_{j=1}^N\frac{\partial X_j}{\partial t_1}\left(\frac{1}{X_i-X_j+\lambda}-\frac{1}{X_i-X_j-\lambda} \right)=0\;,
\end{equation} 
which is continuum analogue of discrete constraints \eqref{CONcon} in order $\mathcal{O}(1/p^2)$. Actually there are more constraints from the expansion which will play a major role when we consider the higher Lagrangians in the hierarchy.

Here we obtained the hierarchy of continuous Lagrangians associated with the discrete-time RS-model through the full continuum limit. Obviously, higher Lagrangians in the family
can be generated by pushing further on with the expansion. Interestingly, these Lagrangians as well as the constraints also satisfy the variational principle for Lagrangian 1-form presenting in \cite{Sikarin1,Rinthesis}.

\subsection{The full limit on the closure relation}
We take the full limit on the discrete closure relation \eqref{closure} leading to
\begin{theorem}
We find that the continuous version of the closure relation between $t_1$ and $t_2$
\begin{equation}\label{conclosure}
\frac{\partial \mathscr L_{(t_2)}}{\partial t_1}=\frac{\partial \mathscr L_{(t_1)}}{\partial t_2}\;,
\end{equation}
which holds on the equations of motion and constraint.
\end{theorem}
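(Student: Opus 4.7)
The plan is to obtain \eqref{conclosure} as the surviving non-trivial order in the continuum limit of the discrete closure relation \eqref{closure}, which Theorem \ref{closure} already established holds on solutions of the discrete equations of motion and constraints. In this way the continuous closure is inherited from the discrete one, paralleling how \eqref{EQMOTIONRS1}, \eqref{EQMOTIONRS3} and the constraint \eqref{eq:5.10s} were produced in the previous subsections. I would carry this out in two stages: first a skew limit yielding a semi-continuous (differential-difference) closure between $\mathscr{L}_{(\mathsf N)}$ and $\mathscr{L}_{(\tau)}$, then the full $1/p$ expansion reducing it to the advertised continuum statement involving $\mathscr{L}_{(t_1)}$ and $\mathscr{L}_{(t_2)}$.

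For the skew limit, I would rewrite \eqref{closure} in the coordinates $(\mathsf N, m)$ introduced in Section \ref{skewlimit} and send $m \to \infty$, $\varepsilon = p - q \to 0$ with $\varepsilon m = \tau$ fixed. Since $\widehat{F}(\mathsf N, \tau) - F(\mathsf N, \tau) = \varepsilon\,\partial_\tau F + O(\varepsilon^2)$ and, from the construction of \eqref{Lagrangeskew3}, $\mathscr{L}_{(m)} = \varepsilon\,\mathscr{L}_{(\tau)} + O(\varepsilon^2)$, the order $\varepsilon$ balance of \eqref{closure} should produce a semi-continuous closure of the form $\partial_\tau \mathscr{L}_{(\mathsf N)}(\mathsf{x}, \wt{\mathsf{x}}) = \Delta_{\mathsf N}\mathscr{L}_{(\tau)}$, valid on the solutions that survive the limit. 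This is the natural "mixed" 1-form closure when one direction is continuous and the other is still discrete.

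For the full limit, I would apply the change of variables \eqref{t1t2t3} and expand in $1/p$ using
\[
\partial_\tau = \frac{1}{p^2}\partial_{t_1} - \frac{2}{p^3}\partial_{t_2} + \cdots, \qquad \Delta_{\mathsf N} = \frac{1}{p}\partial_{t_1} + \frac{1}{p^2}\!\left(\tfrac{1}{2}\partial_{t_1}^2 - \partial_{t_2}\right) + \cdots,
\]
together with the corresponding $1/p$ expansions of the two Lagrangians, whose leading non-trivial continuum components are $\mathscr{L}_{(t_1)}$ in \eqref{eq:5.10} and $\mathscr{L}_{(t_2)}$ in \eqref{eq:5.10a}. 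At the lowest non-trivial order (expected at $1/p^3$) the semi-continuous closure should then reduce to $\partial_{t_1}\mathscr{L}_{(t_2)} - \partial_{t_2}\mathscr{L}_{(t_1)} = 0$, modulo terms that vanish on solutions.

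The main obstacle is the bookkeeping at sub-leading orders. The Lagrangians \eqref{2Lagn} and \eqref{Lagrangeskew3} involve logarithms of pair differences, and the Taylor expansions \eqref{eq:5.5} of the shifts generate cross-terms at every order; reassembling these into the compact expressions \eqref{eq:5.10}, \eqref{eq:5.10a} plus a remainder that vanishes on solutions of \eqref{EQMOTIONRS1}, \eqref{EQMOTIONRS3} and the constraint \eqref{eq:5.10s} is the delicate algebra. The constraint \eqref{eq:5.10s}, which can be read as $\partial \mathscr{L}_{(t_2)}/\partial(\partial_{t_1}X_i) = 0$ on the solution surface, should play the decisive role, in analogy with how the discrete constraints \eqref{CON1}--\eqref{CON2} were indispensable in Theorem \ref{closure}. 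As a fallback, one can verify \eqref{conclosure} by a direct variational computation: differentiate \eqref{eq:5.10} and \eqref{eq:5.10a}, collect terms, and eliminate the second-derivative combinations using \eqref{EQMOTIONRS1} and \eqref{EQMOTIONRS3}, invoking \eqref{eq:5.10s} at the final step to kill the residual piece. This route is more computational but bypasses the intricate expansion bookkeeping.
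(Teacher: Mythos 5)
Your ``fallback'' is in fact the paper's actual proof, and you should promote it to the main argument. Despite the prose announcing that the continuous closure is obtained by ``taking the full limit on the discrete closure relation'', the proof the paper actually gives is the direct computation you describe at the end: differentiate \eqref{eq:5.10} and \eqref{eq:5.10a} explicitly to get $\partial\mathscr{L}_{(t_1)}/\partial t_2$ and $\partial\mathscr{L}_{(t_2)}/\partial t_1$ (equations \eqref{p1}--\eqref{p2}), form the difference, eliminate the second-derivative combinations using the $t_2$-flow equation \eqref{EQMOTIONRS2} (equivalently \eqref{EQMOTIONRS3} together with the constraint \eqref{CONcon}, i.e.\ \eqref{eq:5.10s}), then insert the $t_1$-flow equation \eqref{EQMOTIONRS1}, after which the residual double sum \eqref{p4b} vanishes because its summand is antisymmetric under $i\leftrightarrow j$. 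So on that route your proposal matches the paper, with the same inputs playing the same roles.

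Your primary route --- inheriting \eqref{conclosure} from the discrete closure \eqref{closure} by a skew limit followed by the $1/p$ expansion --- is conceptually attractive but is not carried out in the paper, and as sketched it has a genuine unresolved step. The change of variables $\mathsf N=n+m$ maps the elementary $(n,m)$ plaquette on which \eqref{closure} lives to a parallelogram in $(\mathsf N,m)$, so the discrete closure does not simply relabel into the form $\partial_\tau\mathscr{L}_{(\mathsf N)}=\Delta_{\mathsf N}\mathscr{L}_{(\tau)}$; one must re-derive the mixed semi-continuous closure from scratch. Moreover, the continuum Lagrangian components are only fixed up to total derivatives and up to terms proportional to lower-order relations, so the order-by-order expansion of \eqref{closure} yields \eqref{conclosure} only modulo such ambiguities, which you acknowledge but do not control. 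Since the direct computation is short and self-contained, the limit argument adds conceptual motivation but should not be presented as the proof unless the bookkeeping is actually done.
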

\begin{proof}: We find that
\begin{subequations}
\begin{eqnarray}\label{p1}
\frac{\partial \mathscr L_{(t_1)}}{\partial t_2}&=&\sum_{i=1}^N\left( \frac{\partial^2 X_i}{\partial t_1\partial t_2}\ln\left| \frac{\partial X_i}{\partial t_1}\right|+\frac{\partial^2 X_i}{\partial t_1\partial t_2}\right)
-\sum\limits_{i \ne j}^N\left(\frac{\partial^2 X_j}{\partial t_2\partial t_1}\left[\ln|X_i-X_j-\lambda|-\ln|X_i-X_j| \right] \right.\nn\\
&&+\frac{\partial X_j}{\partial t_1}\frac{\partial X_i}{\partial t_2}\left[ \frac{1}{X_i-X_j-\lambda}
-\frac{1}{X_i-X_j}\right]
+\left.\frac{\partial X_j}{\partial t_1}\frac{\partial X_j}{\partial t_2}\left[ \frac{1}{X_i-X_j-\lambda}
-\frac{1}{X_i-X_j}\right]\right)\;,\nn\\
\end{eqnarray}
and
\begin{eqnarray}\label{p2}
\frac{\partial \mathscr L_{(t_2)}}{\partial t_1}&=&\sum_{i=1}^N\left( \frac{\partial^2 X_i}{\partial t_1\partial t_2}\ln\left| \frac{\partial X_i}{\partial t_1}\right|+\frac{\partial X_i}{\partial t_2}\frac{\partial^2 X_i}{\partial t_1^2}\mbox{\huge{/}}\frac{\partial X_i}{\partial t_1}
-\frac{1}{\lambda }\frac{\partial X_i}{\partial t_1}\frac{\partial^2 X_i}{\partial t_1^2}
+3\frac{\partial^2 X_i}{\partial t_1\partial t_2}\right)\nn\\
&&-\sum\limits_{i \ne j}^N\left(\frac{\partial^2 X_j}{\partial t_2\partial t_1}\left[\ln|X_i-X_j-\lambda|-\ln|X_i-X_j| \right] \right.\nn\\
&&+\frac{\partial X_i}{\partial t_1}\frac{\partial X_j}{\partial t_2}\left[ \frac{1}{X_i-X_j-\lambda}
-\frac{1}{X_i-X_j}\right]
+\frac{\partial X_j}{\partial t_1}\frac{\partial X_j}{\partial t_2}\left[ \frac{1}{X_i-X_j-\lambda}
-\frac{1}{X_i-X_j}\right]\nn\\
&&+\frac{1}{2}\frac{\partial^2 X_j}{\partial t_1^2}\frac{\partial X_i}{\partial t_1}\left[ \frac{1}{X_i-X_j+\lambda}
-\frac{1}{X_i-X_j-\lambda }\right]\nn\\
&&-\left.\frac{1}{2}\left(\frac{\partial X_j}{\partial t_1}\right)^2\frac{\partial X_i}{\partial t_1}\left[ \frac{1}{(X_i-X_j-\lambda)^2}
-\frac{1}{(X_i-X_j+\lambda)^2 }\right]\right)\;.
\end{eqnarray}
We find that $\frac{\partial \mathscr L_{(t_1)}}{\partial t_2}=\frac{\partial \mathscr L_{(t_2)}}{\partial t_1}$ gives
\begin{eqnarray}\label{p3}
&&-\sum_{i=1}^N\frac{\partial^2 X_i}{\partial t_1\partial t_2}+\frac{\partial X_j}{\partial t_1}\frac{\partial X_i}{\partial t_2}\left[ \frac{1}{X_i-X_j-\lambda}
-\frac{1}{X_i-X_j}\right]=\sum_{i=1}^N\left(\frac{\partial X_i}{\partial t_2}\frac{\partial^2 X_i}{\partial t_1^2}\mbox{\huge{/}}\frac{\partial X_i}{\partial t_1}\right.\nn\\
&&-\left.\frac{1}{\lambda }\frac{\partial X_i}{\partial t_1}\frac{\partial^2 X_i}{\partial t_1^2}
+3\frac{\partial^2 X_i}{\partial t_1\partial t_2}\right)+\sum_{i\neq j}^N\left(-\frac{\partial X_i}{\partial t_1}\frac{\partial X_j}{\partial t_2}\left[ \frac{1}{X_i-X_j-\lambda}
-\frac{1}{X_i-X_j}\right]\right.\nn\\
&&+\frac{1}{2}\frac{\partial^2 X_j}{\partial t_1^2}\frac{\partial X_i}{\partial t_1}\left[ \frac{1}{X_i-X_j+\lambda}
-\frac{1}{X_i-X_j-\lambda }\right]\nn\\
&&-\left.\frac{1}{2}\left(\frac{\partial X_j}{\partial t_1}\right)^2\frac{\partial X_i}{\partial t_1}\left[ \frac{1}{(X_i-X_j-\lambda)^2}
-\frac{1}{(X_i-X_j+\lambda)^2 }\right]\right)\;.
\end{eqnarray}
Dividing \eqref{p3} by $\frac{\partial X_i}{\partial t_1}$ we find that
\begin{eqnarray}\label{p4}
&&\frac{\partial \mathscr L_{(t_1)}}{\partial t_2}-\frac{\partial \mathscr L_{(t_2)}}{\partial t_1}
=\sum_{i=1}^N\frac{\partial X_i}{\partial t_1}\left(2\frac{\partial^2 X_i}{\partial t_1t_2}\mbox{\huge{/}}\frac{\partial X_i}{\partial t_1}
+\frac{\partial^2 X_i}{\partial t_1^2}\frac{\partial X_i}{\partial t_2}\mbox{\huge{/}}\left(\frac{\partial X_i}{\partial t_2}\right)^2
-\frac{1}{\lambda }\frac{\partial^2 X_i}{\partial t_1^2}\right.\nn\\
&&-\sum_{j=1}^N\left[\frac{\partial X_j}{\partial t_2}\left( \frac{1}{X_i-X_j+\lambda}+\frac{1}{X_i-X_j-\lambda}-\frac{2}{X_i-X_j}\right)\right.\nn\\
&&+\frac{1}{2}\frac{\partial^2 X_j}{\partial t_1^2}\left( \frac{1}{X_i-X_j+\lambda}+\frac{1}{X_i-X_j-\lambda}\right)\nn\\
&&-\left.\frac{1}{2}\left(\frac{\partial X_i}{\partial t_1}\right)^2\left.\left(\frac{1}{(X_i-X_j-\lambda)^2}-\frac{1}{(X_i-X_j+\lambda)^2} \right)\right]\right)\;.
\end{eqnarray}
Using \eqref{EQMOTIONRS2}, \eqref{p4} becomes
\begin{eqnarray}\label{p4a}
&&\frac{\partial \mathscr L_{(t_1)}}{\partial t_2}-\frac{\partial \mathscr L_{(t_2)}}{\partial t_1}
=\sum_{i=1}^N\frac{\partial X_i}{\partial t_1}\left(
2\frac{\partial^2 X_i}{\partial t_1^2}\frac{\partial X_i}{\partial t_2}\mbox{\huge{/}}\left(\frac{\partial X_i}{\partial t_1}\right)^2\right.\nn\\
&&\;\;\;\;\;\;\;\;\;\;\;-2\sum_{j=1}^N\left.\frac{\partial X_j}{\partial t_2}\left( \frac{1}{X_i-X_j+\lambda}+\frac{1}{X_i-X_j-\lambda}-\frac{2}{X_i-X_j}\right)\right)\;.
\end{eqnarray}
Inserting \eqref{EQMOTIONRS1}, we have now
\begin{eqnarray}\label{p4b}
&&\frac{\partial \mathscr L_{(t_1)}}{\partial t_2}-\frac{\partial \mathscr L_{(t_2)}}{\partial t_1}\nn\\
&&=-2\sum_{i,j=1}^N\left(\frac{\partial X_j}{\partial t_1}\frac{\partial X_i}{\partial t_2}+\frac{\partial X_i}{\partial t_1}\frac{\partial X_j}{\partial t_2}\right)\left( \frac{1}{X_i-X_j+\lambda}+\frac{1}{X_i-X_j-\lambda}-\frac{2}{X_i-X_j}\right)\nn\\
\end{eqnarray}
The second term of \eqref{p4b} is the antisymmetric function, hence vanishes. We now have
\begin{eqnarray}\label{p4bs}
\frac{\partial \mathscr L_{(t_1)}}{\partial t_2}-\frac{\partial \mathscr L_{(t_2)}}{\partial t_1}=0\;.
\end{eqnarray}
\end{subequations}
\end{proof}
\subsection{The variational principle for continuous Lagrangian 1-forms}
In \cite{Sikarin1}, we set out the key principles of the new 
variational calculus associated with the multi-time Lagrangian 1-form structure.
These principles were discovered on the basis of the careful analysis of the rational CM model, 
which formed the ``laboratory'' for studying how this new least-action principle should work. For simplicity, we focused on the case of Lagrangian 1-forms in the 
2-time case, but the general principles apply to the case of the general multi-time case in an obvious 
manner\footnote{This was done in \cite{Rinthesis} as well as in a recent preprint, \cite{Suris}, 
adopting a somewhat different point of view.}. Let us summarize here our findings.

In the 2-time case the action is defined by 
\begin{eqnarray}\label{2-time-action} 
S[\boldsymbol{x}(t_1,t_2);\Gamma]&=&\int_{\Gamma}\left(\mathscr L_{(t_1)}dt_1+\mathscr L_{(t_2)}dt_2  
\right)\nn\\
&=&\int_{s_0}^{s_1}\left( \mathscr L_{(t_1)}\frac{dt_1}{ds}+\mathscr L_{(t_2)}\frac{dt_2}{ds}\right)ds\;.
\end{eqnarray}
where $\Gamma$ is an arbitrary curve in the space of the two time-variables $t_1$ and $t_2$, which is 
parametrised by $\Gamma:\,(t_1,t_2)=(t_1(s),t_2(s))$ with the parameter $s\in [s_0,s_1]$, see 
Fig. \ref{x-curve}(a). $\mathscr{L}_{(t_1)}$ and $\mathscr{L}_{(t_2)}$ are the components of the Lagrangian 
1-form:   
\begin{eqnarray}
\mathscr L_{(t_1)}&=&\mathscr L_{(t_1)}(\boldsymbol{x}(t_1,t_2),\boldsymbol{x}_{t_1}(t_1,t_2),
\boldsymbol{x}_{t_2}(t_1,t_2))\;,\\
\mathscr L_{(t_2)}&=&\mathscr L_{(t_2)}(\boldsymbol{x}(t_1,t_2),\boldsymbol{x}_{t_1}(t_1,t_2),
\boldsymbol{x}_{t_2}(t_1,t_2))\;,
\end{eqnarray}
in which $\boldsymbol{x}_{t_1}=\partial \boldsymbol{x}/\partial t_1$ $\boldsymbol{x}_{t_2}=\partial 
\boldsymbol{x}/\partial t_2$. The dependent variable $\boldsymbol{x}=(x_1,x_2,...,x_N)$ is the 
vector of the position coordinates of the particles. The action should, in our point of 
view, be considered as a functional of both the dependent variables $\boldsymbol{x}(t_1,t_2)$ 
as well as of the curve $\Gamma$, i.e., of the functions $\boldsymbol{t}(s)=(t_1(s),t_2(s))$. 
Thus, the least-action principle for 1-forms is the implementation of the demand for criticality 
of the action under (infinitesimal) variations of the curve $\Gamma$ in the space of independent 
variables, as well as of the \textit{evaluation curve} $\mathcal{E}_\Gamma$ in the space 
of dependent variables $\boldsymbol{x}(\boldsymbol{t})$, as indicated in Fig. \ref{x-curve}(a).  
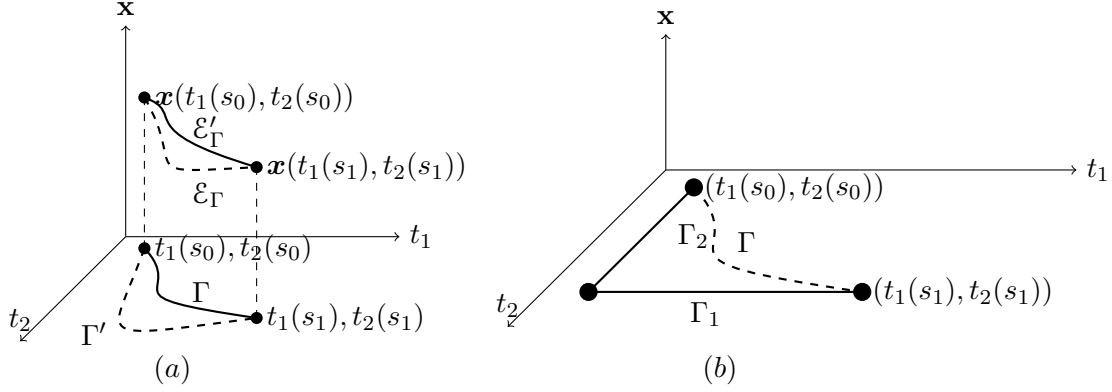
\begin{figure}[h]
\begin{center}
\begin{tikzpicture}[scale=0.4]
 \draw[->] (0,0,0) -- (9,0,0) node[anchor=west] {$t_{1}$};
 \draw[->] (0,0,0) -- (0,7,0) node[anchor=south] {$\textbf x$};
 \draw[->] (0,0,0) -- (0,0,9) node[anchor=south] {$t_{2}$};
 \fill (1,0,1) circle (0.2) node[anchor=west]{$t_1(s_0),t_2(s_0)$};
 \draw (3,0,3) node[anchor=north west] {$\Gamma$};
\draw (.7,0,6.5) node[anchor=north west] {$\Gamma^\prime$};
 \fill (7,0,7) circle (0.2) node[anchor=west]{$t_1(s_1),t_2(s_1)$} ;
\draw[dashed] (1,0,1)--(1,5,1);
\draw[dashed] (7,0,7)--(7,5,7);
\fill (1,5,1) circle (0.2) node[anchor=west]{$\boldsymbol x(t_1(s_0),t_2(s_0))$};
\fill (7,5,7) circle (0.2) node[anchor=west]{$\boldsymbol x(t_1(s_1),t_2(s_1))$};
 \draw (3,5.5,3) node[anchor=north west] {$\mathcal{E}_{\Gamma}^\prime$};
 \draw (3,3.3,3) node[anchor=north west] {$\mathcal{E}_{\Gamma}$};
 \draw (0,-5,-1.5) node[anchor=west] {$(a)$};
\draw [thick] (1,0,1) .. controls (4,0,5) and (1,0,5) .. (7,0,7);
\draw [thick,dashed] (1,0,1) .. controls (3,0,9) and (2,0,9) .. (7,0,7);
\draw [thick,dashed] (1,5,1) .. controls (4,4,5) and (2,4,5) .. (7,5,7);
\draw [thick] (1,5,1) .. controls (4,6,5) and (2,5.5,5) .. (7,5,7);
\end{tikzpicture}
\begin{tikzpicture}[scale=0.6]
 \draw[->] (0,0,0) -- (9,0,0) node[anchor=west] {$t_{1}$};
 \draw[->] (0,0,0) -- (0,3,0) node[anchor=south] {$\textbf x$};
 \draw[->] (0,0,0) -- (0,0,9) node[anchor=south] {$t_{2}$};
 \fill (1,0,1) circle (0.2) node[anchor=west]{$(t_1(s_0),t_2(s_0))$};
 \draw (1,0,2.5) node[anchor=north west] {$\Gamma_2$};
 \draw (3,0,7) node[anchor=north west] {$\Gamma_1$};
 \draw[thick](1,0,1)--(1,0,7)--(7,0,7);
 \draw (2.5,0,3) node[anchor=north west] {$\Gamma$};
 \fill (1,0,7) circle(0.2);
 \fill (7,0,7) circle (0.2) node[anchor=west]{$(t_1(s_1),t_2(s_1))$} ;
\draw [thick,dashed] (1,0,1) .. controls (4,0,5) and (1,0,5) .. (7,0,7);
\draw (0,-5,-1.5) node[anchor=west] {$(b)$};
\end{tikzpicture}
\end{center}
\caption{(a) The deformation of the curves $\Gamma \rightarrow \Gamma^\prime$ and $\mathcal{E}_{\Gamma} 
\rightarrow \mathcal{E}_{\Gamma}^\prime$ in the $\textbf x-\textbf t$ configuration. 
(b) The deformation of the curve $\Gamma$ on the space of independent variables.}\label{x-curve}
\end{figure}
Thus, we have to apply the principle of criticality of the action 
\eqref{2-time-action} under: 
\textit{i)} variations $\boldsymbol{t}(s)\,\to\,\boldsymbol{t}
(s)+\delta\boldsymbol{t}(s)$ of the 
independent variables parametrising the curve $\Gamma$, as well as \textit{ii)}
variations $\boldsymbol{x}(t_1,t_2)\,\to\,
\boldsymbol{x}(t_1,t_2)+\delta\boldsymbol{x}(t_1,t_2)$ of the dependent 
variables on an arbitrary evaluation curve $\mathcal{E}_\Gamma$. 
\\
\\
\noindent 
\textit{i)} Requiring criticality of the action w.r.t. variations of the curve   
$\delta S=S[\boldsymbol{x}(t_1+\delta t_1,t_2+\delta t_2);\Gamma^\prime]-
S[\boldsymbol{x}(t_1,t_2);\Gamma]=0$ with the condition 
$\delta\boldsymbol{t}(s_0)=\delta\boldsymbol{t}(s_1)=0$, leads to the 
continuous \textit{closure relation}: 
\begin{subequations}\label{ELsystem} 
\begin{equation}\label{eq:cont-closure}
\frac{\partial \mathscr L_{(t_1)} }{\partial t_2}=\frac{\partial 
\mathscr L_{(t_2)} }{\partial t_1}\;,
\end{equation}
\textit{ii)} Criticality of the action under variations of the 
dependent variables requires to consider two types of variations: 
variations of the variables $\boldsymbol{x}(\boldsymbol{t})$ and 
its derivatives tangential 
to the curve, and variations w.r.t. derivatives of the variables 
$\boldsymbol{x}(\boldsymbol{t})$ transversal to the curve. The latter 
should be treated as independent variations, whilst the former give rise 
to integration by parts in the integral over the variable ``\;$s$\;". 
%
%
%
This leads to the system of Euler-Lagrange (EL) equations: 
\begin{eqnarray} 
&& \frac{\partial \mathscr{L}_{(t_1)}}{\partial\boldsymbol x}\frac{dt_1}{ds}
+\frac{\partial \mathscr{L}_{(t_2)}}{\partial\boldsymbol x}\frac{dt_2}{ds} 
-\frac{d}{ds}\left\{ \frac{1}{\|d\boldsymbol{t}/ds\|^2} \times \right. \nn \\ 
&& \left. \left[ 
\left(\frac{dt_1}{ds}\right)^2\frac{\partial \mathscr{L}_{(t_1)}}{\partial\boldsymbol  x_{t_1}}
+\left(\frac{dt_1}{ds}\right)\left(\frac{dt_2}{ds}\right)\left(\frac{\partial \mathscr{L}_{(t_1)}}{\partial\boldsymbol x_{t_2}}+\frac{\partial \mathscr{L}_{(t_2)}}{\partial\boldsymbol  x_{t_1}}\right)
+ \left(\frac{dt_2}{ds}\right)^2\frac{\partial \mathscr{L}_{(t_2)}}{\partial \boldsymbol  x_{t_2}}
\right]\right\}=0\;,\nn\\\label{eq:5.32as}
\end{eqnarray}
together with 
\begin{eqnarray}
&&\frac{\partial \mathscr{L}_{(t_2)}}{\partial \boldsymbol{x}_{t_1}}\left(\frac{dt_1}{ds}\right)^2
+\left(\frac{\partial \mathscr{L}_{(t_1)}}{\partial\boldsymbol{x}_{t_2}}
-\frac{\partial \mathscr{L}_{(t_2)}}{\partial\boldsymbol{x}_{t_2}}\right)\frac{dt_1}{ds}\frac{dt_2}{ds}
-\frac{\partial \mathscr{L}_{(t_1)}}{\partial \boldsymbol{x}_{t_2}}\left(\frac{dt_1}{ds}\right)^2=0.\label{eq:5.32bs}
\end{eqnarray}\end{subequations} 
Here \eqref{eq:5.32bs} could be considered to be a system of constraints 
whilst \eqref{eq:5.32as} are EL equations along the curve $\Gamma$. 
\footnote{In \cite{Sikarin1} eq. \eqref{eq:5.32as} was given in a slightly  
different form, using a different basis of decomposition of the derivatives 
of $\boldsymbol{x}$ along and transversal to the curve, whereas this particular 
form uses an orthogonal basis as suggested by \cite{Suris}. Although the form 
of \cite{Sikarin1}, which was obtained using a non-orthogonal basis for the 
decomposition, has the slight disadvantage that it is not well-defined 
for points on the curves where $dt_1/dt_2$ becomes singular, both forms 
are equivalent when viewed as part of the system containing both 
\eqref{eq:5.32as} and \eqref{eq:5.32bs}, the latter being invariant under the 
choice of basis.}     

It is a conceptually novel point, put forward in \cite{Sikarin1} and \cite{SF1},  
that the entire set of generalized EL equations \eqref{ELsystem} should be 
considered not only as a system producing equations of the motion for a given 
Lagrange function, but should actually be considered as a system of equations 
for the Lagrangians themselves. The solutions of the system, which are the 
\textit{admissable} Lagrangians, are necessarily the ones associated with 
integrable systems. 

Since the equations \eqref{ELsystem} must hold on an arbitrary 
curve, the system must hold in particular on curves made out of straight 
segments along the $t_1$- and $t_2$-axes. Thus, invoking the closure relation 
\eqref{eq:cont-closure}, we can deform an arbitrary curve $\Gamma$ to a simpler  
curve : $\Gamma \rightarrow \Gamma_1+\Gamma_2$ as shown in Fig. \ref{x-curve}(b). 
On the curve $\Gamma_2$, where the time variable $t_1$ is ``frozen'', the 
constraint equations arises from variations of the derivative $\boldsymbol{x}_{t_1}$, whereas on $\Gamma_1$ the constraint 
arises from variations of the derivative $\boldsymbol{x}_{t_2}$. 
Thus, in the case that $\mathscr{L}_{(t_1)}$ is independent of the latter 
derivatives (as in the example of the RS system) we get the system of equations: 
\begin{subequations}\label{EL23}
\begin{eqnarray}
&&\frac{\partial{\mathscr{L}_{(t_1)}}}{\partial{\boldsymbol{x}}}
-\frac{\partial}{\partial t_1}\left(\frac{\partial{\mathscr{L}_{(t_1)}}}{\partial 
\boldsymbol{x}_{t_1}}\right)=0\;,\;\;\;\;\;\;\;\;\;\;\frac{\partial{\mathscr{L}_{(t_1)}}}{\partial {\boldsymbol{x}_{t_2}}}=0\;, \\ \label{eq:5.32c}
&&\frac{\partial{\mathscr{L}_{(t_2)}}}{\partial{\boldsymbol{x}}}
-\frac{\partial}{\partial t_2}\left(\frac{\partial{\mathscr{L}_{(t_2)}}}{\partial 
\boldsymbol{x}_{t_2}}\right)=0\;,\;\;\;\;\;\;\;\;\;\;
\frac{\partial{\mathscr{L}_{(t_2)}}}{\partial {\boldsymbol{x}_{t_1}}}=0.\label{eq:5.32b}
\end{eqnarray}
\end{subequations}
Thus, we recover the system of EL equations and closure as given in subsection 5.3. 


\section{The connection to the lattice KP systems}\label{conKP}
In contrast to the CM case \cite{Sikarin1}, where we started with a semi-discrete KP equation, and applied a pole-reduction to it to a yield a compatible CM system, 
here we start from RS system and reconnect it to the fully discrete lattice KP systems. In \cite{FrankN8}, the connection between the RS system and the KP system 
was established for the trigonometric case, but here we will focus on the (simpler) rational case as it clarifies the situation more clearly, 

We will develop now a scheme along the lines of the papers \cite{Rinthesis,NAH,FrankJames,Sikarin2}. Starting from the ``solution matrix'' $\boldsymbol{Y}(n,m)$ given in \eqref{Ynm1}, we will introduce the relevant $\tau$-function as its characteristic polynomial: 
\begin{equation}\label{Sec}
\tau (\xi )=\det(\xi \boldsymbol I-\boldsymbol Y) \;,
\end{equation}
where $\bsY=\bsY(n,m,h)$ is now a function of three discrete variables obeying the 
shift relations: 
\begin{subequations}\label{KPYrels} 
\begin{eqnarray}\label{KPYn1}
\wt{\boldsymbol{Y}}-\boldsymbol{Y}+\mu \boldsymbol{I}&=&\wt{\boldsymbol{r}}\boldsymbol{s}^T\;,\\
\wh{\boldsymbol{Y}}-\boldsymbol{Y}+\eta \boldsymbol{I}&=&\wh{\boldsymbol{r}}\boldsymbol{s}^T\;,\\
\ol{\boldsymbol{Y}}-\boldsymbol{Y}+\nu \boldsymbol{I}&=&\ol{\boldsymbol{r}}\boldsymbol{s}^T\;,
\end{eqnarray}
\end{subequations}
where $\boldsymbol{r}$ and $\boldsymbol{s}$ depend on the discrete variables via the following 
shift relations (see \eqref{rs}): 
\begin{subequations}\label{KPrs}\begin{eqnarray}
&& (p\boldsymbol{I}+\bLam)\cdot \wt{\boldsymbol r}=\boldsymbol r\;,\;\;\;\;\boldsymbol{s}^T\cdot (p\boldsymbol{I}+\bLam)=\wt{\boldsymbol s}^T\;, \\ 
&& (q\boldsymbol{I}+\bLam)\cdot \wh{\boldsymbol r}=\boldsymbol r\;,\;\;\;\;\boldsymbol{s}^T\cdot (q\boldsymbol{I}+\bLam)=\wh{\boldsymbol s}^T\;, \\ 
&& (r\boldsymbol{I}+\bLam)\cdot \ol{\boldsymbol r}=\boldsymbol r\;,\;\;\;\;\boldsymbol{s}^T\cdot (r\boldsymbol{I}+\bLam)=\ol{\boldsymbol s}^T\;.
\end{eqnarray}\end{subequations} 
As explained in Appendix A, we have introduced in \eqref{KPYrels} a slight generalization by 
introducing the parameters $\mu$, $\eta$ and $\nu$ instead of all three being equal to 
$\lambda$. The shifts $\wt{\phantom{a}}$ and $\wh{\phantom{a}}$ are, as before, lattice 
shifts associated with the lattice parameters $p$ and $q$ respectively, whereas the shift in 
the third variable is indicated by $\ol{\phantom{a}}$ and is associated with a lattice 
parameter $r$. 

To derive the equations directly from the resolvent of the matrix $\boldsymbol Y$, we proceed as 
follows. First, we perform the simple computation
\begin{eqnarray}
\wt{\tau}(\xi)&=&\det(\xi+\mu-\boldsymbol Y-\wt{\boldsymbol r}\boldsymbol s^T)\;,\nn\\
&=&\det((\xi+\mu-\boldsymbol Y)(1-\wt{\boldsymbol r}\boldsymbol s^T(\xi+\mu-\boldsymbol Y)^{-1}))\;,\nn\\
&=&\tau(\xi+\mu)(1-\boldsymbol s^T(\xi+\mu-\boldsymbol Y)^{-1}\wt{\boldsymbol r})\;,\nn
\end{eqnarray}
then we have
\begin{equation}\label{tau1}
\frac{\wt{\tau}(\xi)}{\tau(\xi+\mu)}=1-\boldsymbol s^T(\xi+\mu-\boldsymbol Y)^{-1}(p+\bLam)^{-1}\boldsymbol r=\boldsymbol {\mathrm v}_{p}(\xi+\mu)\;,
\end{equation}
in which the function $\mathrm{v}_p$ is given by 
\begin{equation}\label{va}
\mathrm{v}_a(\xi):= 1-\boldsymbol s^T(\xi-\boldsymbol Y)^{-1}(a+\bLam)^{-1}{\boldsymbol r}
\end{equation}
for a general parameter $a$ setting $a=p$. 
\\
\\
The reverse formula to Eq. \eqref{tau1} can be obtained by a similar computation: 
\begin{eqnarray}
\tau(\xi)&=&\det(\xi-\mu-\wt{\boldsymbol Y}+\wt{\boldsymbol r}\boldsymbol s^T)\;,\nn\\
&=&\det((\xi-\mu-\wt{\boldsymbol Y})(1+\wt{\boldsymbol r}\boldsymbol s^T(\xi-\mu-\wt{\boldsymbol Y})^{-1}))\;,\nn\\
&=&\wt{\tau}(\xi-\mu)(1+ {\boldsymbol s}^T(\xi-\mu-\wt{\boldsymbol Y})^{-1}\wt{\boldsymbol r})\;,\nn
\end{eqnarray}
then we have
\begin{equation}\label{tau2}
\frac{\tau(\xi)}{\wt{\tau}(\xi-\mu)}=1+\wt{\boldsymbol s}^T(p+\bLam)^{-1}(\xi-\mu-\wt{\boldsymbol Y})^{-1}\wt{\boldsymbol r}=\wt{\boldsymbol {\mathrm w}}_{p}(\xi-\mu)\;, 
\end{equation}
in which the function $\mathrm{w}_p$ is given by 
\begin{equation}\label{wa}
\mathrm{w}_a(\xi):= 1+\boldsymbol s^T(a+\bLam)^{-1}(\xi-\boldsymbol Y)^{-1}{\boldsymbol r}
\end{equation}
for again a general parameter $a$ setting $a=p$. 
\\
\\
From \eqref{tau1} and \eqref{tau2}, we have the relation
\begin{equation}\label{tauvw}
\frac{\tau(\xi)}{\wt{\tau}(\xi-\mu)}=\wt{\boldsymbol {\mathrm w}}_{p}(\xi-\mu)=\frac{1}{\boldsymbol {\mathrm v}_{p}(\xi)}\;.
\end{equation}
The same type of the relation for the other discrete directions can be obtained in the forms
\begin{subequations}
\begin{eqnarray}\label{tauvw12}
&&\frac{\tau(\xi)}{\wh{\tau}(\xi-\eta  )}=\wh{\boldsymbol {\mathrm w}}_{q}(\xi-\eta )=\frac{1}{\boldsymbol {\mathrm v}_{q}(\xi)}\;,\\
&&\frac{\tau(\xi)}{\ol{\tau}(\xi-\nu   )}=\ol{\boldsymbol {\mathrm w}}_{r}(\xi-\nu   )=\frac{1}{\boldsymbol {\mathrm v}_{r}(\xi)}\;.
\end{eqnarray}
\end{subequations}
In order to derive discrete KP equations for $\tau(\xi)$, $\boldsymbol {\mathrm w}$ and $\boldsymbol {\mathrm v}$, we introduce the N-component vectors
 \begin{subequations}
\begin{eqnarray}
\boldsymbol{ u}_a(\xi)&=&(\xi-\boldsymbol Y)^{-1}(a+\bLam)^{-1}\boldsymbol{r}\;,\label{uu1}\\
\boldsymbol{\tbu}_b(\xi)&=&\boldsymbol{s}^T(b+\bLam)^{-1}(\xi-\boldsymbol Y)^{-1}\;,\label{uu2}
\end{eqnarray}
\end{subequations}
as well as the scalar variables
\begin{equation}\label{S}
S_{ab}(\xi)=\boldsymbol{s}^T(b+\bLam)^{-1}(\xi-\boldsymbol Y)^{-1}(a+\bLam)^{-1}\boldsymbol{r}\;.
\end{equation}
We now consider \eqref{uu1} which can be written in the form
\begin{eqnarray}
\boldsymbol{ u}_a(\xi)&=&(p-a)\wt{\boldsymbol{ u}}_a(\xi-\mu )+\boldsymbol{\mathrm v}_a(\xi)\wt{\boldsymbol{ u}}_0(\xi-\mu )\;,\label{ua}
\end{eqnarray}
with $\boldsymbol{ u}_0(\xi)=(\xi-\boldsymbol Y)^{-1}\boldsymbol{r}$.
\\
\\
The same process can be applied to \eqref{uu2} and we obtain
\begin{equation}\label{ub}
\wt{\boldsymbol{ \tbu}}_b(\xi)=(p-b)\wt{\boldsymbol{ \tbu}}_b(\xi+\mu )+\wt{\boldsymbol{\mathrm w}}_b(\xi)\wt{\boldsymbol{ \tbu}}_0(\xi+\mu )\;,
\end{equation}
with $\boldsymbol{ \tbu}_0(\xi)=\boldsymbol{s}^T(\xi-\boldsymbol Y)^{-1}$.
\\
\\
Another type of relation can be obtained by multiply $\wt{\boldsymbol{s}}^T(b+\bLam)^{-1}$ on the left hand side of \eqref{ua}. We have
\begin{eqnarray}
\wt{\boldsymbol{s}}^T(b+\bLam)^{-1}\boldsymbol{ u}_a(\xi)&=&(p-a)\wt{\boldsymbol{s}}^T(b+\bLam)^{-1}\wt{\boldsymbol{ u}}_a(\xi-\mu )\nn\\
&&+\boldsymbol{\mathrm v}_a(\xi)\wt{\boldsymbol{s}}^T(b+\bLam)^{-1}\wt{\boldsymbol{ u}}_0(\xi-\mu )\;,\nn\\
\boldsymbol{s}^T(p+\bLam)(b+\bLam)^{-1}\boldsymbol{ u}_a(\xi)&=&(p-a)\wt{S}_{ab}(\xi-\mu )+\boldsymbol{\mathrm v}_a(\xi)\wt{\boldsymbol{ \mathrm{w}}}_b(\xi-\mu )\;,\nn\\
\boldsymbol{\mathrm v}_a(\xi)\wt{\boldsymbol{ \mathrm{w}}}_b(\xi-\mu )&=&1+(p-b)S_{ab}(\xi)-(p-a)\wt{S}_{ab}(\xi-\mu )\;.\label{vw1}
\end{eqnarray}
Similarly, multiplying the right hand side of \eqref{ub}, we obtain
\begin{equation}
\wt{\boldsymbol{\mathrm w}}_b(\xi)\boldsymbol{ \mathrm{v}}_a(\xi+\mu )=1+(p-b)S_{ab}(\xi+\mu )-(p-a)\wt{S}_{ab}(\xi)\;.
\label{vw2}
\end{equation}
By proceeding the similar steps, we can derive the relations in other discrete-time directions, namely
 \begin{subequations}
\begin{eqnarray}
\boldsymbol{\mathrm v}_a(\xi)\wh{\boldsymbol{ \mathrm{w}}}_b(\xi-\eta  )&=&1+(q-b)S_{ab}(\xi)-(q-a)\wh{S}_{ab}(\xi-\eta  )\;,
\label{vw3}\\
\boldsymbol{\mathrm v}_a(\xi)\ol{\boldsymbol{ \mathrm{w}}}_b(\xi-\nu  )&=&1+(r-b)S_{ab}(\xi)-(r-a)\ol{S}_{ab}(\xi-\nu  )\;,
\label{vw3}
\end{eqnarray}
\end{subequations}
%
Using the identity
\begin{equation}
\frac{\wt{\overline{\boldsymbol{\mathrm w}}}_b(\xi-\mu  -\eta )\overline{\boldsymbol{ \mathrm{v}}}_a(\xi-\mu  )}{\wh{\overline{\boldsymbol{\mathrm w}}}_b(\xi-\mu -\nu  )\overline{\boldsymbol{ \mathrm{v}}}_a(\xi-\mu  )}
=\frac{\wt{\overline{\boldsymbol{\mathrm w}}}_b(\xi-\mu  -\nu  )\wt{\boldsymbol{ \mathrm{v}}}_a(\xi-\mu   )}{\wh{\overline{\boldsymbol{\mathrm w}}}_b(\xi-\mu  -\eta )\wh{\boldsymbol{ \mathrm{v}}}_a(\xi-\mu   )}\;
\frac{\wh{\wt{\boldsymbol{\mathrm w}}}_b(\xi-\mu  -\eta  )\wh{\boldsymbol{ \mathrm{v}}}_a(\xi-\eta )}{\wh{\wt{\boldsymbol{\mathrm w}}}_b(\xi-\nu  -\eta )\wt{\boldsymbol{ \mathrm{v}}}_a(\xi-\nu  )}\;,
\label{idenvw}
\end{equation}
we can derive
\begin{eqnarray}\label{SKP}
&&\frac{1+(p-b)\overline{S}_{ab}(\xi-\nu  )-(p-a)\wt{\overline S}_{ab}(\xi-\mu  -\nu  )}{1+(q-b)\overline{S}_{ab}(\xi-\nu   )-(q-a)\wh{\overline S}_{ab}(\xi-\nu  -\eta )}\nn\\
&&\;\;\;\;\;\;\;\;\;\;\;=\frac{1+(r-b)\wt S_{ab}(\xi-\mu   )-(r-a)\wt{\overline{S}}_{ab}(\xi-\mu  -\nu  )}{1+(q-b)\wt S_{ab}(\xi-\mu   )-(q-a)\wh{\wt{S}}_{ab}(\xi-\mu  -\eta  )}\nn\\
&&\;\;\;\;\;\;\;\;\;\;\;\;\;\;\;\;\;\;\;\;\;\;\;\;\times \frac{1+(p-b)\wh S_{ab}(\xi-\eta  )-(p-a)\wh{\wt{S}}_{ab}(\xi-\mu  -\eta  )}{1+(r-b)\wh S_{ab}(\xi-\eta   )-(r-a)\wh{\overline{S}}_{ab}(\xi-\eta  -\nu  )}\;,
\end{eqnarray}
which is a three-dimensional lattice equation which appeared first (in a slightly different form) in 
\cite{NCWQ}. Effectively, this is the \emph{Schwarzian lattice KP equation} which in its canonical form was first given in \cite{DN}, cf. also \cite{Sikarin2}. 
\\
\\
We now multiply $\wt{\boldsymbol s}^T$ on the left hand side of \eqref{ua} leading to
\begin{eqnarray}\label{us}
\wt{\boldsymbol s}^T\boldsymbol{ u}_a(\xi)&=&(p-a)\wt{\boldsymbol s}^T\wt{\boldsymbol{ u}}_a(\xi-\mu )+\boldsymbol{\mathrm v}_a(\xi)\wt{\boldsymbol s}^T\wt{\boldsymbol{ u}}_0(\xi-\mu )\;,\nn\\
\boldsymbol s^T(p+\bLam)\boldsymbol{ u}_a(\xi)&=&(p-a)(1-\wt{\boldsymbol{ \mathrm{v}}}_a(\xi-\mu ))+\boldsymbol{\mathrm v}_a(\xi)\wt{\boldsymbol s}^T\wt{\boldsymbol{ u}}_0(\xi-\mu )\;.
\end{eqnarray}
Introducing 
\begin{equation}\label{u00}
u_{00}(\xi)=\boldsymbol s^T(\xi-\boldsymbol Y)^{-1}\boldsymbol r\;
\end{equation}
\eqref{us} can be written in the form
\begin{eqnarray}\label{u001}
(p+\wt{u}_{00}(\xi-\mu ))\boldsymbol{\mathrm v}_a(\xi)-(p-a)\wt{\boldsymbol{\mathrm v}}_a(\xi)=a+\boldsymbol s^T\bLam\boldsymbol{ u}_a(\xi)\;.
\end{eqnarray}
Another two relations related to the ``$\;\;\wh{}\;\;$" and ``$\;\;\bar{}\;\;$" directions can be automatically obtained
\begin{subequations}
\begin{eqnarray}
(q+\wh{u}_{00}(\xi-\eta  ))\boldsymbol{\mathrm v}_a(\xi)-(q-a)\wh{\boldsymbol{\mathrm v}}_a(\xi-\eta  )&=&a+\boldsymbol s^T\bLam\boldsymbol{ u}_a(\xi)\label{u001}\;,\\
(r+\overline{u}_{00}(\xi-\nu  ))\boldsymbol{\mathrm v}_a(\xi)-(r-a)\overline{\boldsymbol{\mathrm v}}_a(\xi-\nu  )&=&a+\boldsymbol s^T\bLam\boldsymbol{ u}_a(\xi)\label{u001}\;.
\end{eqnarray}
\end{subequations}
Eliminating the term $\boldsymbol s^T\bLam\boldsymbol{ u}_a(\xi)$, we can derive the relations
\begin{subequations}
\begin{eqnarray}\label{set1}
(p-q+\wt{u}_{00}(\xi-\mu )-\wh{u}_{00}(\xi-\eta  ))\boldsymbol{\mathrm v}_a(\xi)&=&(p-a)\wt{\boldsymbol{\mathrm v}}_a(\xi-\mu )\nn\\
&&-(q-a)\wh{\boldsymbol{\mathrm v}}_a(\xi-\eta  )\label{u00vw1}\;,\\
(p-r+\wt{u}_{00}(\xi-\mu )-\overline{u}_{00}(\xi-\nu  ))\boldsymbol{\mathrm v}_a(\xi)&=&(p-a)\wt{\boldsymbol{\mathrm v}}_a(\xi-\mu )\nn\\
&&-(r-a)\overline{\boldsymbol{\mathrm v}}_a(\xi-\nu  )\label{u00vw2}\;,\\
(r-q+\overline{u}_{00}(\xi-\nu  )-\wh{u}_{00}(\xi-\eta  ))\boldsymbol{\mathrm v}_a(\xi)&=&(r-a)\overline{\boldsymbol{\mathrm v}}_a(\xi-\nu  )\nn\\
&&-(q-a)\wh{\boldsymbol{\mathrm v}}_a(\xi-\eta  )\label{u00vw3}\;.
\end{eqnarray}
\end{subequations}
We now set $p=a$ then \eqref{u00vw1} and \eqref{u00vw2} become
\begin{subequations}\label{set2}
\begin{eqnarray}
p-q+\wt{u}_{00}(\xi-\mu )-\wh{u}_{00}(\xi-\eta  )&=&-(q-p)\frac{\wh{\boldsymbol{\mathrm v}}_p(\xi-\eta    )}{\boldsymbol{\mathrm v}_p(\xi)}\label{u00vw11}\;,\\
p-r+\wt{u}_{00}(\xi-\mu )-\overline{u}_{00}(\xi-\nu )&=&-(r-p)\frac{\overline{\boldsymbol{\mathrm v}}_p(\xi-\nu    )}{\boldsymbol{\mathrm v}_p(\xi)}\label{u00vw21}\;,
\end{eqnarray}
\end{subequations}
The combination of \eqref{u00vw11} and \eqref{u00vw21} gives
\begin{equation}\label{LKP}
\frac{p-q+\wt{u}_{00}(\xi-\mu )-\wh{u}_{00}(\xi-\eta  )}{p-r+\wt{u}_{00}(\xi-\mu )-\overline{u}_{00}(\xi-\nu  )}
=\frac{p-q+\wt{\overline u}_{00}(\xi-\mu -\nu  )-\wh{\overline u}_{00}(\xi-\eta  -\nu  )}{p-r+\wh{\wt{u}}_{00}(\xi-\mu -\eta  )-\wh{\overline{u}}_{00}(\xi-\eta -\nu   )}\;,
\end{equation}
which is the ``\emph{lattice KP equation}", \cite{NCWQ}, cf. also \cite{NCW}. 
\\
\\
From the definition of the function $\boldsymbol{\mathrm v}_p(\xi)$ in \eqref{tau1}, \eqref{u00vw11} and \eqref{u00vw21} can be written in terms 
of the $\tau$-function
\begin{subequations}
\begin{eqnarray}
p-q+\wt{u}_{00}(\xi-\mu )-\wh{u}_{00}(\xi-\eta  )&=&-(q-p)\frac{\wh{\wt{\tau}}(\xi-\mu  -\eta )}{\wh{\tau}(\xi-\eta  )}\frac{\tau(\xi)}{\wt{\tau}(\xi-\mu )}\label{u00vw113}\;,\\
p-r+\wt{u}_{00}(\xi-\mu )-\overline{u}_{00}(\xi-\nu  )&=&-(r-p)\frac{\wt{\overline{\tau}}(\xi-\mu  -\nu )}{\overline{\tau}(\xi-\nu  )}\frac{\tau(\xi)}{\wt{\tau}(\xi-\mu )}\label{u00vw213}\;.
\end{eqnarray}
\end{subequations}
From \eqref{u00vw3}, if we set $r=a$ we also have
\begin{eqnarray}
r-q+\overline{u}_{00}(\xi-\mu )-\wh{u}_{00}(\xi-\eta  )=-(q-r)\frac{\wh{\overline{\tau}}(\xi-\eta -\nu  )}{\wh{\tau}(\xi-\eta  )}\frac{\tau(\xi)}{\overline{\tau}(\xi-\nu  )}\label{u00vw114}\;.
\end{eqnarray}
The combination of \eqref{u00vw113} \eqref{u00vw213} \eqref{u00vw114} yields
\begin{eqnarray}\label{Hirota}
&&(p-q)\wh{\wt{\tau}}(\xi-\mu  -\eta  )\overline{\tau}(\xi-\nu  )+(r-p)\wt{\overline{\tau}}(\xi-\mu  -\nu  )\wh{\tau}(\xi-\eta )\nn\\
&&\;\;\;\;\;\;\;\;\;\;\;\;\;\;\;\;\;\;\;\;\;\;\;\;\;\;\;+(r-q)\wh{\overline{\tau}}(\xi-\eta  -\nu  )\wt{\tau}(\xi-\mu )=0\;,
\end{eqnarray}
which is the bilinear lattice KP equation, (originally coined DAGTE, cf. \cite{Hirota}). 

To summarize, we have established in this section a direct connection between the discrete-time Ruijsenaars 
model, embedded in a multi-time space, and well-known lattice systems of KP type. This shows that the 
rational discrete-time RS model yields a special class of rational solutions of the KP equation 
through the exploitation of the matrix $\boldsymbol{Y}(n,m,h)$, whose eigenvalues are the RS particle positions 
and which at the same time acts as a kernel for the lattice KP solutions. In this way we obtain  
solutions for all members of the family of KP lattices as classified in \cite{ABS2}, cf. also 
\cite{NC92}. In the trigonometric case of the discrete-time RS system the corresponding solutions are of 
soliton type, cf. \cite{FrankN8}. The connection between the (continuous-time) RS system and solitons has 
also been discussed in \cite{R3}.

\vspace{.2cm} 

\textbf{Remark:} We note that in the non-relativistic limit $\ld\to 0$ , discussed at the end of section 2, 
where we have identified the limiting behaviours 
$p\rightarrow e^{-\lambda p_{CM}}$ and $q\rightarrow e^{-\lambda q_{CM}}$ of the lattice parameters, 
we have as a consequence that ~$p-q\to -\ld(p_{CM}-q_{CM})+\mathcal{O}(\ld^2)$~. Hence, the non-relativistic limit 
$\ld\to 0$ coincides with the so-called ``skew continuum limit'' exploited in the context of the 
lattice equations of KP type, see e.g.  \cite{NCWQ,F5}. This fits with the picture painted in 
\cite{Sikarin1}, where instead of the fully discrete lattice KP equations, the rational CM system was 
treated as arising from a reduction of a semi-discrete KP equation which indeed 
can be obtained by performing a special continuum limit on one of the discrete 
variables.

\section{Discussion}
In this paper we have studied the Lagrangian structure for the Ruijsenaars-Schneider system, and shown that similarly to 
the Calogero-Moser system, which was treated in \cite{Sikarin1}, it possesses a Lagrangian 1-form structure, both on the discrete-time level as well 
as in the continuous-time case.  Thus, this is the second example of a system of ODEs which exhibits a Lagrangian multi-form structure in the sense 
of \cite{SF1} but in a lower-dimensional situation. The present example is important, because in contrast to the CM case where the Lagrange structure 
is closely related to the Lax representation (and hence inherit the closure relation from the zero-curvature condition), here the relation 
between the Lax matrices and the Lagrangians is less obvious, and the validity of the closure relation 
has to be verified by a separate calculation and is therefore more surprising. Thus, we believe that 
these results seem to confirm once again that these Lagrangian form structures are fundamental and 
ubiquitous among integrable systems. 

It is well known that the classical RS system is Liouville-integrable in the continuous-time case, \cite{R1,R2} and formally so in the 
discrete-time case \cite{FrankN8,Frank9} as well. 
With regard to the continuous-time model, the Lagrangian 1-form structure had to be established in 
a rather indirect way, namely by performing systematic limits on Lagrangians of the discrete-time 
system. We have already pointed out that establishing these Lagrange structures 
by Legendre transformation from the known Hamiltonians of the model is complicated, because it is not 
\emph{a priori} known how these 
Hamiltonian flows are embedded in a coherent structure, such that we get acceptable Lagrangian 
components of the 1-form. In this sense the 
discrete-time model can be viewed as a generating object for such Lagrangians for the continuous-time 
model. On the basis of those results, which in fact establish the proper form of 
the ``kinetic terms'' of the continuous hierarchy of Lagrangians it is possible to show that the 
Lagrangian 1-form structure precisely selects the general form of the integrable ``potentials'' 
when a priori arbitrary forms for those potentials are fed into the determining equations, cf. 
\cite{NYK}.  

      
In conclusion, let us state that in our view the importance of this new Lagrangian form structure 
resides in the understanding that it manifests the multidimensional consistency, in the sense of 
the papers \cite{NijWalk,BS}, at the level of the variational principle:  
It provides an answer to the problem of how to find a single Lagrangian framework for a situation 
where we have a multitude of compatible equations  imposed on one and the same 
(possibly vector-valued) function of many independent variables. In the case of ODEs, 
as is the case dealt with in the present paper, the structure is that of a Lagrangian 1-form 
describing systems of commuting flows in many time-variables (as many as the number of degrees of 
freedom of the system). It is obvious that for this structure to hold, the relevant 
Lagrangian components of the 1-form should have very specific forms, 
in order for the closure relation to hold subject to the equations of the motion. 
In fact, such admissable Lagrangians can be considered themselves to be  
solutions of the system of equations arising from the variational principle. 
In the continuous case the constitutive relations arising from this new variational principle, which 
involves variations not only with respect to the dependent variables but also with respect to the underlying 
geometry, were first given in \cite{Sikarin1,Rinthesis}. In a recent paper, \cite{Suris}, Yu. 
Suris from a slightly different point of view\footnote{Rather than considering the variations 
with respect to the geometry \cite{Suris} inspired by our results, considered Lagrangian 1-forms on arbitrary 
curves. We argue, however, that posing a least-action principle with respect to both dependent as 
well as independent variables is a conceptually important step, forming a new paradigm in 
variational calculus, cf. also \cite{Gelfand}, and constitutes potentially a novel principle of 
fundamental physics.} formulated the corresponding Legendre transform.   
That theories which exhibit structures as exemplified in the present paper, but also in higher 
dimensions, always correspond to integrable systems in the sense of other well-known integrability 
features (such as the applicability of the inverse scattering, existence of Lax pairs and higher 
symmetries, etc.) is challenging question that we hope to answer in future work.

\appendix
\section{The construction of the exact solution}\label{A}
\numberwithin{equation}{section}
In this Appendix we review the construction of the exact solution for the RS system. The basic relations following from the Lax pair \eqref{LM} together with the definitions \eqref{LM2} lead to
\begin{subequations}
\begin{eqnarray}
\mu  \boldsymbol{M}_0+\wt{\boldsymbol X}\boldsymbol{M}_0-\boldsymbol{M}_0\boldsymbol X=\wt{h}h^T\;,\label{MX}\\
\lambda \boldsymbol{L}_0+\boldsymbol X\boldsymbol{L}_0-\boldsymbol{L}_0\boldsymbol X=hh^T\;,\label{LX}
\end{eqnarray}
\end{subequations}
where $\boldsymbol X=\sum_{i=1}^Nx_iE_{ii}$ is the diagonal matrix of the particle positions. We have adopted here 
the freedom of making the model slightly more general by introducing in addition to the (relativistic) 
parameter $\lambda$ a new parameter $\mu $ replacing $\lambda$ in the $ \boldsymbol{M}_0$ matrix. 
On the other hand, from the Lax equation \eqref{iden1} and \eqref{iden2}, we obtain the relations
\begin{subequations}\label{XX1}
\begin{eqnarray}
\wt{\boldsymbol{L}}_0\boldsymbol{M}_0&=&\boldsymbol{M}_0\boldsymbol{L}_0\;,\label{L0M01}\\
\wt{\boldsymbol{L}}_0\wt{h}-\boldsymbol{M}_0h&=&-p\wt{h}\;,\label{L0M02}\\
h^T\boldsymbol{L}_0-\wt{h}^T\boldsymbol{M}_0&=&-ph^T\;,\label{L0M03}
\end{eqnarray}
\end{subequations}
where \eqref{L0M02} and \eqref{L0M03} are equivalent to the relations \eqref{idenforh2aa}. 
We now factorize the Lax matrices as follows: 
\begin{equation}\label{XX2}
\boldsymbol{L}_0=\boldsymbol U_0\bLam \boldsymbol U_0^{-1}\;,\;\;\;\;\;\mbox{and}\;\;\;\;\;\boldsymbol{M}_0=\wt{\boldsymbol U}_0\boldsymbol U_0^{-1}\;,
\end{equation}
where $\boldsymbol U_0$ is an invertible $N \times N$ matrix, and where the matrix $\bLam$ 
is constant: $\wt{\bLam}=\bLam$, as a consequence of \eqref{L0M01}. (Obviously, if $\boldsymbol{L}_0$ 
is diagonalizable $\bLam$ is just its diagonal matrix of eigenvalues). Next, introducing
\begin{equation}\label{Yrs}
\boldsymbol{Y}=\boldsymbol U_0^{-1}\boldsymbol{X}\boldsymbol U_0\;,\;\;\;\boldsymbol{r}=\boldsymbol U_0^{-1}\cdot h\;,\;\;\;\boldsymbol{s}^T=h\cdot \boldsymbol U_0\;,
\end{equation}
we obtain from \eqref{XX1} and \eqref{XX2},
\begin{equation}\label{rs}
(p\boldsymbol{I}+\bLam)\cdot \wt{\boldsymbol r}=\boldsymbol r\;,\;\;\;\;\boldsymbol{s}^T\cdot (\mathrm{p}\boldsymbol{I}+\bLam)=\wt{\boldsymbol s}^T\;,
\end{equation}
where $\boldsymbol I$ is the unit matrix, as well as from \eqref{MX} and \eqref{LX}, we have
\begin{eqnarray}\label{Yrs2}
\mu +\wt{\boldsymbol Y}-\boldsymbol Y&=&\wt{\boldsymbol r}\boldsymbol s^T\;,\\
\lambda\bLam+[\boldsymbol Y,\bLam]&=&\boldsymbol r\boldsymbol s^T\;.
\end{eqnarray}
Eliminating the dyadic $\boldsymbol r\boldsymbol s^T$ from \eqref{Yrs2} by making use of \eqref{rs}, we find the linear equation
\begin{equation}\label{Yn1}
\wt{\boldsymbol{Y}}=(p\boldsymbol{I}+\bLam)^{-1}\boldsymbol Y (p\boldsymbol{I}+\bLam)-\frac{p\mu }{p\boldsymbol{I}+\bLam}
+\frac{(\lambda -\mu )\bLam}{p\boldsymbol{I}+\bLam}\;,
\end{equation}
which can be immediately solved to give
\begin{equation}\label{Yn22}
\boldsymbol{Y}(n,m)=(p\boldsymbol{I}+\bLam)^{-n}\boldsymbol Y(0,m)(p\boldsymbol{I}+\bLam)^n-\frac{np\mu }{p\boldsymbol{I}+\bLam} 
+\frac{n(\lambda -\mu )\bLam}{p\boldsymbol{I}+\bLam}\;,
\end{equation}
subject to the constraint on the initial value matrix
\begin{equation}\label{constraint(n)}
[\boldsymbol Y(0,m),\bLam]=-\lambda\bLam+\mbox{rank 1}\;.
\end{equation}
We now consider the matrix $\boldsymbol{N}_0$ in \eqref{aKN2} which can be rewritten in the form
\begin{eqnarray}
\eta  \boldsymbol{N}_0+\wh{\boldsymbol X}\boldsymbol{N}_0-\boldsymbol{N}_0\boldsymbol X=\wh{h}h^T\;,\label{NX}
\end{eqnarray}
where $\eta $ is another relativistic parameter associating to the temporal Lax matrix $\boldsymbol{N}$. Iterating the same process as we did before, we find
\begin{equation}\label{Ym1}
\wh{\boldsymbol{Y}}=(q\boldsymbol{I}+\bLam)^{-1}\boldsymbol Y (q\boldsymbol{I}+\bLam)-\frac{q\eta  }{q\boldsymbol{I}+\bLam}
+\frac{(\lambda -\eta  )\bLam}{q\boldsymbol{I}+\bLam}\;.
\end{equation}
Combining \eqref{Yn1} and \eqref{Ym1}, we can solve for
\begin{eqnarray}\label{Ynm1}
\boldsymbol{Y}(n,m)&=&(p\boldsymbol{I}+\bLam)^{-n}(q\boldsymbol{I}+\bLam)^{-m}\boldsymbol Y(0,0)(q\boldsymbol{I}+\bLam)^{m}(p\boldsymbol{I}+\bLam)^n\nn\\
&&-\frac{np\mu }{p\boldsymbol{I}+\bLam} -\frac{mq\eta }{q\boldsymbol{I}+\bLam}+\frac{n(\lambda -\mu )\bLam}{p\boldsymbol{I}+\bLam}
+\frac{m(\lambda -\eta  )\bLam}{q\boldsymbol{I}+\bLam}\;.
\end{eqnarray}
If we take $\lambda =\mu =\eta $ we recover the relation \eqref{EXACT}.
%
\\
\\

Conversely, we can start from a given $N\times N$ diagonal matrix $\bLam$ with distinct entries, and an initial value matrix $\bsY(0,0)$ subject to the condition that
\begin{equation}
[\bsY(0,0),\bLam]=-\lambda\bLam+{\rm rank}\, 1\;,
\end{equation}
where $[\;,\;]$ represents the matrix commutator bracket. Let $\bU^{-1}(0,0)$ be the matrix that diagonalized $\bsY(0,0)$, i.e., such that
\begin{equation}\label{Ydiag} 
\bsY(0,0)=\bU^{-1}(0,0)\,\bsX(0,0)\,\bU(0,0))\quad,\quad \bsX(0,0)={\rm diag}(x_1(0,0),\dots, x_N(0,0))\  .  
\end{equation}
If the eigenvalues of $\bsY(0,0)$ are distinct (which we can take as an assumption on the initial condition) then $\bU^{-1}(0,0)$ is 
determined up to multiplication from the right by a diagonal matrix times a permutation matrix of the columns. (Fixing an ordering of the 
eigenvalues $x_i(0,0)$, $\bU^{-1}(0,0)$ unique only up to multiplication by a diagonal matrix from the right).  We can fix $\bU^{-1}(0,0)$ 
up to an overall multiplicative factor by demanding that 
\begin{equation}
[\boldsymbol{Y}(0,0)\,,\,\boldsymbol{\Lambda}]=-\lambda\boldsymbol{\Lambda} +\boldsymbol r(0,0)\,\boldsymbol s^T(0,0) \   .  
\end{equation}
Next, we consider the matrix function given by
\begin{eqnarray}\label{EXACTapp}
\boldsymbol{Y}(n,m)&=&(p\boldsymbol{I}+\bLam)^{-n}(q\boldsymbol{I}+\bLam)^{-m}\boldsymbol Y(0,0)(q\boldsymbol{I}+\bLam)^{m}(p\boldsymbol{I}+\bLam)^n\nn\\
&&-\frac{np\mu }{p\boldsymbol{I}+\bLam} -\frac{mq\eta }{q\boldsymbol{I}+\bLam}+\frac{n(\lambda -\mu )\bLam}{p\boldsymbol{I}+\bLam}
+\frac{m(\lambda -\eta  )\bLam}{q\boldsymbol{I}+\bLam}\;.  
\end{eqnarray} 
Let $\bU(n,m)$ be the matrix diagonalizing $\bsY(n,m)$ by an appropriate choice of an overall factor (as a function of $n$ and $m$) this matrix can be fixed such that it obeys:
\begin{equation}
\boldsymbol r(n,m)=(p\bI+\bLam)^{-n}(q\bI+\bLam)^{-m}\boldsymbol r(0,0)\;,\;\;\mbox{and}\;\;\;\boldsymbol s^T(n,m)=\boldsymbol s^T(0,0)(p\bI+\bLam)(q\bI+\bLam)\;,
\end{equation}
and
\begin{equation}\label{YCOMM}
[\boldsymbol{Y}(n,m)\,,\,\boldsymbol{\Lambda}]=-\lambda\boldsymbol{\Lambda} +\boldsymbol r(n,m)\,\boldsymbol s^T(n,m) \   .  
\end{equation}
From the expression \eqref{EXACTapp} we can derive the relations
\begin{subequations}\label{eq:LamY2}\begin{eqnarray}
&& (p\bI+\bLam)\,\wt{\bsY}-\bsY\,(p\bI+\bLam)=-p\mu +(\lambda -\mu )\bLam\  , \\ 
&& (q\bI+\bLam)\,\wh{\bsY}-\bsY\,(q\bI+\bLam)=-q\eta +(\lambda -\eta )\bLam  ,
\end{eqnarray}\end{subequations}
with the usual notation for the shifts in $n$ and $m$ over one unit. Together with the relation \eqref{YCOMM} 
this subsequently yields:
\begin{equation}\label{eq:YY2}
\wt{\bsY}-\bsY=-\mu +\wt{\boldsymbol r}\boldsymbol s^T\quad,\quad \wh{\bsY}-\bsY=-\eta +\wh{\boldsymbol r}\boldsymbol s^T\   . 
\end{equation}
Reversing these relations by rewriting them in terms of 
\begin{equation}\bsX(n,m)=\bU(n,m)\,\bsY(n,m)\,\bU^{-1}(n,m)\end{equation}
and now 
\textit{defining} the Lax matrix by  
\begin{equation}\label{eq:LKdefs}
\bL:= \bU\,\bLam\,\bU^{-1}\quad\   , 
\end{equation}
together with 
\begin{equation}\label{eq:MNdefs2}
\bM:= \wt{\bU}\,\bU^{-1}\quad,\quad \bN:=\wh{\bU}\,\bU^{-1}\   , 
\end{equation}
we recover the relations:
%
\begin{eqnarray}\label{XLK}
[ \bsX\,,\,\bL ] =-\lambda\bL+hh^T\;,
\end{eqnarray}
%
and
\begin{subequations}\label{XMN}
\begin{eqnarray}
\wt{\bsX}\,\bM-\bM\,\bsX &=&-\mu \bM+\wt hh^T\;,\\
\wh{\bsX}\,\bN-\bN\,\bsX &=&-\eta \bN+\wh hh^T\;,
\end{eqnarray}
\end{subequations}
which determine the matrices $\bM$ and $\bN$ as functions of the $x_i(n,m)$ as well as the off-diagonal 
parts of the matrices $\bL$ and $\bK$. 
\\
\\
Furthermore, from \eqref{eq:LamY2} we obtain
\begin{subequations}\label{LXMN}
\begin{eqnarray}
\wt{\bL}\wt{\bsX}\bM-\bM\bsX\bL&=&\left\{ \begin{array}{rcl}
-p\wt{h}h^T+(\lambda -\mu )\wt{\boldsymbol{L}} \,\bM\;,& &  \\ 
-p\wt{h}h^T+(\lambda -\mu )\,\bM\,\boldsymbol{L}\;, & &  
\end{array}\right.\\
\wh{\bK}\wh{\bsX}\bN-\bN\bsX\bK&=&\left\{ \begin{array}{rcl}
-q\wh{h}h^T+(\lambda -\eta  )\wh{\boldsymbol{L}} \,\bN\;& &  \\ 
-q\wh{h}h^T+(\lambda -\eta  )\,\bN\,\boldsymbol{L}\;, & &  
\end{array}\right.
\end{eqnarray}
\end{subequations}
which, when combined with the relations of \eqref{XLK}, yield
\begin{subequations}\label{LMML}
\begin{eqnarray}
\left(\wt{\bL}\bM-\bM\bL \right)\bsX+\left(\wt{\bL}\wt h-\bM h \right)h^T&=&-p\wt h h^T\;,\\
\left(\wh{\bK}\bN-\bN\bK \right)\bsX+\left(\wh{\bK}\wh h-\bN h \right)h^T&=&-q\wh h h^T\;.
\end{eqnarray}
\end{subequations}
On the other hand, using the relations \eqref{XMN} we also obtain
\begin{subequations}\label{LMML2}
\begin{eqnarray}
\wt{\bsX}\left(\wt{\bL}\bM-\bM\bL \right)+\wt{h}\left(h\bL-h^T\bM \right)&=&-p\wt h h^T\;,\\
\wh{\bsX}\left(\wh{\bK}\bN-\bN\bK \right)+\wh{h}\left(h\bK-h^T\bN \right)&=&-q\wh h h^T\;.
\end{eqnarray}
\end{subequations}
From the relations \eqref{LMML} and \eqref{LMML2} it follows that the Lax equations hold and their form is determined up to the diagonal part of the matrices $\bL$ and $\bK$.


\section{Examples}\label{example}
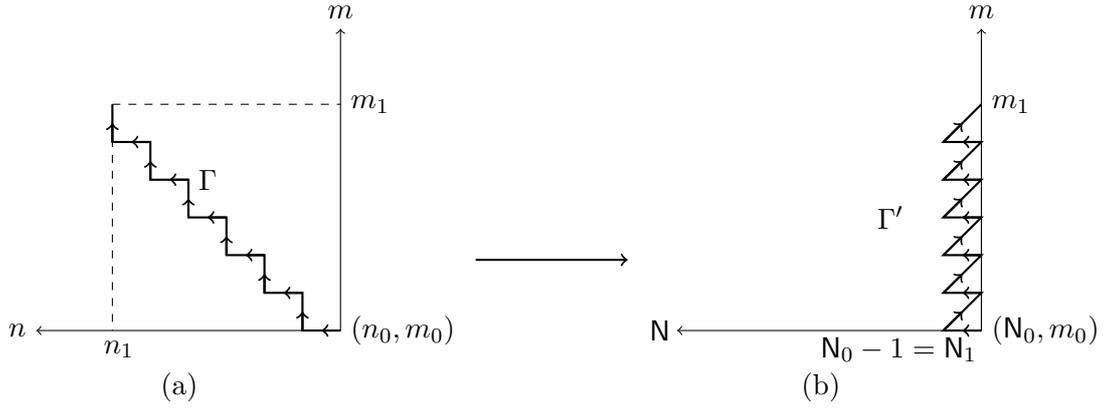
\begin{figure}[h]
\begin{center}
\begin{tikzpicture}[scale=0.5]
 \draw[->] (0,0) -- (-8,0) ;
  \draw (-9,0) node[anchor=west] {$n$};
 \draw[->] (0,0) -- (0,8) node[anchor=south] {$m$};
 \draw[thick] (0,0)  --(-1,0)--(-1,1)--(-2,1)--(-2,2)--(-3,2)--(-3,3)
--(-4,3)--(-4,4)--(-5,4)--(-5,5)--(-6,5)--(-6,6);
 \draw[->,thick] (0,0) -- (-.5,0);
 \draw[->,thick] (-1,0) -- (-1,.5);
\draw[->,thick]   (-1,1)--(-1.5,1);
\draw[->,thick]   (-2,1)--(-2,1.5);
\draw[->,thick]   (-2,2)--(-2.5,2);
\draw[->,thick]   (-3,2)--(-3,2.5);
\draw[->,thick]   (-3,3)--(-3.5,3);
\draw[->,thick]   (-4,3)--(-4,3.5);
\draw[->,thick]   (-4,4)--(-4.5,4);
\draw[->,thick]   (-5,4)--(-5,4.5);
\draw[->,thick]   (-5,5)--(-5.5,5);
\draw[->,thick]   (-6,5)--(-6,5.5);
 \draw (-5,-1.5)node[anchor=west] {(a)};
 \draw (-4,4)node[anchor=west] {$\Gamma$};
\draw (0,0)node[anchor=west] {$(n_0,m_0)$};
\draw (-6.5,-0.5)node[anchor=west] {$n_{1}$};
\draw [dashed](-6,5.5)--(-6,0);
\draw[dashed](-6,6)-- (0,6)node[anchor=west] {$m_1$};
\end{tikzpicture}
\begin{tikzpicture}[scale=0.5]
 \draw[white] (0,0) -- (2,0);
 \draw[white] (0,0) -- (0,2);
 \draw[->,thick] (0,4) -- (4,4);
\end{tikzpicture}
\begin{tikzpicture}[scale=0.5]
 \draw[->] (0,0) -- (-8,0) ;
  \draw (-9,0) node[anchor=west] {$\mathsf N$};
 \draw[->] (0,0) -- (0,8) node[anchor=south] {$m$};
 \draw[thick] (0,0) --(-1,0)--(0,1)--(-1,1)--(0,2)--(-1,2)--(0,3)--(-1,3)--(0,4)
--(-1,4)--(0,5)--(-1,5)--(0,6);
 \draw[->,thick] (0,0) -- (-.5,0);
 \draw[->,thick] (-1,0) -- (-.5,.5);
\draw[->,thick]   (0,1)--(-.5,1);
\draw[->,thick]   (-1,1)--(-.5,1.5);
\draw[->,thick]   (0,2)--(-.5,2);
\draw[->,thick]   (-1,2)--(-.5,2.5);
\draw[->,thick]   (0,3)--(-.5,3);
\draw[->,thick]   (-1,3)--(-.5,3.5);
\draw[->,thick]   (-0,4)--(-.5,4);
\draw[->,thick]   (-1,4)--(-.5,4.5);
\draw[->,thick]   (-0,5)--(-.5,5);
\draw[->,thick]   (-1,5)--(-.5,5.5);
\draw (-5,-1.5)node[anchor=west] {(b)};
 \draw (-3,3)node[anchor=west] {$\Gamma^\prime$};
\draw (0,0)node[anchor=west] {$(\mathsf{N}_0,m_0)$};
\draw (-4.5,-.5)node[anchor=west] {$\mathsf{N}_{0}-1=\mathsf N_1$};
\draw (0,6)node[anchor=west] {$m_1$};
\end{tikzpicture}
\end{center}
\caption{The effect of changing variables on the discrete curve I.}\label{curve_deformation_skew}
\end{figure}
In this Appendix, we will investigate how to derive the discrete Euler-Lagrange equation from the variational principle. For general discrete curves it is cumbersome to
implement the variational principle because of the notation it would require. We will, however, demonstrate how the principle works for a few simple cases: 
\textbf{(a)} the curve shown
 in Fig. (\ref{curve_deformation_skew}), \textbf{(b)} the curve shown in Fig. (\ref{curve_deformation_skewII}).
\\
\\
\textbf{Case(a)}: The curve shown in Fig. (\ref{curve_deformation_skew}): We now introduce a new variable $\mathsf N=n+m$
together with the change of notation
\[
\boldsymbol{x}(n,m)\mapsto\boldsymbol{x}(\mathsf N,m),\;\;\;\;\;\;\wt{\boldsymbol{x}}:=\boldsymbol{x}(\mathsf N+1,m)\;\;\;\mbox{and}\;\;\;
\wh{\boldsymbol x}:=\boldsymbol{x}(\mathsf N,m+1)\;,
\] 
and so we work with the curve given in
 Fig. (\ref{curve_deformation_skew}b). The action evaluated on this curve can be written in the form
\begin{eqnarray}\label{eq:3.14}
 S[\boldsymbol{x};\Gamma^\prime]=\sum_{m=m_0}^{m_1-1}-\mathscr{L}_{(\mathsf N)}(\boldsymbol{x}(\mathsf N_{0}-1,m),\boldsymbol{x}(\mathsf N_0,m))
+\sum_{m=m_0}^{m_1-1}\mathscr{L}_{(m)}(\boldsymbol{x}(\mathsf N_{0}-1,m),\boldsymbol{x}(\mathsf N_{0},m+1)),
\end{eqnarray}
where
\begin{eqnarray}\label{eq:3.14a}
\mathscr{L}_{(\mathsf N)}(\boldsymbol{x},\boldsymbol{y})&=&\sum_{i,j=1}^N\left( f(y_i-{x}_j)-f(y_i-{x}_j-\lambda)\right)-\frac{1}{2}\sum\limits_{\mathop {i,j = 1}\limits_{j \ne i} }^Nf(y_i-y_j+\lambda)\nn\\
&&-\frac{1}{2}\sum\limits_{\mathop {i,j = 1}\limits_{j \ne i} }^Nf(x_i-x_j+\lambda)-\ln\left|p\right|\sum_{i=1}^N(y_i -x_i) \;,\label{Lagn}\\
\mathscr{L}_{(m)}(\boldsymbol{x},\boldsymbol{y})&=&\sum_{i,j=1}^N\left( f(x_i-y_j)-f(x_i-y_j-\lambda)\right)-\frac{1}{2}\sum\limits_{\mathop {i,j = 1}\limits_{j \ne i} }^Nf(x_i-x_j+\lambda)\nn\\
&&-\frac{1}{2}\sum\limits_{\mathop {i,j = 1}\limits_{j \ne i} }^Nf(y_i-y_j+\lambda)-\ln\left|q \right|\sum_{i=1}^N(x_i -y_i)\;,\label{Lagm}
\end{eqnarray}
The minus sign in \eqref{eq:3.14} indicates the reverse direction of the Lagrangian $\mathsf{L}_{(\mathsf N)}$ along the horizontal links. 
Performing the variation $\boldsymbol{x}\mapsto\boldsymbol{x}+\delta\boldsymbol{x}$, we have
\begin{eqnarray}
&&\delta S=0=\nn\\
&&\sum_{m=m_0}^{m_1-1}\left(-\frac{\partial{\mathscr{L}_{(\mathsf N)}
(\boldsymbol{x}(\mathsf N_{0}-1,m),\boldsymbol{x}(\mathsf N_0,m))}}{\partial{\boldsymbol{x}(\mathsf N_0,m)}}\delta\boldsymbol{x}(\mathsf N_0,m)\right.\nn\\
&&\;\;\;\;\;\;\;\;\;\;\;\;\;\;\;\;\;\;\;\;\;\;\;\;\;\;\;\;\;\;\;\;\;\;\;\;\;\;\;\;\;\;\left.-\frac{\partial{\mathscr{L}_{(\mathsf N)}
(\boldsymbol{x}(\mathsf N_{0}-1,m),\boldsymbol{x}(\mathsf N_0,m))}}{\partial{\boldsymbol{x}(\mathsf N_{0}-1,m)}}\delta\boldsymbol{x}(\mathsf N_{0}-1,m)
\right)\nn\\
&&+\sum_{m=m_0}^{m_1-1}\left(\frac{\partial{\mathscr{L}_{(m)}
(\boldsymbol{x}(\mathsf N_{0}-1,m),\boldsymbol{x}(\mathsf N_{0},m+1)}}{\partial{\boldsymbol{x}(\mathsf N_0,m+1)}}\delta\boldsymbol{x}(\mathsf N_{0},m+1)\right.\nn\\
&&\;\;\;\;\;\;\;\;\;\;\;\;\;\;\;\;\;\;\;\;\;\;\;\;\;\;\;\;\;\;\;\;\;\;\;\;\;\;\left.+\frac{\partial{\mathscr{L}_{(m)}
(\boldsymbol{x}(\mathsf N_{0}-1,m),\boldsymbol{x}(\mathsf N_{0},m+1)}}{\partial{\boldsymbol{x}(\mathsf N_{0}-1,m)}}\delta\boldsymbol{x}(\mathsf N_{0}-1,m)
\right)\;.\nn\\
\end{eqnarray}
We now obtain the Euler-Lagrange equations
\begin{subequations}
\begin{eqnarray}
-\frac{\partial{\mathscr{L}_{(\mathsf N)}
(\boldsymbol{x}(\mathsf N_{0}-1,m),\boldsymbol{x}(\mathsf N_0,m))}}{\partial{\boldsymbol{x}(\mathsf N_0,m)}}
+\frac{\partial{\mathscr{L}_{(m)}
(\boldsymbol{x}(\mathsf N_{0}-1,m-1),\boldsymbol{x}(\mathsf N_{0},m)}}{\partial{\boldsymbol{x}(\mathsf N_0,m)}}=0\;,\\
-\frac{\partial{\mathscr{L}_{(\mathsf N)}
(\boldsymbol{x}(\mathsf N_{0}-1,m),\boldsymbol{x}(\mathsf N_0,m))}}{\partial{\boldsymbol{x}(\mathsf N_{0}-1,m)}}
+\frac{\partial{\mathscr{L}_{(m)}
(\boldsymbol{x}(\mathsf N_{0}-1,m),\boldsymbol{x}(\mathsf N_{0},m+1)}}{\partial{\boldsymbol{x}(\mathsf N_{0}-1,m)}}=0\;,
\end{eqnarray}
\end{subequations}
which produce 
\begin{subequations}
\begin{eqnarray}
\ln\left| \frac{p}{q}\right|&=&\sum_{j=1}^N\left(  \ln\left| \frac{\boldsymbol{x}_i(\mathsf N_0,m)-\boldsymbol{x}_j(\mathsf N_0-1,m)}{\boldsymbol{x}_i(\mathsf N_0,m)-\boldsymbol{x}_j(\mathsf N_0-1,m)+\lambda}\right|\right.\nn\\
&&\;\;\;\;\;\;\;\;\;\;\;\;\;\;\;\;\;\;\;\;\;\;\;\;\;+\left.\ln\left|\frac{\boldsymbol{x}_i(\mathsf N_0,m)-\boldsymbol{x}_j(\mathsf N_0-1,m-1)+\lambda}{\boldsymbol{x}_i(\mathsf N_0,m)-\boldsymbol{x}_j(\mathsf N_0-1,m-1)}\right|\right),\nn\\
\ln\left| \frac{p}{q}\right|&=&\sum_{j=1}^N\left( \ln\left| \frac{\boldsymbol{x}_i(\mathsf N_0-1,m)-\boldsymbol{x}_j(\mathsf N_0,m+1)}{\boldsymbol{x}_i(\mathsf N_0-1,m)-\boldsymbol{x}_j(\mathsf N_0,m+1)-\lambda}\right|\right.\nn\\
&&\;\;\;\;\;\;\;\;\;\;\;\;\;\;\;\;\;\;\;\;\;\;\;\;\;+\left.\ln\left|\frac{\boldsymbol{x}_i(\mathsf N_0-1,m)-\boldsymbol{x}_j(\mathsf N_0,m)-\lambda}{\boldsymbol{x}_i(\mathsf N_0-1,m)-\boldsymbol{x}_j(\mathsf N_0,m)}\right|\right)\;,\nn
\end{eqnarray}
\end{subequations}
which are equivalent to \eqref{CON1} and \eqref{CON2}, respectively.
\\
\\
\begin{figure}[h]
\begin{center}
\begin{tikzpicture}[scale=0.5]
 \draw[->] (0,0) -- (8,0) ;
  \draw (8,0) node[anchor=west] {$n$};
 \draw[->] (0,0) -- (0,8) node[anchor=south] {$m$};
 \draw[thick] (0,0)  --(1,0)--(1,1)--(2,1)--(2,2)--(3,2)--(3,3)
--(4,3)--(4,4)--(5,4)--(5,5)--(6,5)--(6,6);
 \draw[->,thick] (0,0) -- (.5,0);
 \draw[->,thick] (1,0) -- (1,.5);
\draw[->,thick]   (1,1)--(1.5,1);
\draw[->,thick]   (2,1)--(2,1.5);
\draw[->,thick]   (2,2)--(2.5,2);
\draw[->,thick]   (3,2)--(3,2.5);
\draw[->,thick]   (3,3)--(3.5,3);
\draw[->,thick]   (4,3)--(4,3.5);
\draw[->,thick]   (4,4)--(4.5,4);
\draw[->,thick]   (5,4)--(5,4.5);
\draw[->,thick]   (5,5)--(5.5,5);
\draw[->,thick]   (6,5)--(6,5.5);
 \draw (2.5,-1.5)node[anchor=west] {(a)};
 \draw (2.5,4)node[anchor=west] {$\Gamma$};
\draw (-2,-.5)node[anchor=west] {$(n_0,m_0)$};
\draw (5.5,-0.5)node[anchor=west] {$n_{1}$};
\draw [dashed](6,5.5)--(6,0);
\draw[dashed](6,6)-- (0,6);
\draw (-1.5,6)node[anchor=west] {$m_1$};
\end{tikzpicture}
\begin{tikzpicture}[scale=0.5]
 \draw[white] (0,0) -- (2,0);
 \draw[white] (0,0) -- (0,2);
 \draw[->,thick] (0,4) -- (4,4);
\end{tikzpicture}
\begin{tikzpicture}[scale=0.5]
 \draw[->] (0,0) -- (8,0) ;
  \draw (8,0) node[anchor=west] {$\mathsf N^\prime$};
 \draw[->] (0,0) -- (0,8) node[anchor=south] {$m$};
 \draw[thick] (0,0) --(1,0)--(0,1)--(1,1)--(0,2)--(1,2)--(0,3)--(1,3)--(0,4)
--(1,4)--(0,5)--(1,5)--(0,6);
 \draw[->,thick] (0,0) -- (.5,0);
 \draw[->,thick] (1,0) -- (.5,.5);
\draw[->,thick]   (0,1)--(.5,1);
\draw[->,thick]   (1,1)--(.5,1.5);
\draw[->,thick]   (0,2)--(.5,2);
\draw[->,thick]   (1,2)--(.5,2.5);
\draw[->,thick]   (0,3)--(.5,3);
\draw[->,thick]   (1,3)--(.5,3.5);
\draw[->,thick]   (0,4)--(.5,4);
\draw[->,thick]   (1,4)--(.5,4.5);
\draw[->,thick]   (0,5)--(.5,5);
\draw[->,thick]   (1,5)--(.5,5.5);
\draw (2.5,-1.5)node[anchor=west] {(b)};
 \draw (2,3)node[anchor=west] {$\Gamma^\prime$};
\draw (-3,-0.5)node[anchor=west] {$(\mathsf{N}^\prime_0,m_0)$};
\draw (1,-.5)node[anchor=west] {$\mathsf N^\prime_1=\mathsf{N}^\prime_{0}+1$};
\draw (-1.5,6)node[anchor=west] {$m_1$};
\end{tikzpicture}
\end{center}
\caption{The effect of changing variables on the discrete curve II.}\label{curve_deformation_skewII}
\end{figure}
\\
\textbf{Case(b)}: The curve shown in Fig. (\ref{curve_deformation_skewII}):  
Introducing the variable $\mathsf N^\prime=n-m$, the corresponding curve is given in Fig. (\ref{curve_deformation_skewII}b).
 The action evaluated on the curve $\Gamma^\prime$ reads
\begin{eqnarray}\label{eq:3.18}
 S[\boldsymbol{x};\Gamma^\prime]&=&\sum_{m=m_0}^{m_1-1}\mathscr{L}_{(\mathsf N^\prime)}(\boldsymbol{x}(\mathsf N^\prime_0,m),\boldsymbol{x}(\mathsf N^\prime_{0}+1,m))\nn\\
&&\;\;\;\;\;\;\;\;\;\;\;\;\;\;\;\;\;\;\;\;\;\;+\sum_{m=m_0}^{m_1-1}\mathscr{L}_{(m)}(\boldsymbol{x}(\mathsf N^\prime_{0}+1,m),\boldsymbol{x}(\mathsf N^\prime_{0},m+1)),
\end{eqnarray}
where
\begin{eqnarray}\label{eq:3.14ab}
\mathscr{L}_{(\mathsf N^\prime)}(\boldsymbol{x},\boldsymbol{y})&=&\sum_{i,j=1}^N\left( f(x_i-{y}_j)-f(x_i-{y}_j-\lambda)\right)-\frac{1}{2}\sum\limits_{\mathop {i,j = 1}\limits_{j \ne i} }^Nf(y_i-y_j+\lambda)\nn\\
&&-\frac{1}{2}\sum\limits_{\mathop {i,j = 1}\limits_{j \ne i} }^Nf(x_i-x_j+\lambda)-\ln\left|p\right|\sum_{i=1}^N(x_i -y_i) \;,\label{Lagn}\\
\mathscr{L}_{(m)}(\boldsymbol{x},\boldsymbol{y})&=&\sum_{i,j=1}^N\left( f(x_i-y_j)-f(x_i-y_j-\lambda)\right)-\frac{1}{2}\sum\limits_{\mathop {i,j = 1}\limits_{j \ne i} }^Nf(x_i-x_j+\lambda)\nn\\
&&-\frac{1}{2}\sum\limits_{\mathop {i,j = 1}\limits_{j \ne i} }^Nf(y_i-y_j+\lambda)-\ln\left|q\right|\sum_{i=1}^N(x_i -y_i)\;,\label{Lagm}
\end{eqnarray}
Performing the variation $\boldsymbol{x}\mapsto\boldsymbol{x}+\delta\boldsymbol{x}$, we have
\begin{eqnarray}
&&\delta S=0=\nn\\
&&\sum_{m=m_0}^{m_1-1}\left(\frac{\partial{\mathscr{L}_{(\mathsf N^\prime)}
(\boldsymbol{x}(\mathsf N^\prime_0,m),\boldsymbol{x}(\mathsf N^\prime_{0}+1,m)}}{\partial{\boldsymbol{x}(\mathsf N^\prime_0,m)}}\delta\boldsymbol{x}(\mathsf N^\prime_0,m)\right.\nn\\
&&\;\;\;\;\;\;\;\;\;\;\;\;\;\;\;\;\;\;\;\;\;\;\;\;\;\;\;\;\;\;\;\;\;\;\;\;\;\;\;\;\;\;+\left.\frac{\partial{\mathscr{L}_{(\mathsf N^\prime)}
(\boldsymbol{x}(\mathsf N^\prime_0,m),\boldsymbol{x}(\mathsf N^\prime_{0}+1,m)}}{\partial{\boldsymbol{x}(\mathsf N^\prime_{0}+1,m)}}\delta\boldsymbol{x}(\mathsf N^\prime_{0}+1,m)
\right)\nn\\
&&+\sum_{m=m_0}^{m_1-1}\left(\frac{\partial{\mathscr{L}_{(m)}
(\boldsymbol{x}(\mathsf N^\prime_{0}+1,m),\boldsymbol{x}(\mathsf N^\prime_{0},m+1)}}{\partial{\boldsymbol{x}(\mathsf N^\prime_0,m+1)}}\delta\boldsymbol{x}(\mathsf N^\prime_{0},m+1)\right.\nn\\
&&\;\;\;\;\;\;\;\;\;\;\;\;\;\;\;\;\;\;\;\;\;\;\;\;\;\;\;\;\;\;\;\;\;\;\;\;\;\;\left.+\frac{\partial{\mathscr{L}_{(m)}
(\boldsymbol{x}(\mathsf N^\prime_{0}+1,m),\boldsymbol{x}(\mathsf N^\prime_{0},m+1)}}{\partial{\boldsymbol{x}(\mathsf N^\prime_{0}+1,m)}}\delta\boldsymbol{x}(\mathsf N^\prime_{0}+1,m)
\right)\;.\nn\\
\end{eqnarray}
We now obtain the Euler-Lagrange equations
\begin{subequations}
\begin{eqnarray}
\frac{\partial{\mathscr{L}_{(\mathsf N^\prime)}
(\boldsymbol{x}(\mathsf N^\prime_0,m),\boldsymbol{x}(\mathsf N^\prime_{0}+1,m)}}{\partial{\boldsymbol{x}(\mathsf N^\prime_0,m)}}
+\frac{\partial{\mathscr{L}_{(m)}
(\boldsymbol{x}(\mathsf N^\prime_{0}+1,m-1),\boldsymbol{x}(\mathsf N^\prime_{0},m)}}{\partial{\boldsymbol{x}(\mathsf N^\prime_0,m+1)}}=0\;,\\
\frac{\partial{\mathscr{L}_{(\mathsf N^\prime)}
(\boldsymbol{x}(\mathsf N^\prime_0,m),\boldsymbol{x}(\mathsf N^\prime_{0}+1,m)}}{\partial{\boldsymbol{x}(\mathsf N^\prime_{0}+1,m)}}
+\frac{\partial{\mathscr{L}_{(m)}
(\boldsymbol{x}(\mathsf N^\prime_{0}+1,m),\boldsymbol{x}(\mathsf N^\prime_{0},m+1)}}{\partial{\boldsymbol{x}(\mathsf N^\prime_{0}+1,m)}}=0\;,
\end{eqnarray}
\end{subequations}
which produce 
\begin{subequations}
\begin{eqnarray}
\ln\left| \frac{p}{q }\right|&=&\sum_{j=1}^N\left(  \ln\left| \frac{\boldsymbol{x}_i(\mathsf N_0^\prime,m)-\boldsymbol{x}_j(\mathsf N_0^\prime-1,m)}{\boldsymbol{x}_i(\mathsf N_0^\prime,m)-\boldsymbol{x}_j(\mathsf N_0^\prime-1,m)+\lambda}\right|\right.\nn\\
&&\;\;\;\;\;\;\;\;\;\;\;\;\;\;\;\;\;\;\;\;\;\;\left.+\ln\left|\frac{\boldsymbol{x}_i(\mathsf N_0^\prime,m)-\boldsymbol{x}_j(\mathsf N_0^\prime-1,m-1)+\lambda}{\boldsymbol{x}_i(\mathsf N_0^\prime,m)-\boldsymbol{x}_j(\mathsf N_0^\prime-1,m-1)}\right|\right),\nn\\
\ln\left| \frac{p}{q }\right|&=&\sum_{j=1}^N\left( \ln\left| \frac{\boldsymbol{x}_i(\mathsf N_0^\prime-1,m)-\boldsymbol{x}_j(\mathsf N_0^\prime,m+1)}{\boldsymbol{x}_i(\mathsf N_0^\prime-1,m)-\boldsymbol{x}_j(\mathsf N_0^\prime,m+1)-\lambda}\right|\right.\nn\\
&&\;\;\;\;\;\;\;\;\;\;\;\;\;\;\;\;\;\;\;\;\;\;+\left.\ln\left|\frac{\boldsymbol{x}_i(\mathsf N_0^\prime-1,m)-\boldsymbol{x}_j(\mathsf N_0^\prime,m)-\lambda}{\boldsymbol{x}_i(\mathsf N_0^\prime-1,m)-\boldsymbol{x}_j(\mathsf N_0^\prime,m)}\right|\right)\;,\nn
\end{eqnarray}
\end{subequations}
which are equivalent to \eqref{CON1} and \eqref{CON2}, respectively.

By working with the specific type of curves given in Fig. (\ref{curve_deformation_skew}) and Fig. (\ref{curve_deformation_skewII}), we can perform the variational principle
with either the new variables $(\mathsf N,m)$ or $(\mathsf N^\prime,m)$. We obtain the Euler Lagrange equations corresponding to each link of the discrete curve and we obtain 
constraint equations \eqref{CON1} and \eqref{CON2} describing the dynamic of the system from one direction to another direction of the discrete-time (while the equations of motion \eqref{eqmotion1} and \eqref{eqmotion12} represent the dynamic of the system on one discrete-time direction).

\begin{acknowledgements}

S. Yoo-Kong is supported by King Mongkut's University of Technology Thonburi Research Grant for a new acdemic staff 2012. 
F. Nijhoff is supported by a Royal Society/Leverhulme Trust Senior Research Fellowship and would like to thanks Department of Physics, Faculty of Scicence, King Mongkut's University of Technology Thonburi for hospitality as the final state of the paper has been done. We are grateful to S. Ruijsenaars 
for useful comments. 
\end{acknowledgements}


\end{document}